\DeclareMathAlphabet{\mathpzc}{OT1}{pzc}{m}{it}
\title{Lecture Notes on Verifying Graph Neural Networks}
\author{François {\sc Schwarzentruber} \\ ENS de Lyon}
\newcommand\indexalert\emph
\newcommand{\cadregris}[2]{
	
	\begin{center}
		\colorbox{gray!20!white}{\begin{minipage}{0.9\textwidth}\textbf{#1} ~ \\[-2mm]
				
				#2\end{minipage}}\end{center}}
\newcommand{\atmset}{\ensuremath{\mathit{AP}}\xspace}
\protected\def\MSO{\ifmmode \mbox{\sc MSO} \else {\sc MSO}\xspace\fi}
\protected\def\FO{\ifmmode \mbox{\sc FO} \else {\sc FO}\xspace\fi}
\protected\def\MMSO{\ifmmode {\mbox{\sc M}\MSO} \else {{\sc M}\MSO}\xspace\fi}
\protected\def\MFO{\ifmmode \mbox{\sc MFO} \else {\sc MFO}\xspace\fi}
\newcommand{\union}{\cup}
\newcommand{\bigunion}{\bigcup}
\newcommand{\compl}[1]{{\overline{{#1}}}}
\newcommand{\inter}{\cap}
\newcommand{\card}[1]{{\emph{card}(#1)}}
\newcommand{\set}[1]{\left\{#1\right\}}
\newcommand\multiset[1]{\{\!\!\{#1\}\!\!\}}
\newcommand{\suchthat}{\mid}
\newcommand{\bigand}{\bigwedge}
\newcommand{\lbigand}{\bigand}
\newcommand{\bigor}{\bigvee}
\newcommand{\lbigor}{\bigor}
\newcommand{\true}{\top}
\newcommand{\limply}{\rightarrow}
\newcommand{\bottom}{\bot}
\newcommand{\lequiv}{\leftrightarrow}
\renewcommand{\phi}{\varphi}
\newcommand{\lbox}{\square}
\newcommand{\ldiamond}{\Diamond}
\newcommand{\logiclanguage}[1]{{\ensuremath{\mathcal{L}_{#1}}\xspace}}
\newcommand{\grammaris}{\hspace{2mm}::=\hspace{2mm}}
\newcommand{\grammarseparation}{\hspace{2mm}\mid\hspace{2mm}}
\newcommand{\grammarsep}{\grammarseparation}
\newcommand{\setN}{\mathbb{N}}
\newcommand{\ensN}{\setN}
\newcommand{\setZ}{\mathbb{Z}}
\newcommand{\ensZ}{\setZ}
\newcommand{\setR}{\mathbb{R}}
\newcommand{\ensR}{\setR}
\newcommand{\propP}{\mathcal P}
\newtheorem{theorem}{Theorem}
\newtheorem{lemma}[theorem]{Lemma}
\newtheorem{fact}[theorem]{Fact}
\newtheorem{exercise}{Exercise}
\newtheorem{remark}[theorem]{Remark}
\newtheorem{proposition}[theorem]{Proposition}
\newtheorem{property}[theorem]{Property}
\newtheorem{example}[theorem]{Example}
\newtheorem{definition}{Definition}
\newtheorem{corollary}[theorem]{Corollary}
\newcommand{\algofunction}{\textbf{function }}
\newcommand{\algoprocedure}{\textbf{procedure }}
\newcommand{\algofor}{\textbf{for }}
\newcommand{\algodo}{\textbf{do }}
\definecolor{algocommentbackgroundcolor}{rgb}{1,1,0.5}
\newcommand{\algowhile}{\textbf{while }}
\newcommand{\algoif}{\textbf{if }}
\newcommand{\algothen}{\textbf{then }}
\newcommand{\algoelse}{\textbf{else }}
\newcommand{\algomatch}{\textbf{match }}
\newcommand{\algocase}{\textbf{case }}
\newcommand{\algoreturn}{\textbf{return }}
\newcommand{\algochoose}{\textbf{choose }}
\newcommand{\algoreject}{\textbf{reject }}
\newlength{\algoindentlongueur}
\newcommand{\algoindent}{\hspace*{\algoindentlongueur}}
\newlength{\algoindentavantvrulelongueur}
\newcommand{\algoindentavantvrule}{\hspace*{\algoindentavantvrulelongueur}}
\newlength{\dummy}
\newsavebox{\frameminipageboiteavecunnomsuperlongdesortequonnepuissepaslereutiliser}
\newenvironment{frameminipage}[2][c]{%
\begin{lrbox}{\frameminipageboiteavecunnomsuperlongdesortequonnepuissepaslereutiliser}%
\begin{minipage}[#1]{#2}%
} {%
\end{minipage}%
\end{lrbox}%
\framebox{\usebox{\frameminipageboiteavecunnomsuperlongdesortequonnepuissepaslereutiliser}}%
}
\newenvironment{algo} {
	\noindent
  \begin{frameminipage}{\linewidth}
} {
  \end{frameminipage}
}
\newenvironment{algobloc}{\setlength{\dummy}{\linewidth}\addtolength{\dummy}{- \algoindentlongueur}\addtolength{\dummy}{- \algoindentavantvrulelongueur}\algoindentavantvrule\vrule\algoindent\begin{minipage}{\dummy}}{\end{minipage}}
\newenvironment{algoblocprocedure}[1]
{\algoprocedure #1 \\  \begin{algobloc}}
	{\end{algobloc}}
\newenvironment{algoblocfunction}[1]
{\algofunction #1 \\  \begin{algobloc}}
{\end{algobloc}}
\newenvironment{algoblocmainfunction}[1]
{\textbf{main} \algofunction #1 \\  \begin{algobloc}}
{\end{algobloc}}
\newenvironment{algoblocfor}[1]
{\algofor #1 \algodo \\  \begin{algobloc}}
{\end{algobloc}}
\newcommand{\indexemph}[1]{\emph{#1}\index{#1}}
\tikzstyle{zoneprogramcounter} = [fill=gray!40, draw=none]
\tikzstyle{zonedatacounter} = [fill=gray!40, draw=none, decorate, decoration={snake, amplitude=1pt, segment length=4pt}]
\tikzstyle{portion} = [densely dotted, shade, top color = white, bottom color = gray!40]
\tikzstyle{portiondata} = [densely dotted, shade, top color = white, bottom color = gray!40, decorate, decoration={snake, amplitude=1pt, segment length=4pt}]
\tikzset{
	double arrow/.style args={#1 colored by #2 and #3}{
		-stealth,line width=#1,#2, 
		postaction={draw,-triangle 90 cap,#3,line width=(#1)-4pt,
			shorten <=4pt,shorten >=4}, 
	}
}
\tikzstyle{tapearrow} = [double arrow=10pt colored by black!50!white and black!20!white, -triangle 90 cap, fill = white]%
\tikzstyle{execarrow} = [line width=1pt,->]%
\tikzstyle{world}=[inner sep=0.5mm]
\tikzstyle{event}=[fill=gray!30, inner sep=0.5mm]
\tikzstyle{realworldarrowfromleft} = [initial left, initial text={}]
\tikzstyle{realworldarrowfromright} = [initial right, initial text={}]
\protected\def\ZPP{\ifmmode \mbox{\sc ZPP} \else {\sc ZPP}\xspace\fi}
\protected\def\RP{\ifmmode \mbox{\sc RP} \else {\sc RP}\xspace\fi}
\protected\def\coRP{\ifmmode \mbox{\sc coRP} \else {\sc coRP}\xspace\fi}
\protected\def\DTIME{\ifmmode \mbox{\sc Dtime} \else {\sc Dtime}\xspace\fi}
\protected\def\NTIME{\ifmmode \mbox{\sc Ntime} \else {\sc Ntime}\xspace\fi}
\protected\def\DSPACE{\ifmmode \mbox{\sc Dspace} \else {\sc Dspace}\xspace\fi}
\protected\def\NSPACE{\ifmmode \mbox{\sc Nspace} \else {\sc Nspace}\xspace\fi}
\protected\def\NP{\ifmmode \mbox{\sc NP} \else {\sc NP}\xspace\fi}
\protected\def\coNP{\ifmmode \mbox{\sc coNP} \else {\sc coNP}\xspace\fi}
\protected\def\NPSPACE{\ifmmode \mbox{\sc NPspace} \else {\sc NPspace}\xspace\fi}
\protected\def\PSPACE{\ifmmode \mbox{\sc Pspace} \else {\sc Pspace}\xspace\fi}
\protected\def\EXPSPACE{\ifmmode \mbox{\sc Expspace} \else {\sc Expspace}\xspace\fi}
\protected\def\TWOEXPSPACE{\ifmmode \mbox{\sc 2Expspace} \else {\sc 2Expspace}\xspace\fi}
\protected\def\PTIME{\ifmmode \mbox{\sc P} \else {\sc P}\xspace\fi}
\protected\def\NPTIME{\ifmmode \mbox{\sc NP} \else {\sc NP}\xspace\fi}
\protected\def\EXPTIME{\ifmmode \mbox{\sc Exptime} \else {\sc Exptime}\xspace\fi}
\protected\def\AEXPTIME{\ifmmode \mbox{\sc Aexptime} \else {\sc Aexptime}\xspace\fi}
\protected\def\NEXPTIME{\ifmmode \mbox{\sc NExptime} \else {\sc NExptime}\xspace\fi}
\protected\def\2EXPTIME{\ifmmode \mbox{\sc 2-Exptime} \else {\sc
		2-Exptime}\xspace\fi}
\DeclareRobustCommand{\kEXPTIME}[1][k]{\ifmmode \mbox{\sc $#1$-Exptime}
	\else {\sc $#1$-Exptime}\xspace\fi}
\DeclareRobustCommand{\kNEXPTIME}[1][k]{\ifmmode \mbox{\sc $#1$-NExptime}
	\else {\sc $#1$-NExptime}\xspace\fi}
\DeclareRobustCommand{\kEXPSPACE}[1][k]{\ifmmode \mbox{\sc $#1$-Expspace}
	\else {\sc $#1$-Expspace}\xspace\fi}
\protected\def\ELEMENTARY{\ifmmode \mbox{\sc Elementary} \else {\sc Elementary}\xspace\fi}
\protected\def\AEXPpol{\ifmmode \mbox{{\sc A}_{\text{pol}}\EXPTIME} \else
	{\sc A}$_{\text{pol}}$\EXPTIME\fi}
\protected\def\APTIME{\ifmmode \mbox{\sc Aptime} \else {\sc Aptime}\xspace\fi}
\protected\def\AEXPSPACE{\ifmmode \mbox{\sc Aexpspace} \else {\sc Aexpspace}\xspace\fi}
\tikzstyle{cell} = [draw,minimum height=5mm,minimum width=5mm]
\tikzset{
	cellcolor/.cd,
	0/.style={fill=gray!20!white},
	1/.style={fill=yellow!20!white}
}
\tikzstyle{cellalive} = [fill=yellow!20!white]
\newcommand\sigmanew{\sigma_\text{new}}
\newcommand\listeventsempty\epsilon
\newcommand{\initialword}{\omega}
\definecolor{C}{rgb}{1,1,1}
\definecolor{C }{rgb}{0.8,0.8,0.8}
\definecolor{Ca}{rgb}{0.95,1,0.5}
\definecolor{Cb}{rgb}{1,0.9,0.5}
\definecolor{Cw_2}{rgb}{0.95,1,0.5}
\definecolor{Cw_n}{rgb}{1,0.9,0.5}
\definecolor{Cq'-}{rgb}{0.5,0.5,1}
\definecolor{C-q_0}{rgb}{0.5,0.5,1}
\definecolor{C-q_1}{rgb}{0.5,0.4,1}
\definecolor{C-q'}{rgb}{0.5,0.8,0.5}
\definecolor{Cq_f-}{rgb}{0.4,0.9,0.5}
\definecolor{Cq_0,w_1}{rgb}{1,0.6,0.5}
\definecolor{Cq_0,}{rgb}{1,0.6,0.4}
\definecolor{Cq_2,}{rgb}{0.8,0.9,0.4}
\definecolor{Cq_2,a}{rgb}{0.8,0.9,0.4}
\definecolor{Cq_f,a}{rgb}{0.7,0.9,0.4}
\definecolor{Cq_0,a}{rgb}{1,0.6,0.5}
\definecolor{Cq_0,b}{rgb}{1,0.5,0.5}
\definecolor{Cq_1,a}{rgb}{1,0.5,0.4}
\definecolor{Cq_1,b}{rgb}{1,0.4,0.4}
\definecolor{Cq,a}{rgb}{1,0.5,0.5}
\definecolor{Cq',a}{rgb}{1,0.5,0.5}
\definecolor{Cq',b}{rgb}{1,0.5,0.5}
\definecolor{Cq_0,1}{rgb}{0.5,0.5,0.5}
\newcommand{\trou}[1]{\ifthenelse{\boolean{student}}
	{\colorbox{gray!10!white}{\phantom{#1}}}
	{#1}
}
\newmdenv{allfour}
\newmdenv[rightline=false,topline=false,bottomline=false]{solutionbox}
\definecolor{proba}{RGB}{0, 0, 128}
\newcommand\proba[1]{
	\colorlet{savedleftcolor}{.}
	\color{proba}
	\mathbb P
	(
	\color{savedleftcolor}
	#1\color{proba}
	)
	\color{savedleftcolor}}
\newcommand{\taille}[1]{|#1|}
\newcommand\sizeof\taille
\newcommand\instancemot\initialword
\newcommand{\aGNN}{N}
\newcommand{\weightcolor}[1]{\textcolor{blue}{#1}}
\newcommand\columnvector[1]{
\left(
\begin{array}{c}
#1
\end{array}
\right)
}
\newcommand{\studentproject}[1]{
\begin{center}
\fbox{\faGraduationCap #1}
\end{center}
}
\newcommand{\modelM}{\mathcal M}
\newcommand\Ksharp{\ensuremath{\text{K}^\#}\xspace}
\begin{document}

\maketitle

%

\section*{Preface}

In these lecture notes, we first recall the connection between graph neural networks, Weisfeiler-Lehman tests and logics such as first-order logic and graded modal logic. We then present a modal logic in which counting modalities appear in linear inequalities in order to solve verification tasks on graph neural networks. We describe an algorithm for the satisfiability problem of that logic. It is inspired from the tableau method of vanilla modal logic, extended with reasoning in quantifier-free fragment Boolean algebra with Presburger arithmetic.

In order to give an idea to lecturers that may use this material, this course was taught at a master level at ENS de Lyon. The lecture was divided in four 2-hour sessions + mid-exam as follows:
\begin{itemize}
\item 29th of September 2025: definition of GNN, color refinement, FO, FO2, FOC, FOC2, FO2 is NEXPTIME-hard, correspondence between FOC2 and color refinement
\item 3th of October: discussion on the correspondence, ML, standard translation, GML, correspondence between GML and color refinement, tableau method for ML, satisfiability problem ML is in PSPACE. (soundness and completeness of the tableau method was not proven)
\item 6th of October: satisfiability problem ML is PSPACE-complete. GNN with truncReLU and \Ksharp. GNN with truncReLU and $\exists PA^*$ (only the statement).
\item 10th of October: GNN with truncReLU and $\exists PA^*$. Issue with the satisfiability procedure. QFBAPA and \Ksharp full satisfiability procedure. 

\item 13th of October: mid-exam, followed by a discussion session 
\end{itemize}

First I want to thank Pierre Nunn: we started to work on GNNs and logic in 2022.
I want to warmly thank Artem Chernobrovkin, Marco Sälzer and Nicolas Troquard for all the discussions that helped to improve these lecture notes. I also warmly thank the master students (M2 level) at ENS de Lyon for their comments during the lecture.
Enjoy!

\tableofcontents

\chapter{Graph neural networks and Weisfeiler-Lehman tests}
\fbox{
Key references: \cite{DBLP:journals/corr/abs-2003-04078}, \cite{DBLP:conf/lics/Grohe21}}

\section{Motivation}

\index{graph neural network}
Graph neural networks have been proposed as machine learning model in \cite{DBLP:journals/tnn/ScarselliGTHM0}. They are used in various applications:
\begin{itemize}
\item recommendation in social network \cite{DBLP:journals/kbs/SalamatLJ21},
\item knowledge graphs \cite{Zi22KG},
\item chemistry \cite{Reiser22GGNmaterialchemistry},
\item drug discovery \cite{XIONG20211382}
\item voice separation in music scores \cite{DBLP:conf/ijcai/KarystinaiosFW23}, etc.
\end{itemize}

They are used for classification (yes/no questions) or regression (guess a value) for a given graph or a pointed graph (a vertex in some graph)\footnote{In the literature, they may say `on graph-nodes` or `on nodes`.}. The following table gives some examples of informal questions:

\begin{center}
\begin{tabular}{l|p{55mm}|p{5cm}}
& classification & regression \\
\hline
on graphs & does the graph represent a toxic molecule? & what is the temperature of fusion of a given molecule represented by a graph?\\
on pointed graphs & do we recommend some person? & what is the price of some furniture?
\end{tabular}
\end{center}

\section{Graph neural networks}

\newcommand{\labelling}{\ell}
\newcommand{\labellinginitial}{\ell_0}
\newcommand{\numberoflayers}{L}
\newcommand{\sem}[2]{[\![#1]\!]_{#2}}
\newcommand{\semone}[1]{[\![#1]\!]}
\newcommand{\activationfunction}{\alpha}

\index{labelled graph}
\index{vertex}
\index{edge}
\index{graph}
We consider labelled directed graphs $G = (V, E, \labelling)$ where $V$ is a non-empty finite set of vertices, $E \subseteq V \times V$ is a set of edges and 
$\labelling : V \rightarrow \setR^d$ 
is a labelling function, i.e. each vertex $u$ is labelled with a vector $\labelling(u)$ containing $d$ real numbers. We denote by~$E(u)$ the set of successors of $u$. Formally, $E(u) = \set{v \in V \suchthat (u, v) \in E}$. We encode a standard \indexemph{Kripke structure} by a labelled graph $G = (V, E, \labellinginitial)$ with $\labellinginitial : V \rightarrow \set{0, 1}^d$.

\index{graph neural network}
\index{weight}
A GNN $\aGNN$ is usually defined as a tuple of parameters (see \cite{DBLP:conf/iclr/BarceloKM0RS20}, \cite{DBLP:conf/ijcai/NunnSST24}, \cite{DBLP:conf/icalp/BenediktLMT24}). To make it more concrete, we present it as an algorithm (parametrized by the weights computed during some learning process), see \Cref{figure:GNN_algorithm}. It takes as an input a pointed graph made up of a labelled graph $G = (V, E, \labellinginitial)$ and a vertex $u$. It outputs yes/no.

\begin{figure}[h]
\begin{center}
\begin{minipage}{80mm}
\begin{algo}
\begin{algoblocmainfunction}{$\aGNN((V, E, \labellinginitial), u)$}

$\labelling_1 := layer_1(V,E, \labelling_0)$

$\labelling_2 := layer_2(V,E, \labelling_1)$

$\vdots$

$\labelling_\numberoflayers := layer_\numberoflayers(V,E, \labelling_{\numberoflayers-1})$

\algoreturn yes \algoif $\weightcolor{w}^t \labelling_\numberoflayers(u) + \weightcolor{b} \geq 0$ \algoelse no

\end{algoblocmainfunction}
\end{algo}
\end{minipage}
\end{center}

\begin{center}
\begin{minipage}{12cm}
\begin{algo}
\begin{algoblocfunction}{$layer_i(V,E, \labelling)$ }

$\labelling'$ := new labelling $V \rightarrow \setR^d$

\begin{algoblocfor}{vertices $u \in V$}

$\labelling'[u] := \vec \activationfunction(\weightcolor{A_i} \times \labelling[u] + \weightcolor{B_i} \times \sum \multiset{\labelling[v] \suchthat v \in E(u)} + \weightcolor{b_i})$

\end{algoblocfor}

\algoreturn $\labelling'$

\end{algoblocfunction}
\end{algo}
\end{minipage}
\end{center}

\caption{A graph neural network $\aGNN$ presented as an algorithm.\label{figure:GNN_algorithm} The main function is $\aGNN$. It computes a sequence of labellings $\labelling_1, \labelling_2, \dots, \labelling_\numberoflayers$ via functions $layer_i$. Learnt weights are $\weightcolor{w}, \weightcolor{b}, \weightcolor{A_i}, \weightcolor{B_i}, \weightcolor{b_i}$.}
\end{figure}

\begin{definition}[graph neural network]
A GNN is an algorithm as shown in \Cref{figure:GNN_algorithm}.
\end{definition}

A GNN $\aGNN$ computes a sequence of labellings $\labelling_1, \labelling_2, \dots, \labelling_\numberoflayers$ via the application of so-called layers. For avoiding cumbersome notations, we suppose that all these labellings assign a vector of dimension $d$ to each vertex (while in generality the dimension $d$ may be different for each layer). 
In each layer $layer_i$, the function $\vec \activationfunction : \setR^d \rightarrow \setR^d$ is the point-wise application of an activation 
 function $\activationfunction : \setR \rightarrow \setR$. The \indexemph{activation function} $\activationfunction$ could be for instance $ReLU : x \mapsto \max(0, x)$
 or $truncReLU : x \mapsto \max(0, \min(x, 1))$.
 \index{truncated ReLU}
 \index{ReLU}
 Then $A_i \in \setR^{d \times d}$ and $B_i \in \setR^{d \times d}$ are matrices of weights, $\times$ is the standard matrix-vector multiplication, $b_i \in \setR^d$ is a bias vector,  $\multiset{.}$ is the multiset notation, and $\sum$ is the summation operation of all vectors in the multiset.

  The ending linear inequality $w^t \labelling_\numberoflayers(u) + b \geq 0$ where $w \in \ensR^d$ and $b \in \setR$ are weights is used to classify the vertex $u$.
In the sequel, the set of pointed labelled graphs positively classified by a GNN $\aGNN$ is denoted by
$\semone{\aGNN} := \set{(G, u) \suchthat \aGNN(G, u) \text{ returns } \text{yes}}.$

\newcommand{\aGNNoutlayer}[1]{N^{(#1)}}
\newcommand{\timestep}{t}

We can also present a GNN $\aGNN$ that computes a sequence $\aGNNoutlayer 0, \aGNNoutlayer 1, \dots$ of labellings:

\begin{itemize}
\item $\aGNNoutlayer{0}(G, u) = \labellinginitial(u)$;
\item $\aGNNoutlayer{t}(G,u) = \vec \activationfunction(\weightcolor{A_t} \times \aGNNoutlayer{t-1}(G,u) + \weightcolor{B_\timestep} \times \sum \multiset{\aGNNoutlayer{t-1}(G,v) \suchthat v \in E(u)} + \weightcolor{b_\timestep})$ for all $u \in V$.
\end{itemize}

\section{GNN on graphs}

For graph classification, the last operation could be:

\begin{center}
\algoreturn yes \algoif $\weightcolor{w}\sum_{u\in V} \labelling_{\numberoflayers}(u) + \weightcolor{b} \geq 0$ \algoelse no
\end{center}

\index{global readout}
The sum is taken over \emph{all} vertices $u$ in the graph. This kind of operation is \emph{global readout}. We could also have this operation at each layer:

\begin{align*}
\aGNNoutlayer{t}(G,u) = & \vec \activationfunction(\weightcolor{A_t} \times \aGNNoutlayer{t-1}(G,u) \\
 & + \weightcolor{B_\timestep} \times \sum \multiset{\aGNNoutlayer{t-1}(G,v) \suchthat v \in E(u)} \\
 & + \weightcolor{C_\timestep} \times \sum \multiset{\aGNNoutlayer{t-1}(G,v) \suchthat v \in V} \\
 & + \weightcolor{b_\timestep}).
\end{align*}

In order to keep the course simple, we postpone the discussion on global readout in Chapter~\ref{chapter:globalreadout}.

\section{Graph isomorphism}

\index{isomorphism}
Two graphs are isomorphic if they are the same, up to vertex renaming. Here is a formal definition.

\begin{example}Here are two graphs that isomorphic:
\begin{center}
\begin{tikzpicture}[scale=1, every node/.style={circle, draw, fill=blue!20}]
  \node (a) at (0,0) {A};
  \node (b) at (2,0) {B};
  \node (c) at (1,1.5) {C};
  \node (d) at (1,-1.5) {D};
  \node (e) at (-1,1) {E};

  \draw (a) -- (b);
  \draw (a) -- (c);
  \draw (a) -- (d);
  \draw (a) -- (e);
  \draw (c) -- (e);
  \draw (d) -- (b);

  \node (1) at (5,0) {1};
  \node (2) at (7,0) {2};
  \node (3) at (6,-1.5) {3};
  \node (4) at (6,1.5) {4};
  \node (5) at (8,1) {5};

  \draw (1) -- (2);
  \draw (1) -- (3);
  \draw (1) -- (4);
  \draw (1) -- (5);
  \draw (4) -- (5);
  \draw (3) -- (2);
\end{tikzpicture}
\end{center}
\end{example}

\newcommand{\isomorphism}{\pi}
\begin{definition}
	Two labelled graphs $G = (V, E, \labelling)$ and $G' = (V', E', \labelling')$ are \emph{isomorphic} if there exists a bijection $\isomorphism : V \rightarrow V'$ such that:
	\begin{enumerate}
	\item for all vertices $u \in V$, $\labelling[u] = \labelling[\isomorphism(u)]$
	\item for all vertices $u, v \in V$, $uEv$ iff $\isomorphism(u) E' \isomorphism(v)$.
	\end{enumerate}
\end{definition}

\index{NP}
Graph isomorphism is in NP, but likely to be NP-complete, because then the polynomial hierarchy would collapse \cite{DBLP:journals/jcss/Schoning88}.
Testing graph isomorphisms can be done in quasipolynomial time \cite{DBLP:conf/stoc/Babai16}: more precisely in time $exp((\log n)^{O(1)})$ where $n$ is the number of vertices.
Weisfeiler-Lehman tests are procedures running in poly-time\footnote{It is possible to implement it in $O((|V| + |E|) \log |V|)$ \cite{DBLP:journals/tcs/CardonC82}.} which do ``almost" solve graph isomorphism.

\section{1-WL aka Colour refinement}

\newcommand{\algocolorrefinement}{\textsf{cr}\xspace}
\newcommand{\algocolorrefinementattime}[1]{\textsf{cr}^{(#1)}}
\newcommand{\algocolorrefine}{\textsf{refine}\xspace}
\newcommand{\crtime}{t}

\index{colour refinement}
\index{Weisfeiler-Lehman}
\subsection{Description}

There are different presentations of 1-WL (1-Weisfeiler-Lehman) aka color refinement aka naive vertex refinement in the literature \cite{morgan1965generation}. Informally, it works as follows on a graph $G = (V,E, \labellinginitial)$.
\begin{itemize}
\item Initially, colour each vertex $v$ with the colour $\labellinginitial(v)$.
\item Iteratively, recolour each vertex based on its current colour and the multiset of colours of its neighbours.
\item Repeat until the colouring stabilizes.
\end{itemize}

Formally, we define a sequence $(\algocolorrefinementattime{0}(G), \algocolorrefinementattime{1}(G), \dots)$ of labellings as follows:

\begin{itemize}
\item $\algocolorrefinementattime{0}(G, u) = \labellinginitial(u)$ for all $u \in V$;
\item  $\algocolorrefinementattime{\crtime+1}(G, u) = \left(\algocolorrefinementattime{\crtime}(G, u), \multiset{\algocolorrefinementattime{\crtime}(G, v) \suchthat u E v}\right)$.
\end{itemize}

\newcommand{\finerthan}{\sqsubseteq}
We wrote $\algocolorrefinementattime{0}(G, u)$ instead of the cumbersome $\algocolorrefinementattime{0}(G)(u)$. In each round, the labelling gets finer: $\algocolorrefinementattime{\crtime+1}(G)$ is finer than $\algocolorrefinementattime{\crtime}(G)$. We use the notation $\finerthan$ to say `finer than`:

$$ \dots \finerthan \algocolorrefinementattime{2}(G) \finerthan 
\algocolorrefinementattime{1}(G) \finerthan 
\algocolorrefinementattime{0}(G)$$

  For some $\crtime_G < |V|$, we have that $\algocolorrefinementattime{\crtime+1}(G)$ and $\algocolorrefinementattime{\crtime}(G)$ are equivalent in the following sense.

\begin{figure}
\begin{center}
\begin{minipage}{75mm}
\begin{algo}
\begin{algoblocmainfunction}{$\algocolorrefinement(V, E, \labellinginitial)$}

$\crtime := 0$

\textbf{repeat}

\begin{algobloc}

$\labelling_{\crtime+1} := \algocolorrefine(V,E, \labelling_{\crtime})$

$\crtime := \crtime+1$

\end{algobloc}

\textbf{until} $\labelling_{\crtime+1}$ and $\labelling_{\crtime}$ are equivalent

\algoreturn $\labelling_{i}$

\end{algoblocmainfunction}
\end{algo}
\end{minipage}
\hfill
\begin{minipage}{80mm}
\begin{algo}
\begin{algoblocfunction}{$\algocolorrefine(V,E, \labelling)$ }

$\labelling'$ := new labelling 

\begin{algoblocfor}{vertices $u \in V$}

$\labelling'[u] := (\labelling[u], \multiset{\labelling[v] \suchthat v \in E(u)})$

\end{algoblocfor}

\algoreturn $\labelling'$

\end{algoblocfunction}
\end{algo}
\end{minipage}
\end{center}

\caption{Colour refinement algorithm.\label{algo:colorrefinement}}
\end{figure}

\begin{definition}
Two labellings $\labelling$ and $\labelling'$ are equivalent if 

\hfill for all vertices $u, v \in V$, $\labelling[u] = \labelling[v]$ iff $\labelling'[u] = \labelling'[v]$.
\end{definition}

\index{partition}
Said differently, $\algocolorrefinementattime{\crtime_G+1}(G)$ and $\algocolorrefinementattime{\crtime_G}(G)$ induce the same partition. This is used as a stopping condition in colour refinement. We write $\algocolorrefinement(G)$ instead of $\algocolorrefinementattime{\crtime_G+1}(G)$. This is the output of colour refinement. 
\Cref{algo:colorrefinement} gives the pseudo-code for colour refinement.

\subsection{Example}

\begin{exercise}(from \cite{DBLP:conf/lics/Grohe21})
Compute $\algocolorrefinement(G)$ and $\algocolorrefinement(G')$ for the graphs $G$ and $G'$ below:

\begin{center}
\begin{tikzpicture}[scale=1.2, every node/.style={circle, draw, fill=blue!30, inner sep=1.5pt}]
    \node (a) at (0,1) {};
    \node (b) at (-1,0.5) {};
    \node (c) at (-1,-0.5) {};
    \node (d) at (0,-1) {};
    \node (e) at (1,-0.5) {};
    \node (f) at (1,0.5) {};

    \draw (a)--(b)--(c)--(d)--(e)--(f)--(a);
    \draw (b)--(e);

    \node (g) at (4,1) {};
    \node (h) at (3,0.5) {};
    \node (i) at (5,0.5) {};
    \node (j) at (3,-0.5) {};
    \node (k) at (5,-0.5) {};
    \node (l) at (4,-1) {};

    \draw (g)--(h)--(i)--(g);
    \draw (j)--(k)--(l)--(j);
    \draw (h)--(j);
    \draw (h)--(k);

\end{tikzpicture}
\end{center}

\end{exercise}

\newcommand{\histogramalgorefinement}[1]{\multiset{\algocolorrefinement(#1)}}

\subsection{Implementation}

It is possible to compute the final partition corresponding to $\algocolorrefinement(G)$ in $O((|V| + |E|) \log |V|)$.

\studentproject{Read \cite{DBLP:journals/tcs/CardonC82}, also \cite{DBLP:journals/siamcomp/PaigeT87}.}

 The interested reader may also look at \cite{DBLP:journals/mst/BerkholzBG17} for tight complexity.

\subsection{Indistinguishability}

$G, G'$ are \algocolorrefinement-indistinguishable if $\algocolorrefinement(G)$ and  $\algocolorrefinement(G')$ have the \emph{same histogram of colors}: the number of vertices of a given color are the same via $\algocolorrefinement(G)$ and  $\algocolorrefinement(G')$. More formally, we define:

$$\histogramalgorefinement{G} := \multiset{\algocolorrefinement(G,u) \suchthat u \in V}$$

\begin{definition}
$G, G'$ are \algocolorrefinement-indistinguishable if $\histogramalgorefinement G = \histogramalgorefinement {G'}$.
\end{definition}

\begin{definition}
$G,u$ and  $G', u'$ are \algocolorrefinement-indistinguishable if $\algocolorrefinement( G,u) = \algocolorrefinement(G',u')$.
\end{definition}

\subsection{Link with isomorphism}

The following proposition says that the output of $\algocolorrefinement$ is the same for isomorphic graphs. We say that $\algocolorrefinement$ is an \emph{equivariant}.

\index{isomorphism}
\begin{proposition}
If $G$ and $G'$ are isomorphic \textbf{then} $G, G'$ are  \algocolorrefinement-indistinguable.
\end{proposition}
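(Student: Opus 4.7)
The plan is to prove by induction on the refinement step $t$ the pointwise identity $\algocolorrefinementattime{t}(G, u) = \algocolorrefinementattime{t}(G', \isomorphism(u))$ for every $u \in V$, where $\isomorphism : V \to V'$ is a witnessing isomorphism. The base case $t = 0$ is exactly condition~(1) of the isomorphism definition, since $\algocolorrefinementattime{0}$ returns the graph's own initial labelling at each vertex.

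For the inductive step, I would unfold the definition
\[ \algocolorrefinementattime{t+1}(G, u) = \left(\algocolorrefinementattime{t}(G, u),\ \multiset{\algocolorrefinementattime{t}(G, v) \suchthat v \in E(u)}\right). \]
The first component is handled directly by the inductive hypothesis. For the second, condition~(2) of isomorphism ensures that $\isomorphism$ restricts to a bijection between $E(u)$ and $E'(\isomorphism(u))$, so applying the inductive hypothesis to each neighbour shows that the two multisets coincide. This yields the statement at step $t+1$.

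The one slightly subtle point is tying this per-step identity to the definition $\algocolorrefinement(G) = \algocolorrefinementattime{\crtime_G+1}(G)$, which uses the first stabilization time $\crtime_G$. The identity above transports stabilization: $\algocolorrefinementattime{\crtime}(G)$ and $\algocolorrefinementattime{\crtime+1}(G)$ induce the same partition on $V$ iff the analogous statement holds for $G'$ on $V'$, since $\isomorphism$ carries one partition to the other. Hence $\crtime_G = \crtime_{G'}$, and the identity extends to $\algocolorrefinement(G, u) = \algocolorrefinement(G', \isomorphism(u))$ for every $u \in V$.

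From here the histogram equality is a bijection rewrite:
\[ \histogramalgorefinement{G} = \multiset{\algocolorrefinement(G, u) \suchthat u \in V} = \multiset{\algocolorrefinement(G', \isomorphism(u)) \suchthat u \in V} = \histogramalgorefinement{G'}. \]
I do not anticipate any real obstacle; the argument is essentially the observation that colour refinement reads only vertex labels and the local multiset of neighbour colours, both of which an isomorphism preserves. The only place needing minor bookkeeping is the match of stabilization times so that "$\algocolorrefinement(G)$" on both sides refers to compatible iterates.
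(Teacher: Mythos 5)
Your proof is correct and follows essentially the same route as the paper's: an induction on the round $t$ establishing $\algocolorrefinementattime{t}(G,u) = \algocolorrefinementattime{t}(G',\isomorphism(u))$, with the multiset of neighbour colours transported by the bijection $\isomorphism$ restricted to $E(u)$, and then the observation that the stabilization time and the colour histogram coincide. Your treatment of why $\crtime_G = \crtime_{G'}$ is in fact slightly more explicit than the paper's, which simply asserts that the stopping condition is reached at the same $t$.
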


\begin{proof}
Suppose that $G$ and $G'$ are isomorphic.
 Let $\labelling_0, \labelling_1, \dots$ be the labellings of the execution of $\algocolorrefinement(G)$.
 Let $\labelling'_0, \labelling'_1, \dots$ be the labellings of the execution of $\algocolorrefinement(G')$.
 
 Let $\isomorphism$ an isomorphism from $G$ into $G'$. 
 Given $t \in \setN$, we consider the property $\propP(t)$: 
 \begin{center}
 for all $u \in V$, 
 $\algocolorrefinementattime{t}(G,u) = \algocolorrefinementattime{t}(G,\isomorphism(u)).$ 
 \end{center} 
 
 For $t = 0$, $\propP(0)$ holds by definition of an isomorphism.
 
Suppose $\propP(t)$ for some $t$. The computation is: 

\begin{align*}
\algocolorrefinementattime{t+1}(G,u) & := (\algocolorrefinementattime{t}(G,u), \multiset{\algocolorrefinementattime{t}(G,v) \mid v \in E(u)}) \\
& = (\algocolorrefinementattime{t}(G',\isomorphism(u)), \multiset{\algocolorrefinementattime{t}(G',\isomorphism(v)) \mid v \in E(u)})  \\
& = (\algocolorrefinementattime{t}(G',\isomorphism(u)), \multiset{\algocolorrefinementattime{t}(G',\isomorphism(v)) \mid \isomorphism(v) \in E'(\isomorphism(u))})  \\
& = (\algocolorrefinementattime{t}(G',\isomorphism(u)), \multiset{\algocolorrefinementattime{t}(G',v') \mid v' \in E'(\isomorphism(u))})  \\
& = \algocolorrefinementattime{t+1}(G',\isomorphism(u))
\end{align*}

Hence $\propP(t+1)$.

In particular:
\begin{itemize}
\item 
The condition of the repeat until loop is thus obtained for the same $t$ in $\algocolorrefinement(G)$ and $\algocolorrefinement(G')$.
\item $\histogramalgorefinement{\labelling_t} = \histogramalgorefinement{\labelling'_t}$.
\end{itemize}

So $\histogramalgorefinement G = \histogramalgorefinement {G'}$.
\end{proof}

The algorithm \algocolorrefinement does not characterize isomorphism, as shown in \Cref{figure:1wlcounterexample}.

\begin{proposition}\cite{immerman1990describing}
Let $G$ and $G'$ be two trees. 

$G, G'$ are isomorphic iff $G, G'$ are \algocolorrefinement-indistinguable.
\end{proposition}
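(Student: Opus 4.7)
The forward direction is immediate from the previous proposition, which asserts $\algocolorrefinement$-indistinguishability of isomorphic graphs, specialised to trees.

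For the converse, my plan is to prove a pointed strengthening and then reduce to it. The stronger claim is: if $(T, u)$ and $(T', u')$ are pointed trees with $\algocolorrefinement(T, u) = \algocolorrefinement(T', u')$, then there is an isomorphism $\isomorphism : T \to T'$ with $\isomorphism(u) = u'$. Given this, the unrooted statement follows at once: since $\histogramalgorefinement G = \histogramalgorefinement {G'}$, any vertex of $G$ admits a vertex of $G'$ with the same CR colour, so the pointed claim supplies an isomorphism $G \to G'$.

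The pointed claim would be proved by strong induction on the total number of vertices. The base case of two singletons is trivial since $\algocolorrefinement$ then just returns the initial label, which determines the tree. For the inductive step, regard $T$ as rooted at $u$ with children $v_1, \dots, v_k$ and $T'$ as rooted at $u'$ with children $v'_1, \dots, v'_{k'}$. Stability of CR encodes the multiset $\multiset{\algocolorrefinement(T, v_i) : 1 \le i \le k}$ inside $\algocolorrefinement(T, u)$, so equality of the two root colours forces $k = k'$ together with a bijection $\sigma$ such that $\algocolorrefinement(T, v_i) = \algocolorrefinement(T', v'_{\sigma(i)})$ for every $i$. Setting $\isomorphism(u) := u'$ and gluing branchwise isomorphisms furnished by the inductive hypothesis then produces the required isomorphism.

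The main obstacle, and where the tree hypothesis is essential, is that the inductive hypothesis demands CR-equality \emph{inside} the branch subtrees $T_{v_i}$ (the component of $T \setminus \{u\}$ containing $v_i$) and $T'_{v'_{\sigma(i)}}$, whereas what I have in hand is CR-equality computed inside the ambient trees. To bridge this gap I would run an auxiliary induction on the refinement round $t$, showing that for any edge $(u, v)$ in a tree the value $\algocolorrefinementattime{t}(T, v)$ together with the parent's CR colour at round $t{-}1$ determines $\algocolorrefinementattime{t}(T_v, v)$: at each round one simply peels one occurrence of the parent colour off the neighbour multiset at $v$, a well-defined operation precisely because removing $u$ disconnects $T$ into disjoint pieces (a property that fails for general graphs). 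With this peeling step in hand, the CR colours of the matched children agree inside their branch subtrees, and the induction closes.
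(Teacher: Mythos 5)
First, a remark on the comparison you asked for: the paper states this proposition with a citation to Immerman and gives no proof at all, so there is nothing to measure your argument against; I am assessing it on its own terms.

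Your overall architecture is the standard and correct one: the forward direction follows from equivariance, the reduction of the unrooted statement to a pointed statement via equal histograms is fine, and the induction on size that matches children through the multiset encoded in the root's stable colour is the right skeleton. You also correctly locate the crux. The genuine gap is precisely in the ``peeling'' lemma you propose to resolve it. Removing one occurrence of $\algocolorrefinementattime{t-1}(T,u)$ from the neighbour multiset inside $\algocolorrefinementattime{t}(T,v)$ yields the pair $\bigl(\algocolorrefinementattime{t-1}(T,v),\, \multiset{\algocolorrefinementattime{t-1}(T,w) \suchthat w \text{ child of } v}\bigr)$, in which every colour is still computed \emph{in the ambient tree} $T$; what you need is $\bigl(\algocolorrefinementattime{t-1}(T_v,v),\, \multiset{\algocolorrefinementattime{t-1}(T_v,w) \suchthat w \text{ child of } v}\bigr)$, with colours computed in the branch $T_v$. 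Since edges are undirected, the presence of the parent $u$ contaminates the colour of \emph{every} descendant of $v$: the round-$t$ colour of a descendant at depth $k$ below $v$ depends on the round-$(t-k-1)$ colour of $u$. So a single peel at $v$ does not suffice, and, worse, the inductive invariant you state (involving only $v$ and its parent) does not close: to recover $\algocolorrefinementattime{t-1}(T_v,w)$ for a child $w$ you need a statement about a vertex relative to an edge deleted \emph{two} levels above it, which is not an instance of your hypothesis.

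The lemma you want is true, but it must be strengthened before the induction goes through, e.g.\ to: for every tree $T$, every edge $(u,v)$ and every vertex $x$ of the branch $T_v$ at distance $d$ from $v$, the colour $\algocolorrefinementattime{t}(T_v,x)$ is determined by $\algocolorrefinementattime{t}(T,x)$, by $d$, and by $\algocolorrefinementattime{t-d-1}(T,u)$; one then proves this by induction on $t$, using that in a tree exactly one neighbour of $x$ lies on the path towards the deleted edge and arguing that the corresponding element of the neighbour multiset can be consistently identified and corrected. This back-propagation of the parent's influence through an undirected tree is the technical heart of the theorem (and presumably why the paper delegates it to Immerman); as written, your key step asserts it rather than proves it, so the proof does not yet go through.
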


\subsection{Analysis of failure}

\begin{proposition}
If two graphs with $n$ vertices with the same features are $d$-regular\footnote{A graph is $d$-regular if all vertices have the same degrees $d$.} then they are $\algocolorrefinement$-indistinguishable.
\end{proposition}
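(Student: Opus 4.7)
The plan is to show by induction on the iteration index $t$ that, in both graphs, every vertex carries the exact same color, and moreover that this common color is identical across the two graphs. Once this is established, the histograms produced by \algocolorrefinement on $G$ and $G'$ are both just $n$ copies of the same stabilized color, so the graphs are \algocolorrefinement-indistinguishable.

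Concretely, let $c_0 \in \setR^d$ denote the shared initial feature of all vertices in $G$ and $G'$. I would prove by induction on $t$ the statement $\propP(t)$: there exists a color $c_t$ such that $\algocolorrefinementattime{t}(G,u) = c_t$ for every $u \in V$ and $\algocolorrefinementattime{t}(G',u') = c_t$ for every $u' \in V'$. The base case $t=0$ holds by assumption on the features. For the inductive step, using $d$-regularity, each vertex $u$ (in either graph) has exactly $d$ successors, each of which by the induction hypothesis carries color $c_t$. Hence
\[
\algocolorrefinementattime{t+1}(G,u) = \bigl(c_t,\, \multiset{\underbrace{c_t, \dots, c_t}_{d \text{ times}}}\bigr),
\]
and the same expression arises for every vertex of $G'$. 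Setting $c_{t+1}$ equal to this common pair gives $\propP(t+1)$.

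From $\propP(t)$ for all $t$, both executions of \algocolorrefinement stabilize (both color sequences become constant in $t$ simultaneously, in fact after one step), and the stopping condition — equivalence of $\labelling_{\crtime+1}$ and $\labelling_{\crtime}$ — is reached at the same index in $G$ and in $G'$, yielding a final color $c^\star$ that is shared by every vertex of both graphs. Therefore
\[
\histogramalgorefinement{G} = \multiset{\underbrace{c^\star, \dots, c^\star}_{n \text{ times}}} = \histogramalgorefinement{G'},
\]
which by definition means $G$ and $G'$ are \algocolorrefinement-indistinguishable.

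There is essentially no main obstacle: the proof is a straightforward induction whose heart is the observation that $d$-regularity plus uniform starting color is preserved under the refinement step, so color refinement can never split the single color class. The only mild subtlety worth being explicit about is that the two graphs stabilize at the same iteration — but this follows immediately from the fact that the sequence $(c_t)$ is the same on both sides.
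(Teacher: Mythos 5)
Your proof is correct. The paper states this proposition without providing any proof, so there is nothing to compare against; your induction (a single colour class at round $0$ is preserved by the refinement step thanks to $d$-regularity, hence both graphs stabilise immediately with $n$ copies of the same colour) is exactly the expected argument, and your handling of the stopping condition via equivalence of partitions rather than literal equality of colours is the right level of care.
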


\begin{example}
\Cref{figure:1wlcounterexample} shows two non isomorphic 3-regular graphs with both 20 vertices. They are not isomorphic because decaprismane has 4-cycles while dodecahedrane does not.
\end{example}

Interestingly the probability that \algocolorrefinement fails on two graphs with $n$ vertices taken uniformly randomly goes to 0 when $n$ goes to $+\infty$.

\index{probability}
\begin{theorem} [\cite{DBLP:journals/siamcomp/BabaiES80}, \cite{DBLP:conf/fct/ArvindKRV15}]
Let $G_n, G'_n$ be two independent uniformly random graphs with $n$ vertices. We have:
$$\proba{\text{$G_n, G'_n$ are $\algocolorrefinement$-indistinguable} \mid \text{$G_n, G'_n$ are isomorphism}} 
\xrightarrow[n \rightarrow +\infty]{} 1.$$
\end{theorem}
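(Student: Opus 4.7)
The approach is to recognise that this statement is a one-line corollary of the immediately preceding Proposition, so the whole proof reduces to a deterministic containment of events in the probability space, with no genuine probabilistic content in the direction as worded. The plan is to (i) read off from the previous Proposition that isomorphism of $G_n$ and $G'_n$ implies $\algocolorrefinement$-indistinguishability of $G_n$ and $G'_n$, pointwise on the sample space; (ii) lift that pointwise implication to the event inclusion $\{G_n, G'_n \text{ isomorphic}\} \subseteq \{G_n, G'_n \text{ are } \algocolorrefinement\text{-indistinguishable}\}$; and (iii) conclude that for every $n \geq 1$ the conditional probability equals $1$ exactly, so a fortiori the limit equals $1$.

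Carrying this out, I would first fix $n$ and check that the conditioning event has positive probability: for two independent uniformly random labelled graphs on $n$ vertices we have $\proba{G_n = G'_n} = |\mathcal{G}_n|^{-1} > 0$ and $\{G_n = G'_n\} \subseteq \{G_n, G'_n \text{ isomorphic}\}$, so the conditional probability is well-defined. Then I would invoke the previous Proposition on every pair $(G, G')$ in the event $\{G_n, G'_n \text{ isomorphic}\}$ to obtain the event inclusion above, from which $\proba{G_n, G'_n \text{ indistinguishable} \mid G_n, G'_n \text{ isomorphic}} = 1$ for every $n \geq 1$. Taking $n \to \infty$ gives the stated limit.

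There is no genuine obstacle to overcome: the theorem as worded contains no probabilistic content beyond well-definedness of the conditional. I would however flag to the reader that the cited references \cite{DBLP:journals/siamcomp/BabaiES80,DBLP:conf/fct/ArvindKRV15} establish the converse and substantially harder fact — that for two independent uniform random graphs, $\algocolorrefinement$-indistinguishability implies isomorphism with probability tending to $1$ — which does require a real probabilistic argument, typically via a discretisation lemma showing that after $O(1)$ rounds almost every $n$-vertex random graph already has pairwise distinct refined colours. Only that converse direction uses the randomness of $G_n, G'_n$; the direction stated here is a purely logical consequence of equivariance of $\algocolorrefinement$.
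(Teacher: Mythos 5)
The paper gives no proof of this theorem at all --- it is immediately followed by a reading project asking the student to consult \cite{DBLP:journals/siamcomp/BabaiES80} and \cite{DBLP:conf/fct/ArvindKRV15} and supply one --- so there is no argument of the paper's to compare yours against. For the statement as literally worded, your proof is correct and complete: the earlier proposition (``if $G$ and $G'$ are isomorphic then they are $\algocolorrefinement$-indistinguishable'') gives the pointwise implication, hence the event inclusion $\{G_n \cong G'_n\} \subseteq \{G_n, G'_n \text{ are } \algocolorrefinement\text{-indistinguishable}\}$, and your check that the conditioning event has positive probability (via $\proba{G_n = G'_n} > 0$) makes the conditional probability well-defined and equal to $1$ for every $n$, not just in the limit.

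Your diagnosis of the statement itself is the genuinely valuable part and is worth stating explicitly as an erratum. The surrounding prose (``the probability that $\algocolorrefinement$ fails \ldots goes to $0$'') and the cited references make clear that the intended theorem is the reverse conditional, $\proba{G_n \cong G'_n \mid G_n, G'_n \text{ are } \algocolorrefinement\text{-indistinguishable}} \to 1$, equivalently that $\proba{\text{indistinguishable and not isomorphic}} = o\left(\proba{\text{indistinguishable}}\right)$. That statement does carry probabilistic content, and the standard route is the one you sketch: Babai--Erd\H{o}s--Selkow show that all but a vanishing (indeed exponentially small) fraction of $n$-vertex graphs reach a discrete colouring after $O(1)$ refinement rounds, so colour refinement is a canonical form on almost all graphs, and two indistinguishable random graphs are then isomorphic except on an event of negligible probability. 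If you hand in the project, prove that version; as written, the theorem reduces to the equivariance proposition exactly as you say.
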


\studentproject{Read \cite{DBLP:journals/siamcomp/BabaiES80}, \cite{DBLP:conf/fct/ArvindKRV15} and provide a proof of the above theorem}

\newcommand{\drawcycle}[3]{
    \foreach \i in {1,...,#2} {
        \node (#1\i) at ({360/#2 * (\i-1)}:#3) {C};
    }
    
    \foreach \i in {1,...,#2} {
        \pgfmathtruncatemacro{\j}{mod(\i,#2) + 1}
        \draw (#1\i) -- (#1\j);
    }

}

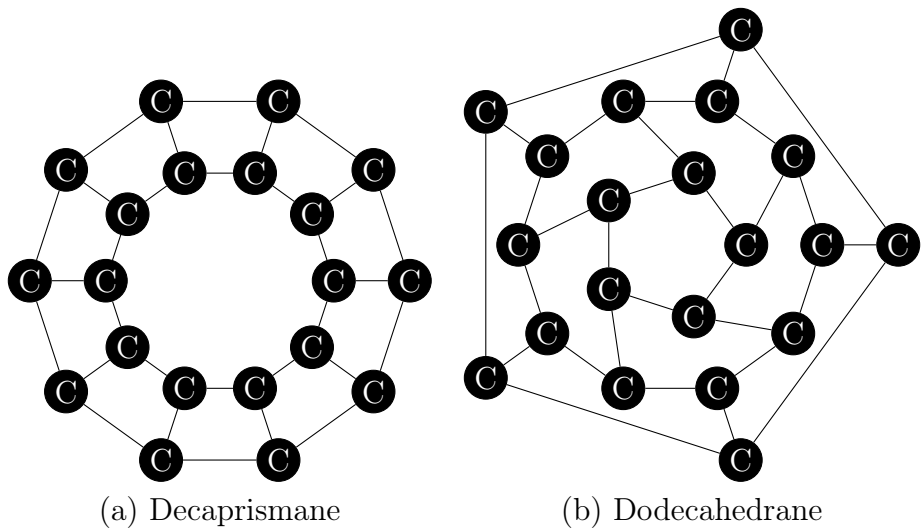
\begin{figure}
\centering

\begin{tabular}{cc}
\begin{tikzpicture}[every node/.style={circle,draw=black,text=white,fill=black,inner sep=1.5pt}]
    \def\n{10}
    \def\r{3} 
    
    \drawcycle {C} {10} {1.5}
    \drawcycle {D} {10} {2.5}
    
    \foreach \i in {1,...,10} {
    	\draw (C\i) -- (D\i);
    }
    
\end{tikzpicture}
&
\begin{tikzpicture}[every node/.style={circle,draw=black,text=white,fill=black,inner sep=1.5pt}]
    \def\n{10}
    \def\r{3} 
    
    \drawcycle {C} {10} {2}
    \drawcycle {D} {5} {1}
    \drawcycle {E} {5} {3}

    \foreach \i [evaluate=\i as \j using {int(\i*2)}] in {1,2,...,5} {
    	\draw (C\j) -- (D\i);
    }
    
    \foreach \i [evaluate=\i as \j using {int(\i*2-1)}] in {1,2,...,5} {
        	\draw (C\j) -- (E\i);
    }      
\end{tikzpicture}
\\
(a) Decaprismane & (b) Dodecahedrane
\end{tabular}
\index{decaprismane}
\index{dodecahedrane}
\caption{Decaprismane and dodecahedrane. Two non isomorphic graphs that are 1-WL-indistinguishable \cite{DBLP:journals/corr/abs-2003-04078}.
\label{figure:1wlcounterexample}}
\end{figure}

\section{Colour refinement and GNNs}

Intuitively, GNNs are weaker than \algocolorrefinement: \algocolorrefinement stores in the whole `colour' which correspond to all the arguments to compute the label of a vertex, while a GNN only stores the result.
This fact is stated in Theorem VIII.1 in \cite{DBLP:conf/lics/Grohe21}, as well as
\cite{DBLP:conf/iclr/XuHLJ19} \cite{DBLP:conf/aaai/0001RFHLRG19}.

\subsection{Colour refinement is self-contained}

Colour refinement contains all the information in order to compute the output of a GNN.

\begin{theorem}
Let $\aGNN$ be a GNN with $\numberoflayers$ layers. For all $t \in \set{0, \dots, \numberoflayers}$, for all pointed graphs $G, u$ and $G',u'$, we have:
\begin{center}
	$\algocolorrefinementattime{t}(G, u) = \algocolorrefinementattime{t}(G', u')$
	 then $\aGNNoutlayer{t}(G, u) = \aGNNoutlayer{t}(G', u')$.
	\end{center}
	
\end{theorem}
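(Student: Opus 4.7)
The plan is to prove this by induction on the layer index $t$. The statement is essentially saying that at each depth, the GNN output factors through the color refinement labelling, so the key is to match the inductive structure of both computations, which is nearly identical.

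For the base case $t = 0$, both $\algocolorrefinementattime{0}(G, u)$ and $\aGNNoutlayer{0}(G, u)$ are defined to be $\labellinginitial(u)$. So if the two color refinement labels agree, the two initial features agree, and hence the two GNN outputs agree.

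For the inductive step, assume the implication holds for $t$ and suppose $\algocolorrefinementattime{t+1}(G, u) = \algocolorrefinementattime{t+1}(G', u')$. By the recursive definition of $\algocolorrefinementattime{t+1}$, this pair equality splits into two pieces:
\begin{enumerate}
\item $\algocolorrefinementattime{t}(G, u) = \algocolorrefinementattime{t}(G', u')$ (equality of first components), and
\item $\multiset{\algocolorrefinementattime{t}(G, v) \suchthat v \in E(u)} = \multiset{\algocolorrefinementattime{t}(G', v') \suchthat v' \in E'(u')}$ (equality of multisets).
\end{enumerate}
Piece (1) gives $\aGNNoutlayer{t}(G, u) = \aGNNoutlayer{t}(G', u')$ by the induction hypothesis. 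Piece (2) is where the slightly more delicate step lives: a multiset equality guarantees a bijection $\sigma : E(u) \to E'(u')$ such that $\algocolorrefinementattime{t}(G, v) = \algocolorrefinementattime{t}(G', \sigma(v))$ for every $v \in E(u)$. Applying the induction hypothesis pointwise along $\sigma$ yields $\aGNNoutlayer{t}(G, v) = \aGNNoutlayer{t}(G', \sigma(v))$, hence
\[\multiset{\aGNNoutlayer{t}(G, v) \suchthat v \in E(u)} = \multiset{\aGNNoutlayer{t}(G', v') \suchthat v' \in E'(u')}.\]
Summing both sides (a permutation-invariant operation on multisets) gives the same vector for the aggregation terms.

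Plugging this equality along with (1) into the layer recurrence
\[\aGNNoutlayer{t+1}(G, u) = \vec\activationfunction\bigl(A_{t+1} \aGNNoutlayer{t}(G, u) + B_{t+1} \sum \multiset{\aGNNoutlayer{t}(G, v) \suchthat v \in E(u)} + b_{t+1}\bigr)\]
and the analogous formula on $G'$ yields $\aGNNoutlayer{t+1}(G, u) = \aGNNoutlayer{t+1}(G', u')$, completing the induction. The only nontrivial point is the extraction of the bijection $\sigma$ from the multiset equality and the observation that the GNN aggregation $\sum\multiset{\cdot}$ is invariant under any such re-indexing; the rest is routine unfolding of definitions.
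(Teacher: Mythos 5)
Your proof is correct and follows exactly the route the paper intends: the paper's own proof consists only of the sentence ``We prove it by induction on $t$,'' and your argument supplies precisely the missing details (base case via the common definition $\labellinginitial(u)$, and the inductive step via splitting the pair equality, extracting a bijection from the multiset equality, and using permutation-invariance of the sum). Nothing to change.
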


\begin{proof}
	We prove it induction on $t$.

\end{proof}

Said, differently, the partition of $\algocolorrefinementattime{\numberoflayers}(G)$ is finer than the one of $\aGNN(G)$. We make an abuse of notation and write $\aGNN(G)$ or $\aGNNoutlayer{t}(G)$ to denote the corresponding partition.

$$\algocolorrefinementattime{t}(G) \finerthan \aGNNoutlayer{t}(G).$$

In particular, we have $\algocolorrefinement(G) \finerthan \aGNNoutlayer{t}(G)$. So:

\begin{corollary}
$\algocolorrefinement(G) \finerthan \aGNNoutlayer{\numberoflayers}(G)$.
\end{corollary}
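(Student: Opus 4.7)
The plan is to derive the corollary from the preceding theorem almost immediately, using monotonicity of the color-refinement sequence together with transitivity of the refinement preorder $\finerthan$.

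First I would instantiate the theorem with $t = \numberoflayers$, $G' = G$, and $u, v$ ranging over $V$. This gives: for all $u, v \in V$,
\[
\algocolorrefinementattime{\numberoflayers}(G, u) = \algocolorrefinementattime{\numberoflayers}(G, v) \quad \Longrightarrow \quad \aGNNoutlayer{\numberoflayers}(G, u) = \aGNNoutlayer{\numberoflayers}(G, v),
\]
which, viewed at the level of partitions of $V$, is exactly the statement $\algocolorrefinementattime{\numberoflayers}(G) \finerthan \aGNNoutlayer{\numberoflayers}(G)$.

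Second, I would invoke the monotonicity of the color-refinement sequence noted in the text, namely $\dots \finerthan \algocolorrefinementattime{2}(G) \finerthan \algocolorrefinementattime{1}(G) \finerthan \algocolorrefinementattime{0}(G)$, together with the existence of a stabilisation time $\crtime_G < |V|$ such that $\algocolorrefinement(G) = \algocolorrefinementattime{\crtime_G+1}(G)$ and subsequent iterations induce the same partition. Splitting on cases: if $\numberoflayers \leq \crtime_G+1$, monotonicity directly yields $\algocolorrefinement(G) \finerthan \algocolorrefinementattime{\numberoflayers}(G)$; if $\numberoflayers > \crtime_G+1$, the stopping condition implies $\algocolorrefinementattime{\numberoflayers}(G)$ induces the same partition as $\algocolorrefinement(G)$, so the refinement holds trivially.

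Finally, transitivity of $\finerthan$ (which is merely transitivity of partition refinement) combines the two facts to give $\algocolorrefinement(G) \finerthan \aGNNoutlayer{\numberoflayers}(G)$, as required. There is really no hard step here: the theorem does all the heavy lifting, and the only minor care needed is the two-case split handling whether $\numberoflayers$ precedes or follows the stabilisation time of color refinement; I would probably unify the two cases by simply observing that the sequence $(\algocolorrefinementattime{t}(G))_t$ is non-increasing in the $\finerthan$ order and has $\algocolorrefinement(G)$ as its limit (and minimum).
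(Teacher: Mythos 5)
Your proposal is correct and follows essentially the same route the paper takes: the corollary is obtained by instantiating the theorem at $t = \numberoflayers$ (giving $\algocolorrefinementattime{\numberoflayers}(G) \finerthan \aGNNoutlayer{\numberoflayers}(G)$), observing that $\algocolorrefinement(G)$ is finer than $\algocolorrefinementattime{\numberoflayers}(G)$ by monotonicity and stabilisation of the refinement sequence, and concluding by transitivity of $\finerthan$. The paper leaves these steps implicit; you have merely spelled them out, including the harmless case split on whether $\numberoflayers$ exceeds the stabilisation time.
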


%

\subsection{But GNNs are powerful}

\index{colour refinement}
Is there a GNN that computes the $\algocolorrefinement$-partition?  
In Theorem VIII.4 of \cite{DBLP:conf/lics/Grohe21} partially answers the question. The answer is partial because the existence of a GNN is not uniform: we have one GNN for each size of graphs. We reformulate this result here.

\begin{theorem}
\label{theorem:aGNNforCR}
For all integers $n \in \setN$, there is a GNN $\aGNN$ such that for all graphs $G$ with at most $n$ vertices and where the initial labellings of each vertex is in $\set{0, 1}^k$, we have

 $$\aGNNoutlayer{n}(G) \finerthan \algocolorrefinement(G).$$

\end{theorem}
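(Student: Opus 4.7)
The plan is to construct, for a given $n$, a GNN $\aGNN$ with $\numberoflayers = n$ layers such that, restricted to graphs on at most $n$ vertices with initial labels in $\set{0,1}^k$, each intermediate labelling $\aGNNoutlayer{t}$ is a one-hot encoding of the corresponding round of colour refinement. The overall argument will proceed by induction on $t \le n$, maintaining the invariant that for all pointed graphs $(G, u)$ and $(G', u')$ in the class, $\aGNNoutlayer{t}(G, u) = \aGNNoutlayer{t}(G', u')$ iff $\algocolorrefinementattime{t}(G,u) = \algocolorrefinementattime{t}(G', u')$.

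The starting observation is that the set of CR colours that can arise across graphs on at most $n$ vertices, starting from initial labels in $\set{0,1}^k$, is finite: each such graph produces at most $n$ classes per round, and over all graphs in the class and all rounds $t \le n$ the total pool of colour values is bounded by a function of $n$ and $k$. I would fix a dimension $d$ large enough so that every such colour gets its own coordinate, and encode colours by one-hot vectors. Under this encoding, the sum $\sum_{v \in E(u)} \aGNNoutlayer{t}(G, v)$ is exactly the count vector recording, for each colour $c'$, the number of neighbours of $u$ coloured $c'$; it is in bijection with the multiset $\multiset{\algocolorrefinementattime{t}(G, v) \suchthat v \in E(u)}$. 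Combined with $\aGNNoutlayer{t}(G, u)$, it therefore determines the round-$(t+1)$ CR colour of $u$.

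The key step is then to realise layer $t+1$ as an injective decoder of the pair \emph{(own one-hot, neighbour count vector)} into a one-hot of the new CR colour. First I would enumerate the finitely many reachable pairs and, for each new colour $c'_j$, pick row $j$ of $A_{t+1}$ and $B_{t+1}$ together with bias $b_{t+1, j}$ so that the pre-activation equals $1$ on every input pair mapping to $c'_j$ and is $\le 0$ on every other reachable pair. Because the reachable pairs form a finite subset of an integer lattice, such integer-coefficient separating half-spaces always exist; applying $truncReLU$ coordinatewise then yields a clean $\set{0,1}$-vector that is the one-hot encoding of the new colour. Since colour refinement stabilises within $n-1$ rounds, taking $t = n$ gives the partition inclusion $\aGNNoutlayer{n}(G) \finerthan \algocolorrefinement(G)$.

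The main obstacle will be exactly this per-layer decoding step: implementing an exact equality indicator for the pair $\big(\aGNNoutlayer{t}(G,u), \sum_{v \in E(u)} \aGNNoutlayer{t}(G,v)\big)$ via a single affine map followed by a single coordinatewise activation. The use of $truncReLU$ is what makes this clean, since it provides a genuine $\set{0,1}$-thresholding, whereas $ReLU$ alone cannot produce an indicator function. The technical heart of the construction is therefore the explicit choice of integer weights realising the separating half-spaces, together with a careful bookkeeping to ensure that the output encoding at each layer is compatible with the encoding assumed for the next layer in the induction.
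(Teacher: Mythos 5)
The paper itself gives no proof of this theorem (it defers to Theorem VIII.4 of Grohe), so your plan is judged on its own merits. The skeleton is right: a finite pool of reachable colours, one-hot (or similar) codes so that the neighbour sum determines the multiset of neighbour colours, and an induction maintaining that $\aGNNoutlayer{t}$ is an injective function of $\algocolorrefinementattime{t}$. But the step you yourself call the technical heart is exactly where the argument fails. You need, for each new colour $c'_j$, an affine functional that is $\geq 1$ on the unique reachable pair $(e_c, m)$ producing $c'_j$ and $\leq 0$ on every other reachable pair. Such a functional cannot exist when that pair lies in the convex hull of other reachable pairs, and ``the reachable pairs form a finite subset of an integer lattice'' gives no protection against this. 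Concretely: take a graph with an isolated vertex, a vertex of degree $1$ and a vertex of degree $2$, all with the same initial label. After round $0$ there is one colour $c$ and the reachable pairs are $(e_c,0)$, $(e_c,1)$, $(e_c,2)$; the middle one is the average of the other two, so any affine $f$ with $f(e_c,0)\leq 0$ and $f(e_c,2)\leq 0$ satisfies $f(e_c,1)\leq 0 < 1$. Equivalently, $1_{x=1}$ is not of the form $truncReLU(ax+b)$ on $\set{0,1,2}$. Hence a single affine map followed by coordinatewise $truncReLU$ cannot emit the one-hot of the new colour, and your induction step does not close.

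There are two standard repairs, and you should commit to one. (i) Spend two layers per refinement round: the first computes threshold units $truncReLU(m_i-k)$ for all relevant $i,k$; the next layer's affine map forms $1_{m_i=k} = truncReLU(m_i-k+1) - truncReLU(m_i-k)$ and conjoins the conditions via $truncReLU\big(\sum(\cdot) - (\text{number of conditions}-1)\big)$. This works but doubles the depth, which you must then reconcile with the exponent $n$ in $\aGNNoutlayer{n}(G) \finerthan \algocolorrefinement(G)$. (ii) Keep one layer per round but replace one-hot by a unary (threshold) code: choose an integer linear functional $\ell$ that is injective on the finitely many reachable pairs, order the new colours by $\ell$, and let coordinate $j$ of the new label be $truncReLU\big(\lambda(\ell(e_c,m)-\theta_j)\big)$, the indicator of ``rank at least $j$''. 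Every prefix of this order is a half-space cut, so each unit is realisable in one layer; the unary code is injective on colours; and sums of unary codes over neighbours still determine the multiset (coordinate $j$ of the sum counts neighbours of rank $\geq j$), so the invariant needed for the next round survives. Either fix turns your outline into a proof; as written, the per-layer decoding claim is false.
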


\studentproject{Read the proof of Theorem VIII.4 in \cite{DBLP:conf/lics/Grohe21}}

\section{Generalizations (*)}

\Cref{figure:1wlcounterexample} shows two non-isomorphic regular graphs that \algocolorrefinement is unable to distinguish. So the natural idea is to generalize 1-WL to tuples of vertices instead of single vertices.
We introduce the two main generalizations with the notations of \cite{DBLP:journals/jmlr/0001LMRKGFB23}: k-FWL = k-folkore WL, and k-OWL = k-oblivious WL.

\subsection{k-FWL test (for $k \geq 2$)}

\newcommand{\vectorsubstitutecoordinate}[3]{#1_{[#2 := #3]}}
\newcommand{\kfwlt}[2]{\textsf{fwl}_{#1}^{(#2)}}
\newcommand{\kowlt}[2]{\textsf{owl}_{#1}^{(#2)}}
\newcommand\twoowl[1]{\textsf{owl}_2^{(#1)}}

The algorithm $\algocolorrefinement$ has been extended on $k$-tuples of vertices. Given a labelled graph $G$, given a $k$-tuple $\vec v = (v_1, \dots, v_k)$ of vertices, we denote by $G[\vec v]$ the subgraph of $G$ induced by $\set{v_1, \dots, v_k}$. More precisely, $G[\vec v] = (V', E', \labellinginitial')$ is the graph whose vertices are $V' = \set{1, \dots, k}$ and $E' = \set{(i, j) \suchthat v_i E v_j}$ and $\labellinginitial'(i) := \labellinginitial(v_i)$.

\begin{itemize}
\item Initialization: the colour of $\vec v$ is $G[\vec v]$;
\item Refinement step: look at $k$-tuples overlapping with a tuple in all but one component. Make a \emph{single multiset} of \emph{tuples of values}.
\end{itemize}

Formally:
\begin{itemize}
\item $\kfwlt k 0 (G, \vec u) = G[\vec v]$;
\item $\kfwlt {k} {t+1}(G, \vec u) := (\kfwlt k t[\vec u], \multiset{
\left(\kfwlt k t(\vectorsubstitutecoordinate {G, \vec u} 1 w), 
\dots,
\kfwlt k t(\vectorsubstitutecoordinate {G, \vec u} k w)\right) \suchthat w \in V })$
\end{itemize}

where $\vectorsubstitutecoordinate {\vec u} i w$ 
is the vector $\vec u$ in which the $i$-th coordinate has been replaced by vertex $w$.

\subsection{k-OWL test (for $k \geq 2$)}

\begin{itemize}
\item Initialization: same as k-FWL;
\item Refinement step: look at $k$-tuples overlapping with a tuple in all but one component. Make \emph{several multisets} of \emph{values}.
\end{itemize}

\begin{itemize}
\item $\kowlt k 0 (G, \vec u) = G[\vec v]$;
\item 
$\kowlt k {t+1}[\vec u] := (\kowlt k t(\vec u), 
(\multiset{\kowlt k t(\vectorsubstitutecoordinate {\vec u} 1 w) \suchthat w \in V}, 
\dots,
\multiset{\kowlt k t(\vectorsubstitutecoordinate {\vec u} k w) \suchthat w \in V} ))$
\end{itemize}

\subsection{Relations}

\begin{proposition}
\algocolorrefinement and 2-OWL are as powerful.
\end{proposition}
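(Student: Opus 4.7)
The plan is to establish the equivalence by two inductive lemmas relating, for each iteration index $t$, the CR colouring and the 2-OWL colouring. Together these will imply that the graph-level histograms over pairs and over vertices determine one another, so CR and 2-OWL distinguish the same (pointed) graphs.

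First I would prove, by induction on $t$, the refinement direction: if $\kowlt 2 t(G, (u, v)) = \kowlt 2 t(G', (u', v'))$ then $\algocolorrefinementattime{t}(G, u) = \algocolorrefinementattime{t}(G', u')$, and symmetrically for the second coordinate. The base case is immediate since $G[(u, v)]$ already encodes $\labellinginitial(u) = \algocolorrefinementattime{0}(G, u)$. For the inductive step, the $(t{+}1)$-st 2-OWL colour includes the multiset $\multiset{\kowlt 2 t(G, (u, w)) \suchthat w \in V}$; each element carries inside it the initial colour $G[(u, w)]$, which reveals whether $uEw$, so we can filter the multiset to successors of $u$ and invoke the induction hypothesis pair-by-pair to recover $\multiset{\algocolorrefinementattime{t}(G, w) \suchthat uEw}$, yielding $\algocolorrefinementattime{t+1}(G, u)$.

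Second, I would prove the converse inductive invariant: $\kowlt 2 t(G, (u, v))$ is determined by $(\algocolorrefinementattime{t}(G, u), \algocolorrefinementattime{t}(G, v), G[(u, v)])$ together with the global CR multiset $\multiset{\algocolorrefinementattime{t}(G, w) \suchthat w \in V}$. The base case is trivial, and at step $t{+}1$ the update multisets, for instance $\multiset{\kowlt 2 t(G, (w, v)) \suchthat w \in V}$, are, by the induction hypothesis, determined by $\multiset{(\algocolorrefinementattime{t}(G, w), G[(w, v)]) \suchthat w \in V}$; this multiset splits into a neighbour part, which is exactly what $\algocolorrefinementattime{t+1}(G, v)$ records, and a non-neighbour part, obtained as the global CR histogram minus the neighbour part.

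Combining the two lemmas yields the equivalence: if 2-OWL histograms match, the first lemma gives matching CR histograms; conversely, the second lemma shows that matching CR information fully determines the 2-OWL histograms on pairs. The main obstacle is the second induction, since colour refinement only propagates information over outgoing edges while 2-OWL refines symmetrically over both coordinates; recovering the non-neighbour multiset requires the complement argument against the global CR histogram, and some care is also needed to thread the ``diagonal'' pairs $(u,u)$ through the same machinery given the paper's definition of $G[(u, u)]$.
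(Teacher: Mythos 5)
Your proposal is correct and follows essentially the same route as the paper: the paper proves a single strengthened biconditional invariant $\propP(t)$ --- that $\twoowl{t}(u,v)=\twoowl{t}(u',v')$ iff the CR colours of the coordinates agree and the edge indicators agree --- and your two lemmas are exactly the two directions of that invariant, using the same key devices (the edge indicator is recoverable from the round-$0$ pair colour, the update multisets split into neighbour and non-neighbour parts, and the non-neighbour part is the complement against the global CR histogram, which is the paper's step ``removing the equalities with the not since the union is $\multiset{\algocolorrefinementattime{t}(w) \suchthat w \in V}$''). The only substantive difference is that you carry the global CR histogram explicitly so as to compare two distinct graphs, whereas the paper works within a single graph where that multiset is automatically shared; this is a reasonable refinement rather than a different proof.
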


\begin{proof}
2-OWL can be redefined as follows.

\begin{itemize}
\item $\twoowl0(u, v) = (\labellinginitial(u), \labellinginitial(v), 1_{uEv})$;
\item $\twoowl{t+1}(u, v) = (\twoowl{t}(u, v), \multiset{\twoowl{t}(w, v) \suchthat w \in V}, \multiset{\twoowl{t}(u, w) \suchthat w \in V} )$.
\end{itemize}

The goal is to prove that $\algocolorrefinementattime{t}(u) = \algocolorrefinementattime{t}(u')$ iff $\twoowl{t}(u, u) = \twoowl{t}(u', u')$.
We consider a stronger induction hypothesis which is denoted by $\propP(t)$: for all $u, v, u', v' \in V$, $\twoowl{t}(u, v) = \twoowl{t}(u', v')$ iff $\algocolorrefinementattime{t}(u) = \algocolorrefinementattime{t}(u')$ and $\algocolorrefinementattime{t}(v) = \algocolorrefinementattime{t}(v')$ and ($uEv$ iff $u'Ev'$).

\begin{itemize}
\item $\propP(0)$ is OK.
\item Suppose $\propP(t)$. Let us prove that $\propP(t+1)$.

\newcommand\textiff{\text{ iff }}
\newcommand\textand{\text{ and }}
\newcommand\textnot{\text{ not }}
\newcommand\textbijection{\text{ bijection }}
\newcommand{\explanation}[1]{\text{\footnotesize ~~~~~~~~(#1)}}
$\twoowl{t+1}(u, v) = \twoowl{t+1}(u', v')$
\begin{align*}
\textiff & \twoowl{t}(u, v) = \twoowl{t}(u', v') \\
 & \textand \multiset{\twoowl{t}(w, v) \suchthat w \in V} = \multiset{\twoowl{t}(w, v') \suchthat w \in V} \\
 & \textand \multiset{\twoowl{t}(u, w) \suchthat w \in V} = \multiset{\twoowl{t}(u', w) \suchthat w \in V} \\
  &  \explanation{By rewriting the multiset equalities with bijections} \\
 \textiff &  \algocolorrefinementattime{t}(u) = \algocolorrefinementattime{t}(u') \textand \algocolorrefinementattime{t}(v) = \algocolorrefinementattime{t}(v') \textand (uEv \textiff u'Ev') \\
   & \exists \textbijection \phi : V \rightarrow V \suchthat \forall w \in V, \twoowl{t}(w, v) = \twoowl{t}(\phi(w), v') \\
   & \exists \textbijection \psi : V \rightarrow V \suchthat \forall w \in V, \twoowl{t}(u, w) = \twoowl{t}(u', \psi(w)) \\
 %
  &  \explanation{By $\propP(t)$} \\
 \textiff &  \algocolorrefinementattime{t}(u) = \algocolorrefinementattime{t}(u') \textand \algocolorrefinementattime{t}(v) = \algocolorrefinementattime{t}(v') \textand (uEv \textiff u'Ev') \\
   & \exists \textbijection \phi : V \rightarrow V \suchthat \forall w \in V, \\
 &  \algocolorrefinementattime{t}(w) = \algocolorrefinementattime{t}(\phi(w)) \textand (wEv \textiff \phi(w)Ev') \\
   & \exists \textbijection \psi : V \rightarrow V \suchthat \forall w \in V, \\
 & \algocolorrefinementattime{t}(w) = \algocolorrefinementattime{t}(\psi(w)) \textand (u E w \textiff u' E \psi(w)) \\
  &  \explanation{By rewriting the bijection stuff with multiset equalities} \\
  \textiff &  \algocolorrefinementattime{t}(u) = \algocolorrefinementattime{t}(u') \textand \algocolorrefinementattime{t}(v) = \algocolorrefinementattime{t}(v')  \textand (uEv \textiff u'Ev') \\
& \multiset{\algocolorrefinementattime{t}(w) \suchthat w E v} = \multiset{\algocolorrefinementattime{t}(w') \suchthat w' E v'} \\
 &  \textand \multiset{\algocolorrefinementattime{t}(w) \suchthat \textnot w  E v} = \multiset{\algocolorrefinementattime{t}(w') \suchthat \textnot w'  E v'} \\
 & \multiset{\algocolorrefinementattime{t}(w) \suchthat u E w} = \multiset{\algocolorrefinementattime{t}(w') \suchthat u' E w'} \\
 &  \textand \multiset{\algocolorrefinementattime{t}(w) \suchthat \textnot u  E w} = \multiset{\algocolorrefinementattime{t}(w') \suchthat \textnot u'  E w'} \\
  &  \explanation{By removing the equalities with the not since the union is $\multiset{\algocolorrefinementattime{t}(w) \suchthat w \in W}$} \\
  \textiff &  \algocolorrefinementattime{t}(u) = \algocolorrefinementattime{t}(u') \textand \algocolorrefinementattime{t}(v) = \algocolorrefinementattime{t}(v')  \textand (uEv \textiff u'Ev') \\
 & \multiset{\algocolorrefinementattime{t}(w) \suchthat w E v} = \multiset{\algocolorrefinementattime{t}(w') \suchthat w' E v'} \\
 & \multiset{\algocolorrefinementattime{t}(w) \suchthat u E w} = \multiset{\algocolorrefinementattime{t}(w') \suchthat u' E w'} \\
%
 %
  &  \explanation{By definition of $\algocolorrefinementattime{t+1}$} \\
 \textiff & \algocolorrefinementattime{t+1}(u) = \algocolorrefinementattime{t+1}(u') \textand  \algocolorrefinementattime{t+1}(v) = \algocolorrefinementattime{t+1}(v')
\end{align*}

So $\propP(t+1)$.
\end{itemize}

\end{proof}

More generally:

\begin{proposition}
$k$-FWL is as powerful as $(k+1)$-OWL.
\end{proposition}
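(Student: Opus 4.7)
The plan is to mimic the argument already given for the case $k=1$ (i.e., $\algocolorrefinement$ vs.\ $2$-OWL): strengthen the induction hypothesis so that it compares not just partitions of $k$-tuples and $(k{+}1)$-tuples, but colors on tuples extended by one ``padding'' coordinate.

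First I would define, for each $k$-tuple $\vec u = (u_1, \dots, u_k)$, its padded version $\vec u^+ := (u_1, u_2, \dots, u_k, u_1)$ (appending a copy of the first coordinate as the extra coordinate). I would then prove by induction on $t \in \setN$ the property $\propP(t)$: for all $k$-tuples $\vec u, \vec u'$ in $G, G'$,
\[
\kfwlt{k}{t}(G, \vec u) = \kfwlt{k}{t}(G', \vec u') \iff \kowlt{k+1}{t}(G, \vec u^+) = \kowlt{k+1}{t}(G', \vec u'^+).
\]
The base case $t=0$ holds because $G[\vec u]$ and $G[\vec u^+]$ determine each other, since $\vec u^+$ is obtained from $\vec u$ by duplicating a coordinate. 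Once $\propP(t)$ is established for all $t$, the ``as powerful as'' claim at the graph level follows by aggregating colors into histograms and invoking the stabilization of both procedures.

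The inductive step is the technical core. Unfolding definitions, the $k$-FWL refinement of $\vec u$ produces a single multiset of $k$-tuples indexed by $w \in V$, encoding the colors obtained by replacing each of the $k$ coordinates by the \emph{same} $w$, while the $(k{+}1)$-OWL refinement on $\vec u^+$ produces $k+1$ separate multisets of $(k{+}1)$-OWL colors, one per coordinate being varied. The direction from $(k{+}1)$-OWL to $k$-FWL is the easier one: varying the $(k{+}1)$-th coordinate over $w \in V$ and invoking the induction hypothesis yields the $k$-FWL colors of the $k$-tuples probed at $w$. The converse is the main obstacle: the $k$-FWL multiset a priori encodes more correlation (a joint tuple of replacements by the same $w$) than the product of per-coordinate marginals used by OWL. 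What rescues the argument is precisely the extra, duplicated coordinate: by varying it and using the induction hypothesis, one recovers the joint information, in analogy to how the second coordinate in the $k=1$ proof above re-synchronizes the two marginal multisets.

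Concretely, I would translate each multiset equality into the existence of bijections $\phi_i : V \to V$ (one per coordinate), as in the $2$-OWL argument above, and then glue these bijections into a single bijection $\phi : V \to V$ using the synchronization provided by the padded coordinate. Once this gluing is handled, the remainder of the inductive step is routine rewriting that mirrors the calculation performed above for $k=1$. The main difficulty I expect is precisely this gluing step: one must carefully invoke the induction hypothesis on all relevant tuple extensions simultaneously, and verify that the single bijection produced on $V$ is compatible with all $k$ coordinates at once.
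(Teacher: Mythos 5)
The paper does not actually prove this proposition (it is left as a student project), so your attempt stands on its own. Your overall strategy --- strengthen the induction hypothesis so that it relates tuple colours of the two tests round by round, in the spirit of the paper's $\algocolorrefinement$ versus $2$-OWL argument --- is the right instinct, but the induction hypothesis you chose is not closed under the recursive definitions, and this is a genuine gap rather than a technicality. Your $\propP(t)$ only speaks about $(k{+}1)$-tuples of the padded form $\vec u^{+}=(u_1,\dots,u_k,u_1)$. But the $(k{+}1)$-OWL update on $\vec u^{+}$ is built from the colours $\kowlt{k+1}{t}$ of the substituted tuples $\vectorsubstitutecoordinate{\vec u^{+}}{i}{w}$, and already for $i=k{+}1$ the tuple $(u_1,\dots,u_k,w)$ is not padded; for $i\le k$ the substituted tuples are not padded either. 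So in the inductive step, in \emph{both} directions of the equivalence, you must reason about $(k{+}1)$-OWL colours of tuples on which $\propP(t)$ says nothing, and the induction cannot be closed.

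Compare with the $k=1$ proof in the paper: there the induction hypothesis characterises $\twoowl{t}(u,v)$ for \emph{all} pairs $(u,v)$ --- in terms of $\algocolorrefinementattime{t}(u)$, $\algocolorrefinementattime{t}(v)$ and the atomic fact $uEv$ --- not only for the diagonal pairs $(u,u)$ one ultimately cares about. The correct generalisation has the same shape: characterise $\kowlt{k+1}{t}(v_1,\dots,v_{k+1})$ for an \emph{arbitrary} $(k{+}1)$-tuple in terms of the $k$-FWL colours of its $k{+}1$ sub-$k$-tuples (obtained by deleting one coordinate, up to a permutation of positions) together with the atomic type of the full tuple. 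With such a hypothesis the ``joint versus marginal'' difficulty you identify resolves itself, because a single tuple $(u_1,\dots,u_k,w)$ carries all of $\vec u$ and $w$ simultaneously, and its sub-tuples are exactly the tuples $\vectorsubstitutecoordinate{\vec u}{i}{w}$ up to reordering; no gluing of $k$ separate bijections is needed (and your proposed gluing is not justified as written). Two further points you would then have to address: (i) invariance of $\kfwlt{k}{t}$ under permuting the coordinates of a tuple, which the reordering uses implicitly; and (ii) whether the correspondence really holds with the same round index $t$ on both sides --- the safe formulation of ``as powerful as'' is at the level of the stable colourings, since one round of $k$-FWL need not be matched by exactly one round of $(k{+}1)$-OWL.
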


\begin{proposition}
$k+1$-OWL is strictly more powerful than $k$-OWL.
\end{proposition}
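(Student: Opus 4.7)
The plan is to prove the proposition in two independent parts: first, that $(k+1)$-OWL is at least as powerful as $k$-OWL, and second, that the inclusion is strict.

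For the first part, I would establish a refinement lemma by induction on $t$: using the encoding $\tau(\vec u) := (u_1, \dots, u_k, u_k)$ that duplicates the last coordinate of a $k$-tuple, I claim that $\kowlt{k+1}{t}(G, \tau(\vec u)) = \kowlt{k+1}{t}(G', \tau(\vec{u'}))$ implies $\kowlt{k}{t}(G, \vec u) = \kowlt{k}{t}(G', \vec{u'})$. The base case is immediate because $G[\tau(\vec u)]$ carries the same information as $G[\vec u]$ (the extra coordinate is a duplicate). For the inductive step, I would exploit the fact that substituting a vertex $w$ in position $i \le k-1$ of $\tau(\vec u)$ yields exactly $\tau(\vec u_{[i:=w]})$, so by the induction hypothesis the $i$-th multiset in the $(k+1)$-OWL color recovers the $i$-th multiset needed in the $k$-OWL refinement. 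The multiset for the last position of $k$-OWL is recoverable from the $(k+1)$-OWL multisets at positions $k$ and $k+1$ by restricting to tuples whose last two coordinates coincide. Once the refinement lemma is proved, it immediately follows that whenever $k$-OWL distinguishes two graphs $G$ and $G'$ (i.e.\ their histograms of $k$-tuple colors differ), so does $(k+1)$-OWL.

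For the strict direction, I would invoke the Cai--Fürer--Immerman (CFI) construction. Starting from a base graph $\mathcal B_k$ of large enough treewidth and girth, one replaces each vertex by a CFI gadget depending on its degree and introduces a single ``twist'' on one edge, producing a pair of non-isomorphic graphs $G_k, H_k$. Two facts need to be checked: (i) $k$-OWL cannot distinguish $G_k$ from $H_k$, which follows from the existence of a winning strategy for Duplicator in the bijective $k$-pebble counting game played on $G_k$ and $H_k$ — here the combinatorial core is that $k$ pebbles cannot separate the twisted gadget from the untwisted one because any $k$-vertex view can always be matched by a suitable local automorphism; and (ii) $(k+1)$-OWL does distinguish them, which is witnessed by a $(k+1)$-variable counting sentence separating $G_k$ from $H_k$ (equivalently, Spoiler wins with $k+1$ pebbles because one extra pebble lets him trap the twist).

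The main obstacle is clearly part (ii), the CFI separation: the gadget analysis and the pebble-game correctness arguments constitute the substantive combinatorial content, whereas the refinement lemma in the first part is routine induction once the right encoding $\tau$ is in place. In a lecture note like the present one, I would most likely give part 1 in full detail, and for part 2 merely state the CFI construction, display the gadgets, and refer to the classical source for the pebble-game verification.
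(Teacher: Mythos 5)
The paper does not actually prove this proposition: it is stated without proof and explicitly deferred to a student project, so there is no official argument to compare yours against. Your plan is the standard one from the literature and is essentially sound: padding a $k$-tuple to a $(k+1)$-tuple by duplicating the last coordinate to get the ``at least as powerful'' inclusion, and the Cai--F\"urer--Immerman construction for strictness (equivalently, via the correspondence between $k$-OWL and the counting logic $C^k$, CFI gives non-isomorphic graphs that agree on $C^k$ but differ on $C^{k+1}$).

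Two points in your part 1 need more care than the sketch suggests. First, the step ``restrict to tuples whose last two coordinates coincide'' only works if the colour of a tuple determines the equality pattern among its coordinates; with the paper's definition of the initial colour $G[\vec v]$ (which records only adjacency and labels of the underlying vertex set), this is not guaranteed, so you must enrich the atomic type to record which coordinates are equal --- this is the standard convention and should be made explicit. Second, the recovery of the $k$-th multiset is indirect: from the colour of $\tau(\vec u)$ you read off the position-$k$ multiset of colours of the tuples $(u_1,\dots,u_{k-1},w,u_k)$, and only then, from each such colour, extract the colour of $(u_1,\dots,u_{k-1},w,w)$ via its position-$(k{+}1)$ multiset. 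This costs an extra round of refinement, so the clean statement is about the \emph{stable} colourings rather than a claim matching the iteration index $t$ one-for-one. With those repairs part 1 is routine. Part 2 is genuinely deep and cannot be reproved in passing; citing \cite{DBLP:journals/combinatorica/CaiFI92} for both the gadget construction and the bijective $k$-pebble game argument is the right call, but be aware that this is where all the mathematical content of the proposition lives, so a complete proof would have to reproduce that analysis rather than merely display the gadgets.
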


\studentproject{Prove the two last propositions}

\section*{Further reading}


\subsection{Higher-order GNNs}

Morris et al.
\cite{DBLP:conf/aaai/0001RFHLRG19}
have proposed generalization GNNs that corresponds to $k$-OWL for~$k > 2$.

\newpage
\section*{Exercises}

\begin{exercise}
Consider these two molecules (from \cite{DBLP:journals/corr/abs-2003-04078}):
\begin{center}
\begin{tabular}{cc}
\begin{tikzpicture}
[every node/.style={circle,draw=black,text=white,fill=black,inner sep=1.5pt}]
\begin{scope}[rotate=30]
\drawcycle{C}{6}{1.5}
\end{scope}
\begin{scope}[xshift=26mm,rotate=-30]
\drawcycle{C}{6}{1.5}
\end{scope}
\end{tikzpicture}
& 
\begin{tikzpicture}
[every node/.style={circle,draw=black,text=white,fill=black,inner sep=1.5pt}]
\begin{scope}
\drawcycle{C}{5}{1.5}
\end{scope}
\begin{scope}[xshift=47mm,rotate=-36]
\drawcycle{D}{5}{1.5}
\end{scope}
\draw (1.5, 0) -- (3,0);
\end{tikzpicture} \\
(a) Decalin & (b) Bicyclopentyl
\end{tabular}
\end{center}

\begin{enumerate}
\item Are these two graphs isomorphic?
\item What is the output of color refinement?
\end{enumerate}
\end{exercise}

\begin{exercise}
Consider the two graphs $G$ and $H$ (Figure 1 in \cite{DBLP:conf/icassp/HuangV21}):

\begin{center}
\begin{tikzpicture}[scale=1, every node/.style={circle, draw, fill=white, inner sep=2pt}]

\node (a) at (0,0) {};
\node (b) at (1,0) {};
\node (c) at (0.5,1) {};

\draw (a) -- (b) -- (c) -- (a);

\node (d) at (2.5,0) {};
\node (e) at (3.5,0) {};
\node (f) at (3,1) {};

\draw (d) -- (e) -- (f) -- (d);

\node[draw=none, fill=none] at (1.5, -1) {\textit{G}};

\node (h1) at (6,0) {};
\node (h2) at (6.87,0.5) {};
\node (h3) at (6.87,1.5) {};
\node (h4) at (6,2) {};
\node (h5) at (5.13,1.5) {};
\node (h6) at (5.13,0.5) {};

\draw (h1) -- (h2) -- (h3) -- (h4) -- (h5) -- (h6) -- (h1);

\node[draw=none, fill=none] at (6, -1) {\textit{H}};

\end{tikzpicture}
\end{center}
\begin{enumerate}
\item Are $G$ and $H$ isomorphic?
\item Prove that color refinement does not distinguish $G$ and $H$.
\item Prove that $2-OWL$ does not distinguish $G$ and $H$.
\item Prove that $2-FWL$ does distinguish $G$ and $H$.
\end{enumerate}
\end{exercise}

\begin{exercise}
Play with \url{https://holgerdell.github.io/color-refinement/}
\end{exercise}

\begin{exercise}
Propose an efficient implementation of algorithm $\algocolorrefinement$.
\end{exercise}

\chapter{... and logic}

\fbox{
Key reference: \cite{DBLP:conf/iclr/BarceloKM0RS20}}

In this chapter we review the basics in logic, the complexity of the satisfiability problem and some connections with colour refinement and GNNs.

\section{First-order logic}
\index{first-order logic}

\newcommand\fotwo{FO\ensuremath{_2}}
\newcommand\foctwo{FOC\ensuremath{_2}}
\newcommand{\GML}{GML}

First-order logic (FO) validity/satisfiability problem is undecidable \cite{DBLP:journals/x/Turing37}. More precisely:

\begin{center}
\begin{tabular}{c|c|c}
& on finite models & on all models \\
\hline
satisfiability & undecidable in RE & undecidable in coRE \\
 & (enumerate the finite models) & \\
\hline
validity & undecidable in coRE & undecidable in RE \\
 & & (enumerate the finite proofs)
\end{tabular}
\end{center}

where RE means 'recursively enumerable'.

\section{Restricting the number of variables to two}

\index{variable}

 A noticeable decidable fragment is FO$_2$, the fragment of FO of formulas only containing two variables.

\begin{theorem} \cite{DBLP:journals/bsl/GradelKV97}
The satisfiability problem of FO$_2$ is NEXPTIME-complete.
\end{theorem}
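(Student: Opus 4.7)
The plan is to prove membership in NEXPTIME via an exponential-size model property and NEXPTIME-hardness via a reduction from an exponentially-sized tiling problem.

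For the upper bound, I would first put any \fotwo-sentence $\varphi$ into Scott normal form by a polynomial-time, satisfiability-preserving translation that introduces fresh unary predicates for subformulas (Tseitin-style). The result has the shape
$$\forall x \forall y\, \alpha(x,y) \;\wedge\; \bigwedge_{i=1}^{k} \forall x \exists y\, \beta_i(x,y),$$
with $\alpha$ and the $\beta_i$ quantifier-free. Next I would introduce 1-types (maximal consistent sets of literals in one variable) and 2-types (over the two variables $x,y$). Since the signature has size $O(|\varphi|)$, there are at most $2^{O(|\varphi|)}$ many 1-types and 2-types. The key combinatorial lemma is that any satisfiable Scott-form sentence has a model of size $2^{O(|\varphi|)}$: choose one representative ``king'' per realized 1-type to provide mandatory witnesses for all other elements, and for each other element supply at most $k$ witnesses per conjunct from a bounded pool of ``commoners''. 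A NEXPTIME algorithm then guesses an interpretation of all unary and binary predicates on an exponential domain and checks, in time polynomial in the size of the guessed structure, that $\alpha$ holds on every pair and that each $\beta_i$ has a witness for every element.

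For the lower bound, I would reduce from the exponential grid tiling problem: given a set of tile types, decide whether the $2^n \times 2^n$ grid can be tiled respecting horizontal and vertical compatibility constraints. The two-variable trick is to represent grid coordinates by unary predicates $X_0, \dots, X_{n-1}, Y_0, \dots, Y_{n-1}$, and encode horizontal/vertical successors by binary predicates $H, V$. The two-variable formula asserts (i) every element has a unique pair of $n$-bit coordinates, (ii) tile types are mutually exclusive unary predicates satisfying border and compatibility constraints expressible via $H, V$, and (iii) for every element not on the right (resp.\ top) border, there exists an $H$-successor (resp.\ $V$-successor) whose coordinate vector is the binary increment of the current one. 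All incrementation and compatibility constraints involve only the pair $(x,y)$, so \fotwo is sufficient. A satisfying model of the formula then corresponds exactly to a valid tiling.

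The main obstacle will be the small model property: naively taking one witness per $(x, \beta_i)$ pair blows up quadratically in the domain and destroys the bound. The classical Grädel--Kolaitis--Vardi argument handles this by carefully distinguishing ``kings'' (one per realized 1-type) from ``commoners'', and showing that essentially only the realized 2-type profile between 1-types matters, so a polynomial number of commoners per 1-type suffices. This counting and the verification that the quantifier-free $\alpha$ constraint is preserved when merging elements sharing the same 2-type profile is the technical heart of the upper bound; the lower bound, by contrast, is largely a careful but routine encoding exercise.
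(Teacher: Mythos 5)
Your proposal follows essentially the same route as the notes: Scott normal form, a small-model argument via 1-types, 2-types and kings for the NEXPTIME upper bound, and a reduction from an exponential tiling problem for the lower bound. Two concrete points in your sketch would fail if written out as stated.

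First, your definition of a king as ``one per realized 1-type'' is not the right one, and the error is load-bearing. A king is an element that is the \emph{unique} realization of its 1-type in the model; 1-types realized at least twice have no king. That is the whole point of the notion: kings (and their Skolem witnesses, the royal court) must be imported into the small model as single, non-duplicable elements, whereas every non-royal 1-type may be populated by several fresh copies. Moreover, the step you defer as ``the technical heart'' is not only a counting issue: when you assign witnesses for the conjuncts $\forall x\exists y\,\beta_i$ to commoners, you risk being forced to give the same ordered pair two incompatible 2-types. The construction in the notes avoids this by taking \emph{three} disjoint copies $D,E,F$ of the commoner pool and routing witnesses cyclically ($D$ takes its witnesses in $E$, $E$ in $F$, $F$ in $D$), so that no pair is ever assigned a 2-type twice; some such device is needed for the small-model property to go through.

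Second, in the lower bound your formula omits the axiom that two elements carrying the same coordinate vector carry the same tile (conjunct 2 in the notes, of the form $\forall x\,(T_t(x) \rightarrow \forall y\,(eq(x,y) \rightarrow T_t(y)))$ for each tile $t$). Nothing in your items (i)--(iii) prevents a model from realizing one grid position by several elements with different tiles, and then the local compatibility constraints along the $H$- and $V$-successors do not compose into a globally well-defined tiling: the witness at position $(i{+}1,j)$ chosen for one element need not agree with the representative you pick for that position when extracting the tiling. With that consistency axiom added, your encoding matches the one in the notes (which defines the successor relations as macros over the coordinate predicates rather than via fresh binary predicates $H,V$ --- an inessential difference).
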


\begin{proof}
\fbox{Upper bound}
Upper bound is obtained by small model property via \indexemph{Scott's formulas}. Details are given in \cite{DBLP:journals/bsl/GradelKV97} but we reproduce the proof for being self-contained.
Scott's reduction consists in defining $tr$ such that for all $\fotwo$-formulas $\phi$, $tr(\phi)$ is in the Gödel class ($\forall \forall \exists^*$-fragment), and $\phi$ and $tr(\phi)$ are equisatisfiable. To do we proceed as follows:
\begin{enumerate}
\item First we get rid of predicates of arity > 2 as follows.
\begin{enumerate}
\item  Consider an atomic subformula $R(v_1, \dots, v_n)$ where $v_1, \dots, v_n \in \set{x, y}$. 
\begin{itemize}
\item If $\set{v_1, \dots, v_n} = \set{x, y}$, replace $R(v_1, \dots, v_n)$ by $R^{(v_1, \dots, v_n)}(x, y)$ where $R^{(v_1, \dots, v_n)}$ is a fresh binary predicate
\item If $\set{v_1, \dots, v_n} = \set{x}$, replace $R(x, \dots, x)$ by $R^{(v_1, \dots, v_n)}(x)$ where $R^{(x, \dots, x)}$ is a fresh unary predicate
\item same for $y$
\end{itemize}
\item We finish by guaranteeing some equivalences.  For instance, if $R(x, y, x)$ and $R(y, x, y)$ both appears in $\phi$ we add:

$$\forall x \forall y (R^{(x, y, x)}(x, y) \lequiv R^{(y, x, y)}(y, x)).$$

etc.

\end{enumerate}

\item Now $\phi$ has only predicates of arity at most 2. We now perform a kind of \indexemph{Tseitin transformation}.

\newcommand\istrue[1]{\textsf{isTrue}_{#1}}
\begin{enumerate}
\item Each subformula $\psi$ is replaced by a predicate $\istrue{\psi}$ of arity 0, 1, 2 depending on the number of free variables in $\psi$
\item The final formula is $$tr(\phi) := \istrue{\phi} \land \lbigand_{\psi(\vec v) \text{ subformula}} \forall \vec v (\istrue \psi(\vec v) \lequiv meaning_\psi(\vec v))$$ where $meaning_\psi$ is:
\begin{enumerate}
\item $\psi$ if $\psi$ is atomic;
\item $\istrue\alpha(\vec v) \land \istrue \beta(\vec v)$ if $\psi = \alpha(\vec v) \land \beta(\vec v)$
\item $\lnot \istrue \alpha(\vec v)$ if $\psi = \lnot \alpha(\vec v)$
\item $\forall v~\istrue \alpha$ if $\psi = \forall v \alpha(\vec v)$
\end{enumerate}

Note that we get conjuncts with quantifiers $\forall \forall$ for (i-iii). For $(iv)$, because of the $\lequiv$ we get a $\forall \forall$ and a $\forall \exists$.
\end{enumerate}

\end{enumerate}

Now, we can group conjunct and rewrite $tr(\phi)$ as a formula of the form

$$ \forall x \forall y \alpha(x, y) \land \lbigand_{i=1..m} \forall x \exists y \beta_i(x, y)$$

where $\alpha$ and $\beta_i$ are quantifier-free formulas.

W.l.o.g. we can suppose that $\beta_i(x, y) \models (x \neq y)$.
 Indeed, if the model $\modelM$ has at least two elements we have that 
$$\modelM \models (\forall x \exists y \beta_i(x, y) \lequiv \forall x \exists y \underbrace{(x\neq y \land (\beta_i(x, x) \lor \beta_i(x, y))}_{\text{the new $\beta_i{x, y}$}}.$$

Now we define the notion of type. 
\begin{definition}
A \indexemph{type} $t(\vec v)$ is MCS over predicates and negation of predicates in $tr(\phi)$ involving only variables in $\vec v$. We say 1-type when $\vec v$ is variable and 2-type if $\vec v$ is $(x, y)$ or $(y, x)$.
\end{definition} 

More concretely, you can see a 1-type as a valuation on elements.

\begin{example}
Examples of 1-types are:
\begin{center}
$\set{red(x), tall(x), \lnot funny(x)}$, $\set{red(x), \lnot tall(x), funny(x)}$, etc.
\end{center} 
\end{example}

A 2-type is a valuation but over pairs of "elements" (variables because it is "abstract").

\begin{example}
An example of a 2-type is: 
$$\set{\begin{array}{l}
\lnot married(x, y),married(y, x), \lnot married(x, x), \lnot married(y, y), \\ fatherof(x, y), followsOnInstagram(x, y),  \dots
\end{array}}$$
\end{example}

Now, we consider a structure $\modelM$ that satisfies $tr(\phi)$. We will build a small model $\modelM'$ from it. Given an element $a$ in the domain of $\modelM$ the type $t_a$ of $a$ is the unique 1-type satisfied by~$a$.
Similarly, $t_{a,b}$ is the 2-type of the pair of elements $a, b$.

\begin{definition}
An element $a$ is a \indexemph{king} if $a$ is the only element in $\modelM$ to be of type $t_a$. 
\end{definition}

\newcommand{\skolemfunction}{\textsf{skolem}}
Let $K$ be the set of kings in $\modelM$.
Let $\skolemfunction_i$ be Skolem function for $\forall x \exists y \beta_i(x, y)$.

\begin{definition}
The \emph{royal court} is:
$$C := K \union \bigcup_i	\set{\skolemfunction_i(K)}$$

\begin{center}
\faCrown ~ \textcolor{red}{\faCrown} ~ $\skolemfunction_1(\text{\faCrown})$ ~ $\skolemfunction_2(\text{\faCrown})$ ~ $\skolemfunction_1(\text{\textcolor{red}\faCrown})$ ~ $\skolemfunction_2(\text{\textcolor{red}\faCrown})$ ~ $\dots$
\end{center}
\end{definition}

\newcommand{\domain}{D}

\newcommand{\othertypes}{\mathbb T}
Let $\othertypes$ be the 1-types appearing in $\modelM$ but not taken by a king:

$$\othertypes := \set{\text{1-types in $\modelM$}} \setminus \set{\text{1-types of some king in $\modelM$}}.$$

The domain of $\modelM'$ is:

$$\domain' := C \union D \union E \union F$$

where $D := \set{d_{it} \suchthat i=1..m, t \in \othertypes}$, $E := \set{e_{it} \suchthat i=1..m, t \in \othertypes}$, $F := \set{f_{it} \suchthat i=1..m, t \in \othertypes}$.

The interpretation of predicates are defined by the 1-types and 2-types we assign in $\modelM'$ to elements in $D'$ and pairs in $D'\times D'$.
\begin{enumerate}
\item \textbf{Definition of 1-types in $\modelM'$. } 
\begin{itemize}
\item The 1-type of an element of $C$ is its 1-type in $\modelM$.
\item The 1-type of $d_{it}, e_{it}, f_{it}$ is $t$.
\end{itemize}

\item \textbf{Definition of 2-types in $\modelM'$. } 
\begin{itemize}
\item 2-types for satisfying the $\beta_i$.

\begin{itemize}
\item For all kings $k \in K$, the 2-type of $(k,\skolemfunction_i(k))$ is just imported from $\modelM$.
\item For all $c \in C \setminus K$, if $\skolemfunction_i(c) \in C$, just import the 2-type of $(c,\skolemfunction_i(c))$ from $\modelM$.
If  $\skolemfunction_i(c) \not \in C$, the 2-type $(c,d_{i,t_c})$ in $\modelM'$ is $t_{c,\skolemfunction_i(c)}$.
\item Consider an element $d_{it} \in D$. Let $a$ be an element in $\modelM$ such that $t_a = t$. If $\skolemfunction_i(a)$ is a king, then $\skolemfunction_i(a) = \skolemfunction_i(d_{it})$. The 2-type of $(d_{it}, \skolemfunction_i(d_{it}))$ is $t_{a,\skolemfunction_i(a)}$. If $\skolemfunction_i(a)$ is not a king, $t_{\skolemfunction_i(a)} \in \othertypes$. The 2-type of $(d_{it}, e_{it_{\skolemfunction_i(a)}})$ is $t_{a,\skolemfunction_i(a)}$.
\item Same for $E,F$ in place of $D,E$. Same for $F,D$ in place of $E,F$.
\end{itemize}
\item Finally, for all elements $(e, e') \in D'^2$ for which we did not assign a 2-type yet, consider a pair $(a, a') \in D^2$ such that the 1-type of $e$ and $e'$ in $\modelM'$ are respectively $t_a$ and $t_{a'}$. We say that the 2-type of $(e, e')$ is $t_{a, a'}$.

\end{itemize}
\end{enumerate}

It remains to check that $\modelM' \models tr(\phi)$. This is left to the reader.

\newcommand{\settiles}{\mathbb T}
\newcommand{\formulatext}[1]{\colorbox{blue!10!white}{\text{#1}}}

\fbox{Lower bound}

Lower bound can be obtained from the NEXPTIME-hardness, see \cite{DBLP:conf/lam/Furer83} and \cite{DBLP:journals/bsl/GradelKV97}. We reproduce here the lower bound proof to avoid the reader to navigate throw the different papers.
To this aim, we give a reduction from the tiling problem with Wang tiles of a $2^n \times 2^n$-torus where $n$ is given in unary, and $\settiles$ is the set of tiles, and a give tile seed $t_0$.
We construct a $\fotwo$-formula $\phi$ as follows. A variables (eg. $x, y$) denotes a position of a cell in the torus.
We introduce unary predicates $X_i(x)$ for $i=0..n-1$ that says that the $i$-th bit of the $X$-coordinate is 1. Same for $Y_i(y)$ for the $Y$-coordinate. We introduce also $T_t(x)$ that says that tile $t$ is at $x$.
Before defining $\phi$ we introduce the macros:
\begin{itemize}
\item $Hsucc(x, y) := \formulatext{according to the $X_i(.)$ and $Y_i(.)$, $y$ is the next cell at right of $x$}$
\item $Vsucc(x, y) := \formulatext{according to the $X_i(.)$ and $Y_i(.)$, $y$ is the next cell at the top of $x$}$
\item $eq(x, y) := Heq(x, y) \land Veq(x, y)$
\end{itemize}

Now formula $\phi$ is the conjunction of the following formulas:

\begin{enumerate}
\item $\forall x \formulatext{there is a unique $t$ such that $T_t(x)$}$

\item same cell is really same cell: $\forall x \lbigand_{t \in \settiles} (T_t(x) \limply \forall y, eq(x, y) \limply T_t(y))$

\item $\exists x \formulatext{$x$ coordinate is $(0, 0)$}$

\item $\forall x \exists y Vsucc(x, y)$
\item $\forall x \exists y Hsucc(x, y)$

\item $\forall x \forall y (Hsucc(x, y) \limply \lbigor_{t, t' \text{horizontally compatible}} T_t(x)\land T_{t'}(y))$

\item same vertically

\end{enumerate}

\end{proof}

\begin{remark}
Five year later, Etessami et al. \cite{DBLP:journals/iandc/EtessamiVW02} have studed FO$_2$ on finite words and $\omega$-words. The corresponding satisfiability problem is also NEXPTIME-complete.
\end{remark}

More generally, $FO_k$ is the fragment of FO of formulas with at most $k$ variables.

\section{First-order logic with counting}

\index{counting}

We replace the quantification $\exists x \phi$ by $\exists^{\geq k}x \phi$ (there are at least $k$ elements $x$ such that $\phi$ holds). The obtained logic is called \emph{first-order logic with counting} (FOC).

\begin{definition}
We define $G, \lambda \models \exists^{\geq k}x \phi $ as follows:
\begin{align*}
\text{ there exists $u_1, \dots, u_k \in V$ all distinct such that for all $i = 1..k$, we have } G, \lambda[x := u_i] \models \phi.
\end{align*}
\end{definition}

We can thus define also the following macros for respectively "there are at most $k$ elements $x$ such that $\phi$ holds" and "there are exactly $k$ elements $x$ such that $\phi$ holds":
\begin{align*}
\exists^{\leq k} \phi & := \lnot \exists^{\geq k+1}x \phi \\
\exists^{= k} \phi & := \exists^{\geq k}x \phi \land \exists^{\leq k} \phi
\end{align*}

\begin{proposition}
FOC and FO (with equality) have the same expressivity.
\end{proposition}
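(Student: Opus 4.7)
The plan is to prove mutual inclusion between the two expressivities. Both directions are obtained by an effective syntactic translation, so the equivalence holds formula-by-formula (which is stronger than equi-expressivity at the level of definable classes of structures).

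For the inclusion $\mathrm{FO} \subseteq \mathrm{FOC}$, I would simply observe that every FO formula is already an FOC formula, using the identification $\exists x\,\phi \equiv \exists^{\geq 1} x\,\phi$. Conjunction, negation, atomic formulas and equality are shared by the two syntaxes, so an easy induction on the FO formula structure gives the claim; no semantic argument is needed beyond the definition of $\exists^{\geq 1}$ unfolded from the definition of $\models$ for $\exists^{\geq k}$.

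For the inclusion $\mathrm{FOC} \subseteq \mathrm{FO}$ (with equality) I would define a translation $\mathit{tr}$ by induction on formulas, commuting with the Boolean connectives and with $\forall$, and acting on the counting quantifier by
\[
\mathit{tr}(\exists^{\geq k} x\,\phi) \;:=\; \exists x_1 \cdots \exists x_k \Bigl(\bigwedge_{1 \leq i < j \leq k} x_i \neq x_j \;\land\; \bigwedge_{i=1}^{k} \mathit{tr}(\phi)[x := x_i]\Bigr),
\]
where $x_1,\dots,x_k$ are chosen fresh (not appearing free or bound in $\mathit{tr}(\phi)$). The correctness, i.e.\ $G,\lambda \models \phi$ iff $G,\lambda \models \mathit{tr}(\phi)$, is proven by structural induction on $\phi$; the only interesting case is $\exists^{\geq k} x\,\phi$, where the left-to-right direction picks the $k$ witnesses $u_1,\dots,u_k$ supplied by the semantics of $\exists^{\geq k}$, and the right-to-left direction reads them off from the Skolem witnesses guaranteed by the FO formula, using the disequalities to see that they are pairwise distinct.

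The main (minor) obstacle is the variable-management bookkeeping: since FO allows any number of variables, we can always rename bound variables and introduce $k$ fresh ones, but we must be explicit that $\mathit{tr}$ capture-avoidingly substitutes $x$ by $x_i$ inside $\mathit{tr}(\phi)$. I would also note that this translation typically blows up the number of variables, so the same proof does \emph{not} show that $\mathrm{FOC}_2$ is contained in $\mathrm{FO}_2$; this subtlety motivates the separate study of $\mathrm{FOC}_2$ in the sequel.
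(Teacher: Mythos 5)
Your proposal is correct and matches the paper's proof, whose entire content is the same rewriting of $\exists^{\geq k}x\,\phi$ into $\exists x_1 \dots \exists x_k (\bigwedge_{i<j} x_i \neq x_j \land \bigwedge_i \phi(x_i))$; you merely add the (trivial) converse inclusion via $\exists = \exists^{\geq 1}$ and the capture-avoidance bookkeeping that the paper leaves implicit. The closing remark about the variable blow-up and why this does not place $\mathrm{FOC}_2$ inside $\mathrm{FO}_2$ is a worthwhile observation consistent with the paper's subsequent discussion.
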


\begin{proof}
$\exists^{\geq k}x \phi$ is rewritten in $$\exists x_1 \dots \exists x_k \left( 
\lbigand_{i<j} x_i \neq x_j ~~ \land ~~ \lbigand_i \phi(x_i)\right).$$
\end{proof}

However, FOC is interesting because it can lead to interesting fragments such as $FOC_2$ which is the two-variable fragment of FOC.

\begin{theorem}\cite{DBLP:conf/csl/Pratt-Hartmann14}
$FOC_2$ is NEXPTIME-complete.
\end{theorem}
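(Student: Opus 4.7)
The NEXPTIME lower bound is immediate: since $\fotwo$ (with equality) embeds syntactically into $\foctwo$ by reading $\exists x\,\phi$ as $\exists^{\geq 1} x\,\phi$, the torus tiling reduction used for the previous theorem carries over verbatim. The substantive content is the matching NEXPTIME upper bound, and the main difficulty is that the thresholds in counting quantifiers are encoded in binary, so a naive witness model could a priori have doubly-exponential size; the goal is to argue that singly-exponential witnesses always suffice.

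The plan is to proceed in two stages: first reduce to a Scott-style counting normal form, then convert satisfiability into a non-negative integer linear program whose solutions describe the statistics of a model. First I would rewrite any $\foctwo$-sentence $\phi$ into an equisatisfiable sentence of the shape
\[
\forall x\,\forall y\,\alpha(x,y)\ \wedge\ \bigwedge_{i=1}^{m} \forall x\,\exists^{\sim_i C_i} y\,(x \neq y \wedge \beta_i(x,y)),
\]
with $\alpha,\beta_i$ quantifier-free, $\sim_i \in \set{\geq,\leq,=}$ and $C_i$ given in binary, by the same Tseitin-style reification used for $\fotwo$ (a fresh predicate $\istrue{\psi}$ for every subformula, with a defining conjunct whose outer quantifier prefix is either $\forall\forall$ or $\forall \exists^{\sim C}$). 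This step increases the formula size linearly and preserves the bit length of the thresholds.

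Next I would analyse models in terms of 1-types and 2-types over the signature of the normal form: there are at most exponentially many 1-types $\pi_1,\dots,\pi_N$ and exponentially many 2-types $\tau_1,\dots,\tau_M$. To a model $\modelM$ one associates non-negative integers $n_\pi = \card{\set{a \in \modelM \suchthat a \text{ has 1-type } \pi}}$ and, for each ordered triple $(\pi,\pi',\tau)$ compatible with $\alpha$, a count $e_{\pi,\pi',\tau}$ of ordered pairs of that kind. The universal conjunct forces $e_{\pi,\pi',\tau}=0$ whenever $\tau$ violates $\alpha$; each counting conjunct $\forall x\,\exists^{\sim_i C_i} y\,\beta_i$ becomes, for each $\pi$, a linear (in)equality expressing that a $\pi$-element has exactly (at least, at most) $C_i$ out-neighbours of a $\beta_i$-compatible 2-type, i.e.\ $\sum_{\pi',\tau \in B_i} e_{\pi,\pi',\tau} \sim_i C_i \cdot n_\pi$. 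One also imposes the symmetry constraint $e_{\pi,\pi',\tau} = e_{\pi',\pi,\tau^{-1}}$ and, on top, ``king'' conditions that handle 1-types realised by a single element (as in the $\fotwo$ proof) so that self-loops and small classes can be witnessed exactly. Satisfiability of $\phi$ is then equivalent to feasibility of this integer linear system.

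Finally, I would invoke the classical small-solution property of integer programming (Borosh--Treybig/Papadimitriou): a feasible system with integer coefficients of bit length $L$ in $k$ variables has a feasible solution with entries of bit length polynomial in $L+k$. Here $k$ is exponential in $\formulalength \phi$ and the coefficients have polynomial bit length, so a feasible system admits a solution with entries of singly exponential magnitude, yielding a model of singly exponential size. The NEXPTIME decision procedure guesses this model (equivalently, the vectors $(n_\pi),(e_{\pi,\pi',\tau})$ together with the polynomially many king witnesses) and verifies the finitely many linear constraints in time polynomial in the model size. The main obstacle in carrying this out rigorously is the bookkeeping around kings and around the realisation of star types: one must check that the aggregate counts $e_{\pi,\pi',\tau}$ can actually be arranged as a concrete graph of 2-types without inadvertently identifying two witnesses that the counting quantifiers required to be distinct; everything else is a counting refinement of the $\fotwo$ argument already presented.
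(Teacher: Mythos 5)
First, note that the paper does not actually prove this theorem: it is stated with a citation to Pratt-Hartmann and no argument is given, so there is no in-paper proof to compare yours against. Your lower bound remark is fine, and your overall plan (counting Scott normal form, then types, then an integer linear system, then the Borosh--Treybig/Papadimitriou small-solution bound to get singly exponential witnesses and hence a NEXPTIME guess-and-check) is the right family of ideas and is broadly how the literature proceeds.

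However, as written the reduction to integer programming has a genuine gap, and it is not mere bookkeeping. Your constraints $\sum_{\pi',\tau \in B_i} e_{\pi,\pi',\tau} \sim_i C_i \cdot n_\pi$ only control the \emph{aggregate} number of $\beta_i$-edges leaving the class of $\pi$-elements, whereas the conjunct $\forall x\,\exists^{=C_i}y\,\beta_i(x,y)$ is a \emph{per-element} constraint. With $C_i=1$ and $n_\pi=2$ the system accepts the distribution where one $\pi$-element has two $\beta_i$-witnesses and the other has none, so feasibility of your system does not imply satisfiability; the equivalence you assert between $\phi$ and the ILP is false in the $\geq$/$=$ cases. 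Repairing this is exactly where the real work lies: one must count elements by their full local profile (``star type'': the 1-type together with the multiplicities, up to the binary-encoded thresholds, of each 2-type the element realises) rather than counting edges by 1-type pairs, and one must then show both that the resulting system is still only exponentially large despite the thresholds being exponential in value, and that a feasible solution can be realised as an actual graph (a combinatorial construction in the spirit of the royal-court argument for $\fotwo$, but considerably more delicate). You flag this realisation step as the ``main obstacle,'' which is honest, but it is the core of Pratt-Hartmann's proof rather than a refinement of the $\fotwo$ argument, so the proposal in its current form does not establish the upper bound.
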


\begin{proposition}
\cite{DBLP:journals/combinatorica/CaiFI92},
restated in \cite{DBLP:conf/iclr/BarceloKM0RS20})
We have:

$\algocolorrefinement(G, u) = \algocolorrefinement(G, v)$ iff for all $\phi(x) \in \foctwo$, ($G, u \models \phi$ iff $G, v \models \phi$).
\end{proposition}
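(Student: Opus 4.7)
The plan is to establish the two implications separately.

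For the right-to-left direction, I proceed by induction on $t$ to construct, for every color $c$ that color refinement can produce at round $t$, a formula $\phi_{t,c}(x) \in \foctwo$ such that $G, u \models \phi_{t,c}$ iff $\algocolorrefinementattime{t}(G, u) = c$. The base case $t = 0$ is immediate: the initial color is the label $\labellinginitial(u) \in \set{0,1}^d$, which is expressible as a conjunction of atoms and negated atoms in $x$. For the inductive step, a color at round $t+1$ has the form $(c_0, M)$ where $c_0$ is a color at round $t$ and $M$ is a multiset of colors at round $t$; the desired formula is
$$\phi_{t+1, (c_0, M)}(x) := \phi_{t, c_0}(x) \land \bigwedge_{c'} \exists^{= M(c')} y \, \bigl(E(x,y) \land \widetilde{\phi_{t,c'}}(y)\bigr)$$
where $\widetilde{\phi_{t, c'}}(y)$ is obtained from $\phi_{t,c'}(x)$ by swapping the two variables $x$ and $y$. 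This swap is legal precisely because $\foctwo$ has only two variables, so no third variable is required. Consequently, if $u$ and $v$ are $\foctwo$-equivalent then they agree on every $\phi_{t,c}$, forcing $\algocolorrefinementattime{t}(G, u) = \algocolorrefinementattime{t}(G, v)$ at every round, and in particular $\algocolorrefinement(G,u) = \algocolorrefinement(G,v)$.

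For the left-to-right direction, I argue by structural induction on the formula, using as central tool a bijection lemma: if $\algocolorrefinementattime{t+1}(G, u) = \algocolorrefinementattime{t+1}(G, v)$, then equality of the multisets $\multiset{\algocolorrefinementattime{t}(G, w) \suchthat u E w}$ and $\multiset{\algocolorrefinementattime{t}(G, w) \suchthat v E w}$ yields a bijection between the successors of $u$ and those of $v$ preserving $\algocolorrefinementattime{t}$-colors. Since $u$ and $v$ lie in the same graph $G$, the multiset of non-successor colors also agrees (it is the global histogram minus the successor histogram), so this bijection extends to a permutation $\sigma : V \to V$ with $\sigma(u) = v$ preserving both $\algocolorrefinementattime{t}$-colors and the predicate ``is a successor of $u$ versus $v$''. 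To carry the induction through counting quantifiers, I strengthen the hypothesis for formulas $\phi(x, y) \in \foctwo$ with two free variables: under the assumptions $\algocolorrefinementattime{t}(G, u_i) = \algocolorrefinementattime{t}(G, v_i)$ for $i \in \set{1,2}$ together with matching atomic relations between the pairs $(u_1, u_2)$ and $(v_1, v_2)$, one concludes $G, u_1, u_2 \models \phi$ iff $G, v_1, v_2 \models \phi$. The atomic and Boolean cases are routine, and the counting case $\exists^{\geq k} y \, \psi(x, y)$ is handled by transporting witnesses through $\sigma$.

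The main obstacle I anticipate is calibrating the round index with the quantifier depth while propagating the strengthened hypothesis through counting quantifiers. After such a quantifier rebinds $y$, one must verify that every $w$ satisfying $\psi(u, w)$ is matched by a witness $\sigma(w)$ satisfying $\psi(v, \sigma(w))$, with identical atomic relations inside the pair and with $\algocolorrefinementattime{t}$-color agreement one round lower than the hypothesis provided. The bijection $\sigma$ secures the $E$-relations with the $x$-pebble, but does not preserve $\algocolorrefinementattime{t+1}$-color in general; this forces the round index to drop by one at each nesting, which is exactly why the equivalence must be stated under the stable coloring $\algocolorrefinement$ rather than at a fixed round.
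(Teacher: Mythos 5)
The paper does not actually prove this proposition (it only cites Cai--F\"urer--Immerman and Barcel\'o et al.); the closest argument it contains is the proof of the analogous GML theorem, which proceeds by induction on rounds using the characteristic formulas $\phi_{t,c}$. Your right-to-left direction is exactly the $\foctwo$ transcription of that argument, with the variable-swapping trick correctly justifying membership in the two-variable fragment; just make sure the conjunction $\bigwedge_{c'}$ ranges over \emph{all} round-$t$ colours realized in the graph (with exact count $0$ for colours outside $M$), otherwise successors of unlisted colours are not excluded --- this is the role of the extra $\lbox\bigvee$ conjunct in the paper's GML formula. Your left-to-right direction goes beyond what the GML proof needs, and rightly so: $\foctwo$ quantifiers range over all vertices, not just successors, so the successor bijection must be extended to a global permutation, and your observation that the same-graph hypothesis makes the non-successor colour histograms agree is the key extra ingredient. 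Your closing remark about the round index dropping at each quantifier nesting, and the consequent need to state the result for the stable colouring, is also correct.

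The one genuine gap is in discharging the strengthened induction hypothesis in the counting case for the \emph{directed} graphs this course works with. Your permutation $\sigma$ preserves round-$t$ colours and the predicate ``is a successor of $u$ versus $v$'', but the atomic type of a pair $(u,w)$ in $\foctwo$ also records $wEu$ (is $w$ a \emph{predecessor}) and $w=u$, and successor-based colour refinement controls neither: with constant labels, a vertex carrying a single self-loop and a vertex on a $2$-cycle receive the same stable colour yet are separated by $E(x,x)$, and a vertex with no in-edges is separated from one with an in-edge by $\exists y\, E(y,x)$ even when their successor structures coincide. So the transport of witnesses through $\sigma$ only works for undirected loop-free graphs (the setting of the cited sources) or for a refinement that also aggregates over predecessors. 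State that hypothesis explicitly; under it $uEw$ iff $wEu$, the diagonal case $w=u$ is handled by applying the two-free-variable hypothesis to the pairs $(u,u)$ and $(v,v)$, and your argument closes.
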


The following proposition seems to be folklore.

\begin{proposition}
We have:

$\multiset{\algocolorrefinement(G)} = \multiset{\algocolorrefinement(G'}$ iff for all $\phi \in \foctwo$, ($G \models \phi$ iff $G' \models \phi$).
\end{proposition}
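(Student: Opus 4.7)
The plan is to prove each direction separately by reducing to the pointed version just stated.

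For the direction $(\Rightarrow)$ (histograms equal $\Rightarrow$ every $\foctwo$-sentence agrees on $G$ and $G'$), I would strengthen the statement and prove by structural induction on an $\foctwo$-formula $\phi$ with free variables in $\{x,y\}$ that for every pair of assignments $\sigma \colon \mathrm{free}(\phi) \to V$, $\sigma' \colon \mathrm{free}(\phi) \to V'$ such that (i) colours match on each free variable $z$, i.e.\ $\algocolorrefinement(G,\sigma(z)) = \algocolorrefinement(G',\sigma'(z))$, (ii) the edge bits $E(\sigma(x),\sigma(y))$ and $E(\sigma(y),\sigma(x))$ match their primed counterparts when both variables are free, and (iii) $\sigma(x)=\sigma(y) \iff \sigma'(x)=\sigma'(y)$, one has $G,\sigma \models \phi$ iff $G',\sigma' \models \phi$. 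Sentences correspond to the empty assignment, for which (i)--(iii) are vacuous, and the conclusion follows. The only non-routine case is $\exists^{\geq k} y\, \psi(x,y)$: fixing $u\in V$ and $v\in V'$ with matching colour, I partition candidate witnesses $y$ by the tuple $(\algocolorrefinement(\cdot,y), E(u,y), E(y,u), \mathbf{1}[u=y])$ and reduce the truth of $\psi$ to membership in a cell via the inductive hypothesis. Each cell has the same size on both sides: the multisets of forward and backward neighbour colours of $u$ are recorded in $\algocolorrefinement(G,u)$ and hence equal those of $v$; the diagonal contributes $0$ or $1$; and the count of non-neighbours of each colour is obtained by subtracting the neighbour multisets from the global histogram, which is equal in $G$ and $G'$ by hypothesis.

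For the direction $(\Leftarrow)$ (histograms differ $\Rightarrow$ some $\foctwo$-sentence separates), I would build uniformly, for every stable colour $c$, an $\foctwo$-formula $\phi_c(x)$ that holds at $u$ in any graph $H$ iff $\algocolorrefinement(H,u)=c$, by induction on the refinement depth: $\phi_c^{(0)}$ encodes the initial label, and
\[
  \phi_{(c,M)}^{(t+1)}(x) \;:=\; \phi_c^{(t)}(x) \;\wedge\; \bigwedge_{c'} \exists^{= M(c')} y\; \bigl(E(x,y) \wedge \phi_{c'}^{(t)}(y)\bigr),
\]
where $M(c')$ is the multiplicity of $c'$ in the multiset $M$; only two variables are used under the standard $\foctwo$ alternation of $x$ and $y$ at successive depths. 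Setting $\phi_c := \phi_c^{(t^{\star})}$ at the stabilization time $t^{\star}$, and picking a colour $c$ with multiplicity $k$ in $G$ and $k'\neq k$ in $G'$, the sentence $\exists^{= k} x\; \phi_c(x)$ separates the two graphs.

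The main obstacle will be clause (ii) of the inductive set-up in direction $(\Rightarrow)$: color refinement assigns colours to vertices, not to pairs, so one must argue that $\foctwo$-formulas genuinely only exploit the constant-sized pair-invariants (two edge bits and an equality bit) of their free variables. Once this is established, the counting-quantifier case reduces to multiset arithmetic using the global histogram equality, and the result falls out.
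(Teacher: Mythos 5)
The paper offers no proof of this proposition at all --- it is explicitly labelled folklore --- so there is nothing to compare your argument against; I can only assess it on its own merits. What you propose is the standard Cai--F\"urer--Immerman-style cell-counting argument, and its overall architecture is right: the $(\Leftarrow)$ direction via colour-defining formulas $\phi_c(x)$ plus a counting sentence $\exists^{=k}x\,\phi_c(x)$, and the $(\Rightarrow)$ direction via induction on formulas with an invariant on assignments, where the counting-quantifier case partitions witnesses into cells and the global histogram hypothesis is used exactly where it must be, namely to equate the numbers of \emph{non}-neighbours of each colour. That last point is the one genuinely new ingredient over the pointed version stated just before, and you identify it correctly.

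Two points need to be nailed down. First, the obstacle you flag in clause (ii) is real and is not resolvable as stated: the paper's colour refinement records only the multiset of \emph{successor} colours ($\multiset{\algocolorrefinementattime{t}(G,v) \suchthat uEv}$), so the sizes of your backward-edge cells $(c,\,\cdot\,,E(y,u)=1,\cdot)$ are simply not determined by $\algocolorrefinement(G,u)$, and in fact the proposition (like the pointed one it relies on) is false for directed graphs under this definition --- take three unlabelled vertices with the single edge $c\to a$; then $a$ and $b$ receive the same colour at every round but $\exists^{\geq 1}y\,E(y,x)$ separates them. The statement must be read for undirected (loop-free) graphs, in which case the two edge bits collapse to one and every cell size is determined by the neighbour-colour multiset, the vertex's own colour, and the global histogram; your argument then goes through. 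Second, in the $(\Leftarrow)$ direction the conjunction $\bigwedge_{c'}$ in $\phi^{(t+1)}_{(c,M)}$ must range over \emph{all} level-$t$ colours realized in $G$ or $G'$, including those with $M(c')=0$; if it ranges only over $c'\in M$, the formula is satisfied by vertices with extra neighbours of colours outside $M$, $\phi_c$ over-approximates the colour class, and the separating count can fail. (This is the role played by the $\lbox\lbigor_{c''\in M}\phi_{\round-1,c''}$ conjunct in the paper's analogous GML construction.) With these two repairs the proof is complete.
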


\section{Modal logic}

\index{modal logic}

\subsection{Syntax}
Modal logic extends \indexemph{propositional logic} with special operators $\lbox$ and $\ldiamond$ called \emph{modalities}\index{modality}. In the standard reading, the construction $\lbox \phi$ is read as $\phi$ is necessarily true.

\begin{definition}[language of basic modal logic]
A modal formula is a construction generated by the following rule:

$$\phi ::= \bottom \mid p \mid \lnot \phi \mid \phi \lor \phi \mid \ldiamond \phi$$

where $p$ ranges over the set of atomic propositions.
\end{definition}

\newcommand{\modaldepth}{md}
\begin{definition}
[modal depth]
Modal depth $\modaldepth(\phi)$ is defined by induction as follows:

\begin{align*}
\modaldepth (p) &  = p \\
\modaldepth (\lnot \phi) & = \modaldepth (\phi) \\
\modaldepth (\phi \lor \psi)&  = \max(\modaldepth(\phi), \modaldepth(\psi)) \\
\modaldepth(\ldiamond \phi) & = 1 + \modaldepth(\phi)
\end{align*}
\end{definition}

\subsection{Semantics}

Recall that a Kripke model is just a labelled graph.
%
%
%
A \emph{pointed Kripke model} is a pair $G, u$ where $G = (V, E, \labellinginitial)$ is a Kripke model and $v$ is a world in $V$.

\index{truth condition}
\begin{definition}[truth conditions]
Given $G = (V, E, \labellinginitial)$, $u \in V$, $\phi \in \logiclanguage{}$ we define $G, u \models \phi$ by structural induction over $\phi$:

\begin{itemize}
\item $G, u \not \models \bottom$;
\item $G, u  \models p$ iff $\labellinginitial(p) = 1$;
\item $G, u \models \lnot \phi$ iff $G, u \not \models \phi$;
\item $G, u \models \phi \lor \psi$ iff $G, u \models \phi$ or $G, u \models \psi$;
\item $G, u \models \ldiamond \phi$ iff there is a $v \in E(u)$ we have $G, v \models \phi$.
\end{itemize}
\end{definition}

We also introduce the dual modal construction $\lbox \phi$ which is equivalent to $\lnot \lbox \lnot \phi$.

\begin{example}
Construct a pointed graph satisfying $\lbox \ldiamond p$. Another one satisfying
$\ldiamond \lbox p$.
\end{example}

\subsection{Standard translation}

The standard translation consists in translating any modal formula $\phi$ into a first-order formula $\phi'(x)$ with one single free variable.

\begin{align*}
tr_x(p) & ::= p(x) \\
tr_x(\ldiamond \phi) & ::= \exists y xEy \land tr_y(\phi)
\end{align*}

It is interesting to note that ML is a fragment of \fotwo. 

\section{Graded Modal logic}
\index{graded modal logic}

\subsection{Definition}

Graded modal logic is like modal logic but operator $\ldiamond^{\geq k} \phi$. Its semantics is:

\begin{align*}
G, u \models \ldiamond^{\geq k} \phi & \text{ there are $k$ distinct $v_1, \dots, v_k$ such that for all $i = 1..k$ $u E v_i$ and $G, u_i \models \phi$}
\end{align*}

In the same way, GML is a fragment of $\foctwo$:

\begin{align*}
tr_x(p) & ::= p(x) \\
tr_x(\ldiamond^{\geq k} \phi) & ::= \exists^{\geq k} y xEy \land tr_y(\phi)
\end{align*}

At the end, we will know how to solve the satisfiability problem of GML. But let us start to tackle the satisfiability problem of K.

\subsection{Link with colour refinement}

\index{colour refinement}
\newcommand{\round}{t}

\begin{theorem}
\label{theorem:roundcolorGMLformulaexists}
For all rounds $\round$, 
for all colours $c$, there exists a GML formula $\phi_{\round, c}$ of modal depth $\round$ such that 
\begin{center}
$\algocolorrefinementattime{\round}(G, u) = c$ iff 
$G, u \models \phi_{\round, c}$.
\end{center}
\end{theorem}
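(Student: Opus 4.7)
The plan is to proceed by induction on the round $\round$, constructing $\phi_{\round, c}$ explicitly.

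For the base case $\round = 0$, recall that $\algocolorrefinementattime{0}(G, u) = \labellinginitial(u) \in \set{0,1}^d$, so each colour is simply a Boolean valuation of the atomic propositions. I would define $\phi_{0, c}$ as the conjunction of the literals corresponding to this valuation:
\[
\phi_{0, c} := \bigand_{p :\, c_p = 1} p \;\land\; \bigand_{p :\, c_p = 0} \lnot p,
\]
which clearly has modal depth $0$ and characterises colour $c$.

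For the inductive step, suppose the result holds at round $\round$. A colour $c$ at round $\round+1$ has the form $c = (c', M)$ where $c'$ is a colour at round $\round$ and $M$ is a multiset of colours at round $\round$. Write $\mathrm{supp}(M) = \set{c_1, \dots, c_r}$ with multiplicities $k_1, \dots, k_r$, so $|M| = k_1 + \dots + k_r$. By induction, for each $i$ the formula $\phi_{\round, c_i}$ exists and has modal depth $\round$, and similarly for $\phi_{\round, c'}$. I would then set
\[
\phi_{\round+1, c} := \phi_{\round, c'} \;\land\; \lnot \ldiamond^{\geq |M|+1} \top \;\land\; \bigand_{i = 1}^{r} \big(\ldiamond^{\geq k_i} \phi_{\round, c_i} \land \lnot \ldiamond^{\geq k_i + 1} \phi_{\round, c_i}\big).
\]
The first conjunct forces $u$ to have the right previous colour; the middle conjunct bounds the total number of successors by $|M|$; each remaining conjunct says that exactly $k_i$ successors have colour $c_i$. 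Since the colours at round $\round$ are pairwise disjoint (each vertex has a unique colour), if exactly $k_i$ successors have colour $c_i$ for every $i$, then $\sum k_i = |M|$ successors are already accounted for, so the bound $\lnot \ldiamond^{\geq |M|+1} \top$ forces every successor to have some colour in $\mathrm{supp}(M)$ with the correct multiplicity. The modal depth is $\max(\round, 1 + \round) = \round + 1$, as required.

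Correctness then follows: $G, u \models \phi_{\round+1, c}$ iff $\algocolorrefinementattime{\round}(G, u) = c'$ and $\multiset{\algocolorrefinementattime{\round}(G, v) \suchthat v \in E(u)} = M$, which is exactly $\algocolorrefinementattime{\round+1}(G, u) = (c', M) = c$. The only mildly delicate point—and the one I expect to be the main obstacle to present clearly—is the multiset matching: one must argue that counting each colour in $\mathrm{supp}(M)$ with graded diamonds, together with the upper bound on the total number of successors, is enough to rule out a successor of some unlisted colour, which is exactly the disjointness argument above.
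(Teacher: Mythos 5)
Your proof is correct and follows essentially the same route as the paper's: induction on the round, with $\phi_{\round+1,c}$ built from $\phi_{\round,c'}$ together with exact-count graded diamonds $\ldiamond^{=k_i}\phi_{\round,c_i}$ for each colour occurring in the multiset $M$. The only difference is the conjunct that excludes successors of unlisted colours: the paper uses $\lbox \lbigor_{c'' \in M}\phi_{\round,c''}$, whereas you use the degree bound $\lnot\ldiamond^{\geq |M|+1}\top$ combined with the pairwise disjointness of the round-$\round$ colour classes; both are valid, and you correctly identify and justify the disjointness point on which your variant relies.
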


\begin{proof}
By induction on $\round$.

\fbox{Base case. } We take $\phi_{0, c}$ to be a Boolean formula that describes $c$.

\begin{example}
If $c = \columnvector{2 \\ -6.5}$, then we take $\phi_{0, c} = (x_1 = 2) \land (x_2 = -6.5)$.
\end{example}

\fbox{Inductive case}
Consider the color $c = (c', M)$ where $c'$ is a color of round $r-1$, and $M$ is a multiset of colors of round $r-1$ too.
We set:

$$\phi_{\round, c} := \phi_{\round-1, c'} ~~ \land ~~ \lbigand_{c'' \in M} \ldiamond^{=count(c'', M)} \phi_{\round-1, c''}~~ \land ~~ \lbox \lbigor_{c'' \in M} \phi_{\round-1, c''}.$$

where $count(c'', M)$ is the number of occurrences of $c''$ in $M$.

\end{proof}

\begin{example}
The colour $(blue, \multiset{red, red, red, green, green, green, green, green})$ is captured by the formula
$$blue \land \ldiamond^{=3} red \land \ldiamond^{=5} green \land \lbox(red \lor green).$$
\end{example}

Now, we state that $(G, u)$ and $(G, u')$ are $\algocolorrefinement$-indistiguishable iff they satisfy the same formulas of $GML$.

\begin{theorem}[\cite{DBLP:conf/lics/Grohe21}, p. 6, Th. V.10] We have:
\begin{center}
$\algocolorrefinement(G, u) = \algocolorrefinement(G', u')$ iff for all $\phi \in GML$, $G,u\models \phi$ iff $G', u' \models \phi$.
\end{center}
\end{theorem}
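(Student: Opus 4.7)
The plan is to prove the two directions separately, using the fact that the two preceding results already pair each color with a defining GML formula and each GML formula with a bounded modal depth.

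For the direction $(\Leftarrow)$, assume $G, u$ and $G', u'$ satisfy exactly the same GML formulas. Invoke Theorem~\ref{theorem:roundcolorGMLformulaexists}: for every round $t$ and every color $c$ there is a GML formula $\phi_{t, c}$ with $\algocolorrefinementattime{t}(H, x) = c$ iff $H, x \models \phi_{t, c}$. Applying this with $c := \algocolorrefinementattime{t}(G, u)$, we get $G, u \models \phi_{t, c}$, hence by hypothesis $G', u' \models \phi_{t, c}$, hence $\algocolorrefinementattime{t}(G', u') = c$. Since this holds for every $t$, in particular for any $t$ past both stabilization times, we conclude $\algocolorrefinement(G, u) = \algocolorrefinement(G', u')$.

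For the direction $(\Rightarrow)$, I would strengthen the statement and prove by structural induction on $\phi \in \GML$ that
$$\algocolorrefinementattime{\modaldepth(\phi)}(G, u) = \algocolorrefinementattime{\modaldepth(\phi)}(G', u') \text{ implies } (G, u \models \phi \text{ iff } G', u' \models \phi).$$
This is enough because the recursive tuple definition $\algocolorrefinementattime{t+1} = (\algocolorrefinementattime{t}, \text{multiset})$ ensures that equality of colors at any round $t$ forces equality at every smaller round, so $\algocolorrefinement(G, u) = \algocolorrefinement(G', u')$ implies the hypothesis at every modal depth. The base case ($\phi$ an atom or $\bottom$) follows from $\algocolorrefinementattime{0}(G, u) = \labellinginitial(u)$; the Boolean cases are immediate. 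For the modal case $\phi = \ldiamond^{\geq k}\psi$ with $\modaldepth(\phi) = 1 + \modaldepth(\psi)$, the color equality unpacks (by the definition of $\algocolorrefinementattime{\cdot + 1}$) to the multiset equality
$$\multiset{\algocolorrefinementattime{\modaldepth(\psi)}(G, v) \suchthat u E v} = \multiset{\algocolorrefinementattime{\modaldepth(\psi)}(G', v') \suchthat u' E' v'}.$$
This yields a bijection $\pi : E(u) \to E'(u')$ preserving the color at depth $\modaldepth(\psi)$, and the induction hypothesis applied pointwise to $\psi$ gives $G, v \models \psi$ iff $G', \pi(v) \models \psi$. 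Hence the cardinalities $|\{v \in E(u) : G, v \models \psi\}|$ and $|\{v' \in E'(u') : G', v' \models \psi\}|$ coincide, settling the modal case.

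Neither direction involves a real obstacle: both are mechanical given Theorem~\ref{theorem:roundcolorGMLformulaexists} and the definition of $\algocolorrefinementattime{\cdot}$. The only mildly delicate point is the interplay between the stabilization round and the comparison of colors across two different graphs; this is dissolved by observing that the nested-tuple form of colors is monotone in information, so equality at one round $t$ transfers to all rounds $\le t$, making the proof go through uniformly in $\modaldepth(\phi)$.
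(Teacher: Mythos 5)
Your proof is correct and follows essentially the same route as the paper's: the direction from GML-equivalence to colour equality uses the characteristic formulas $\phi_{t,c}$ of \Cref{theorem:roundcolorGMLformulaexists} exactly as in the paper's ($2 \Rightarrow 1$) step, and your modal case (multiset equality of successor colours giving a colour-preserving bijection between $E(u)$ and $E'(u')$, then the induction hypothesis giving equal counts of $\psi$-successors) is precisely the paper's ($1 \Rightarrow 2$) argument. The only difference is organizational --- you run a structural induction on $\phi$ indexed by modal depth where the paper runs a joint induction on the round $\round$ proving both implications at once --- and neither presentation is more rigorous than the other about comparing colours at rounds beyond the two stabilization times.
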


\begin{proof}
(Proof given in Appendix in \cite{DBLP:conf/lics/Grohe21})

We prove the following property $\propP(\round)$ by induction on $\round$.
\begin{enumerate}
\item Color refinement gives the same results to $G, u$ and $G', u'$ after $\round$ rounds: $\algocolorrefinementattime{\round}(G, u) = \algocolorrefinementattime{\round}(G', u')$.
\item $G, u$ and $G', u'$ satisfy the same formulas $\phi$ in $GML$ of modal depth at most $\round$.
\end{enumerate}

\fbox{Base case. }

\begin{center}
Color refinement gives the same results to $G, u$ and $G', u'$ after $0$ rounds

iff 

$G, u$ and $G', u'$ have the same labellings

iff

$G, u$ and $G', u'$ satisfy the same Boolean formulas

iff

$G, u$ and $G', u'$ satisfy the same formulas $\phi$ in $GML$ of modal depth at most $0$.
\end{center}

\fbox{Inductive case} Suppose $\propP(\round-1)$ and prove $\propP(\round)$.
\begin{itemize}
\item ($1 \Rightarrow 2$) Suppose 1. Consider a GML-formula $\phi$. The formula $\phi$ is a Boolean combination of atoms $(x_i = 1)$ or subformulas $\ldiamond^{\geq N} \psi$. First, $G, u$ and $G', u'$ satisfy the same atoms.
By $\propP(\round-1)$, for all colours $c$, either all successors of $u$ coloured by $c$ all satisfy $\psi$ or none of them. As both $u$ and $u'$ have the same number of successors of a given color, we have $G, u \models \ldiamond^{\geq N} \psi$ iff $G', u' \models \ldiamond^{\geq N} \psi$. 
To conclude, $G, u \models \phi$ iff $G', u' \models \phi$.

\item ($2 \Rightarrow 1$) We prove not 1 $\Rightarrow$ not 2. Suppose that the colors of $u$ and $u'$ after $\round$ rounds are different. Then let $c$ be the colours of $u$ after $\round$ rounds. We have $G, u \models \phi_{t, c}$ while $G', u' \not \models \phi_{\round, c}$ where $\phi_{\round, c}$ is given by \Cref{theorem:roundcolorGMLformulaexists}. Hence not 2.
%
%
%
%
%
%
%
%

\end{itemize}

\end{proof}

\subsection{Link with GNNs}

\index{graded modal logic}
\index{graph neural network}

\begin{proposition}[Prop. 4.1 \cite{DBLP:conf/iclr/BarceloKM0RS20}]
For all \GML-formula $\phi$, there is a GNN $\aGNN$ such that for all $G,u$ we have:
\begin{center}
$G, u \models \phi$ iff $\aGNN(G,u) = \true$.
\end{center}
\end{proposition}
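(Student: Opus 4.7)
The plan is to proceed by induction on the structure of $\phi$ and build a GNN whose layers progressively compute, coordinate by coordinate, the truth values of each subformula of $\phi$ at each vertex. The activation function $truncReLU$ is what makes this work: all intermediate values stay in $\{0,1\}$ if we set things up properly.

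First, I would fix $d$ large enough to have one coordinate per subformula of $\phi$ (plus extra coordinates for the input atoms). The goal is the following invariant: after an appropriate number of layers, the coordinate $x_\psi$ of $\labelling_t(u)$ equals $1$ if $G, u \models \psi$ and $0$ otherwise, for each subformula $\psi$ whose modal depth is compatible with $t$. Atoms are already available in $\labellinginitial$ (which takes values in $\{0,1\}^d$), so the base case is immediate. For each layer, I would preserve previously computed coordinates via the identity part of $A_i$ (and a zero column of $B_i$ for these coordinates), so that earlier truth values are not lost.

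For Boolean connectives I would use: $x_{\lnot\psi}(u) = truncReLU(1 - x_\psi(u))$ and $x_{\psi \lor \chi}(u) = truncReLU(x_\psi(u) + x_\chi(u))$. These are purely local, so the corresponding rows of $B_i$ are zero. For the graded modality $\ldiamond^{\geq k}\psi$, provided the neighbor labels already satisfy the invariant, the aggregation $\sum_{v \in E(u)} x_\psi(v)$ is exactly the number of successors of $u$ satisfying $\psi$, so
\[
x_{\ldiamond^{\geq k}\psi}(u) \;=\; truncReLU\!\left(\sum_{v \in E(u)} x_\psi(v) \;-\; (k-1)\right)
\]
equals $1$ iff $G, u \models \ldiamond^{\geq k}\psi$. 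This is realized by picking the right row of $B_i$ (the indicator of the coordinate $x_\psi$), a zero row in $A_i$, and bias $-(k-1)$.

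Ordering the layers is the delicate part, and where I expect most of the bookkeeping. I would process subformulas in increasing order of modal depth and, within the same modal depth, first compute all the graded-diamond subformulas (one modal step each) and then their Boolean combinations (local steps that can be stacked in subsequent layers, or collapsed into the following modal layer by adding linear combinations before the activation). This yields a number of layers $\numberoflayers$ bounded by $O(\modaldepth(\phi) + |\text{sub}(\phi)|)$. Finally, at the root, I would take the classification weights $\weightcolor{w}$ to be the indicator of the coordinate $x_\phi$ and $\weightcolor{b} = -\tfrac{1}{2}$, so that $\aGNN(G,u) = \mathtt{yes}$ iff $x_\phi(u) = 1$ iff $G, u \models \phi$. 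The main obstacle is not conceptual but technical: organizing the matrices $A_i, B_i$ so that (i) the right coordinates are updated at each layer, (ii) all other coordinates are passed through unchanged, and (iii) Boolean operations do not accidentally fire before their arguments have been computed.
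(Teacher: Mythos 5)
Your construction is correct and is essentially the proof the paper gives: the paper defers this proposition to the next chapter, where it proves the stronger result that every $K^\#$-formula (of which GML is a fragment, via $\ldiamond^{\geq k}\psi \equiv \#\psi \geq k$) translates into a truncReLU-GNN expression using exactly your gadgets — $\mathsf{truncReLU}(1-x)$ for negation, $\mathsf{truncReLU}(x+y-1)$ (dually $x+y$) for conjunction/disjunction, and $\mathsf{truncReLU}(\mathit{agg}(x_\psi)-(k-1))$ for the counting threshold. The only difference is presentational: the paper works compositionally with GNN-expressions, while you organize the same computation into explicit layers and matrices with one coordinate per subformula.
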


\begin{proof}
We postpone the proof to the next chapter, in which we prove a stronger result.
\end{proof}

\begin{proposition}[Prop. 4.2 \cite{DBLP:conf/iclr/BarceloKM0RS20}]
For all GNN $\aGNN$ that is FO-expressible, then there is a $\GML$-formula $\phi$ such that 

\begin{center}
$G, u \models \phi$ iff $\aGNN(G,u) = \true$.
\end{center}
\end{proposition}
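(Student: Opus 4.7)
The plan is to combine the expressive gap bounded above by color refinement with the classical correspondence between color refinement and GML, then upgrade a pointwise invariance into a single-formula equivalence. First I would unpack the hypothesis: $\aGNN$ is FO-expressible means there is an FO formula $\psi(x)$ such that $G,u \models \psi$ iff $\aGNN(G,u) = \true$. The earlier corollary $\algocolorrefinement(G) \finerthan \aGNNoutlayer{\numberoflayers}(G)$ tells us that $\aGNN$, hence also $\psi$, is $\algocolorrefinement$-invariant: whenever $\algocolorrefinement(G,u) = \algocolorrefinement(G',u')$, we have $G,u \models \psi$ iff $G',u' \models \psi$. By the preceding Grohe-style theorem that links $\algocolorrefinement$ to $\GML$-equivalence, this invariance can be restated as: $\psi$ is preserved under $\GML$-equivalence of pointed graphs.

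Next I would reduce the goal to a characterization result: any FO formula that is invariant under $\GML$-equivalence on pointed graphs is equivalent to a $\GML$ formula. Given such a characterization, applying it to $\psi$ yields the desired $\phi$. To produce $\phi$ concretely, I would use the formulas $\phi_{\round,c}$ provided by \Cref{theorem:roundcolorGMLformulaexists}: for a suitable depth $t$, each relevant $\algocolorrefinement$-color $c$ at round $t$ is captured by $\phi_{t,c} \in \GML$, and a finite disjunction $\phi := \bigvee_{c \in C_\psi} \phi_{t,c}$ over the colors $c$ realized by some $(G,u) \models \psi$ would be the candidate formula. Soundness of this disjunction follows immediately from $\algocolorrefinement$-invariance of $\psi$ together with the characterizing property of $\phi_{t,c}$.

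The main obstacle is to ensure the depth $t$ and the set $C_\psi$ can be chosen uniformly, so that the disjunction is finite and independent of $(G,u)$. For this I would argue by locality: since $\psi$ has a fixed quantifier rank $k$, a Gaifman/Hanf-style argument bounds the information $\psi$ can extract to an $r$-neighborhood for some $r$ depending only on $k$; combined with $\algocolorrefinement$-invariance, this forces $\psi$ to depend only on the color computed in at most $r$ rounds. Over the fixed feature alphabet $\set{0,1}^d$ and binary edge relation, finitely many rank-$k$ FO-types exist, which together with the locality bound cut down $C_\psi$ to a finite set. An alternative, more conceptual route would be to invoke directly the known expressibility theorem (de Rijke, Otto) stating that an FO formula invariant under counting bisimulation is equivalent to a $\GML$ formula; this is precisely the tool needed to close the argument without hand-crafting the disjunction, and is arguably the cleanest way to package the locality-plus-invariance reasoning sketched above.
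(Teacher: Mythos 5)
Your main route is essentially the one the paper intends (the proof is deferred to the end-of-chapter exercise): show the GNN's output is invariant under graded bisimulation / having isomorphic bounded-depth unravellings — which you obtain equivalently from the corollary $\algocolorrefinement(G) \finerthan \aGNNoutlayer{\numberoflayers}(G)$ together with the Grohe theorem identifying $\algocolorrefinement$-indistinguishability with $\GML$-equivalence — and then invoke Otto's characterization that an FO formula invariant under counting bisimulation is equivalent to a $\GML$ formula. That closing step is exactly the tool the paper points to, so your ``alternative, more conceptual route'' is in fact the intended proof.

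One caveat on your hand-crafted construction: the disjunction $\bigvee_{c \in C_\psi} \phi_{t,c}$ over realized colours cannot be made finite as written, because already at round $t=1$ the colours are multisets of unbounded multiplicity (out-degrees are unbounded), so infinitely many colours are realized by models of $\psi$; the locality/type-counting sketch shows that $\psi$ only \emph{distinguishes} finitely many classes, not that only finitely many colours occur. The correct finite object is a disjunction over the finitely many $\GML$-types of bounded modal depth and bounded grades (counting thresholds at most the quantifier rank of $\psi$), which is precisely what Otto's theorem packages; the formulas $\phi_{\round,c}$ of \Cref{theorem:roundcolorGMLformulaexists} are too fine-grained to serve as the disjuncts.
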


It is sad to restrict ourselves to GNN that ar FO-expressible.

\section*{Exercises}

\begin{exercise}
Prove that there is a \foctwo-formula $\phi(x)$ for which there is no AC-GNN $\aGNN$ such that $G, u \models \phi(x)$ iff $\aGNN(G, u) = \true$.
\end{exercise}

\begin{exercise}
Prove that ML has the expressivity than GNNs where the aggregation function is MAX instead of SUM.
See \url{https://arxiv.org/abs/2507.18145}
\end{exercise}

\begin{exercise}
\newcommand{\gmlbisimulation}{\sim_{\#}}
We define the relation $\gmlbisimulation$ "graded bisimilation" defined in 
\url{https://www2.mathematik.tu-darmstadt.de/~otto/papers/cml19.pdf}.

Show that $G, u \gmlbisimulation G', u'$ iff for all $L$, the unvarallings up to $L$  of $G, u$ and $G', u'$ are isomorphic.
\end{exercise}

\begin{exercise}
In this exercise, we will prove that any GNN that is FO-definable is captured by a GML-formula.

\begin{enumerate}
\item Show that if for all $L$, the unravellings up to $L$, of $G, u$ and $G', u'$ are isomorphic, then for all GNNs $\aGNN$, we have $\aGNN(G, u) = \aGNN(G', u')$.
\item Read \url{https://www2.mathematik.tu-darmstadt.de/~otto/papers/cml19.pdf} that shows that the fragment of unary FO that only depend on the unravelling is GML.
\item Conclude.
\end{enumerate}
\end{exercise}


\chapter{Satisfiability problem of K}
\label{chapter:tableaumethod}

\newcommand{\tableaurulenondetonenode}[3]{\begin{center}
		\begin{tikzpicture}
		\tikzstyle{tableaunode} = [draw,outer sep=0,inner sep=2,minimum size=20]
		\node[tableaunode] (v1) at (0,0) {#1};
		\node[tableaunode, red] (v2) at (-3,-1) {\textcolor{red}{#2}};
		\node[tableaunode, red] (v3) at (3,-1) {\textcolor{red}{#3}};
		\draw[dashed, red] (v1) edge (v2);
		\draw[dashed, red] (v1) edge (v3);
		\end{tikzpicture}
\end{center}}

\newcommand{\tableaurulewithsuccs}[2]{\begin{center}
		\begin{tikzpicture}
		\tikzstyle{tableaunode} = [draw,outer sep=0,inner sep=2,minimum size=20]
		\node[tableaunode] (v2) at (4,0) {#1};
		\node[tableaunode] (v3) at (6,0) {\textcolor{red}{#2}};
		\draw[->]  (v2) edge (v3);
		\end{tikzpicture}
	\end{center}
}

\newcommand{\tableauruleonenode}[2]{\begin{center}
		\begin{tikzpicture}
		\tikzstyle{tableaunode} = [draw,outer sep=0,inner sep=2,minimum size=20]
		\node[tableaunode] (v1) at (-2,0) {\begin{tabular}{c}#1 \\ \textcolor{red}{#2}\end{tabular}};
		\end{tikzpicture}
	\end{center}
}

Before tackling the satisfiability problem for graded modal logic (and a richer logic called $\Ksharp$ introduced in \Cref{chapter:verificationGNNs}), it is good to concentrate on a simple setting.
In this chapter, we tackle the satisfiability problem for basic modal logic K:
\begin{itemize}
	\item input: a modal formula $\phi$;
	\item output: yes if $\phi$ is satisfiable; no otherwise.
\end{itemize}

We will explain the tableau method, an algorithm for deciding satisfiability problem.

\section{Negative normal form}

In the tableau method we will propose, we need disjunction, conjunction, box and diamonds are \emph{explicit}, e.g. no disjunction is hidden like in $\lnot (\phi \land \psi)$! That is why we introduce the notion of formula in \emph{negative normal form} where negations only appear in front of atomic propositions.

\begin{definition}[negative normal form]
A formula in \indexemph{negative normal form} belongs to the language defined by the rule

$$ \phi \grammaris p \grammarseparation \lnot p \grammarseparation \phi \lor \phi \grammarseparation \phi \land \phi \grammarseparation \ldiamond \phi \grammarseparation \lbox \phi$$

where $p$ ranges over the set of atomic propositions.
\end{definition}

We suppose that the formula $\phi$ (and all formulas) are in \emph{negative normal form}. If $\phi$ is not in negative normal form, apply these rewriting rules that pushes negations in front of atomic propositions:

\begin{center}
\begin{tabular}{lll}
$\lnot \lbox \phi$ & becomes & $\ldiamond \lnot \phi$  \\
$\lnot \ldiamond \phi$ & becomes & $\lbox \lnot \phi$  \\
$\lnot (\phi \land \psi)$ & becomes & $(\lnot \phi \lor \lnot \psi)$ \\
$\lnot (\phi \lor \psi)$ & becomes & $(\lnot \phi \land \lnot \psi)$
\end{tabular}
\end{center}

\index{tableau method}

\section{Overview}

In a nutshell, the tableau method is a proof system that constructs a Kripke structure. Each time a formula is written, it means that the formula \emph{should} hold at a given world. It can be seen as a procedure that rewrites a labelled graph.
The tableau method starts with an initial graph made up of one node containing~$\phi$:
\begin{center}
	\begin{tikzpicture}
	\tikzstyle{tableaunode} = [draw,outer sep=0,inner sep=2,minimum size=20]
	\node[tableaunode] at (0,0) {$\phi$};
	\end{tikzpicture}
\end{center}

\section{Tableau rules}

Tableau rules there are rewriting rules that make explicit the meaning of `a formula \emph{should} hold'. Let us start with the rule for $\land$ (in red, we write what is added).

\tableauruleonenode{$\phi \land \psi$}{$\phi, \psi$}

The following rule for the $\lor$ connective is non-deterministic.
\tableaurulenondetonenode{$\phi \lor \psi$}{$\phi$}{$\psi$}

The clash rule for contradiction gives a clash.
\begin{center}
\begin{tikzpicture}
\tikzstyle{tableaunode} = [draw,outer sep=0,inner sep=2,minimum size=20]
\node[tableaunode] (v1) at (0,0) {$\begin{array}{c}p\\ \lnot p\end{array}$};
\node (v2) at (0,-1) {clash};
\end{tikzpicture}
\end{center}

The rule for $\ldiamond$ adds a successor containing $\phi$ if there is no successor.

\begin{center}
\begin{tikzpicture}
\tikzstyle{tableaunode} = [draw,outer sep=0,inner sep=2,minimum size=20]
\node[tableaunode] (v2) at (4,0) {$\ldiamond \phi$};
\node[tableaunode, red] (v3) at (4,-2) {\textcolor{red}{$\phi$}};
\draw[->, red]  (v2) edge (v3);
\end{tikzpicture}
\end{center}

The rule for $\lbox$ adds $\phi$ in all successors.

\begin{center}
\begin{tikzpicture}
\tikzstyle{tableaunode} = [draw,outer sep=0,inner sep=2,minimum size=20]
\node[tableaunode] (v2) at (4,0) {$\lbox \phi$};
\node[tableaunode] (v3) at (4,-2) {\textcolor{red}{$\phi$}};
\draw[->]  (v2) edge (v3);
\end{tikzpicture}
\end{center}

\newpage

\section{Example}

We draw dashed arrow for non-determinism and plain arrow for the relation in the model that is created.

\begin{center}
\begin{tikzpicture}
\tikzstyle{prestate} = [draw, fill=gray!20];
\tikzstyle{state} = [draw, align=center];
\node[state, align=center] (v1) at (0,-1) {$\ldiamond(\lnot p \land\ldiamond (p \land q)) \land\lbox (\lbox \lnot q \lor \ldiamond p)$ \\ $\ldiamond(\lnot p \land \ldiamond(p \land q)$ \\ $\lbox (\lbox \lnot q \lor \lnot p))$ };
\node[state,  align=center] (v3) at (0, -4) {$\lnot p \land \ldiamond(p \land q)$ \\ $\lbox \lnot q \lor \ldiamond p$
\\
$\lnot p$ \\ $\ldiamond(p \land q)$};

\node[state] (v41) at (-4, -6) {$\begin{array}{c}\lbox \lnot q\end{array}$};

\node[state] (v41s) at (-4, -8) {$\lnot q$ \\
	$p \land q$ \\
	$p$ \\
	$q$};
\node[] (v41sflop) at (-4, -9.5) {clash};

\node[state] (v42) at (4, -6) {$\ldiamond p$};

\node[state] (v42s1) at (2, -8) {$\begin{array}{c}p \end{array}$};
\node[state] (v42s2) at (6, -8) {$\begin{array}{c}p \land q\end{array}$};

\draw[->] (v1) -- (v3);
\draw[->, dashed] (v3) -- (v41);
\draw[->, dashed] (v3) -- (v42);
\draw[->] (v41) -- (v41s);
\draw[->] (v42) -- (v42s1);
\draw[->] (v42) -- (v42s2);
\end{tikzpicture}
\end{center}

\section{Tableau system as a labelled proof system}

A tableau method works (and can be formalized) as a rewriting term system. At each time of the algorithm we maintain a set of terms of the following form:

\begin{itemize}
\item $(\sigma\;\phi)$, where $\sigma$ is a abstract symbol and $\phi$ is a formula. The term $(\sigma\;\phi)$ means that `$\phi$ should be true in the world denoted by $\sigma$'.
\item $(R\;\sigma\;\sigma')$ where $\sigma$ and $\sigma'$ are two symbols. The term $(R\;\sigma\;\sigma')$ means that `the world denoted by $\sigma$ is linked by relation $R$ to the world denoted by $\sigma'$.
\end{itemize}

\index{tableau rule}
\index{rule}

The tableau method starts with $(\sigma\;\phi)$ where $\sigma$ is a fresh symbol and $\phi$ is the formula we want to test. We then apply some rules.

\subsection{Boolean tableau rules}
\label{subsection:Booleantableaurules}
Let us start by defining the Boolean tableau rules for Boolean connectives:
$$\begin{array}{p{6cm}p{6cm}}\infer[\text{(Rule $\land$)}]{(\sigma\;\phi_1)(\sigma\;\phi_2)}{(\sigma\;\phi_1 \land \phi_2)}
& 
\infer[\text{(Rule $\lor$)}]{(\sigma\;\phi_1)\mid(\sigma\;\phi_2)}{(\sigma\;\phi_1 \lor \phi_2)} 
\\
\vspace{2mm}
\infer[\text{(Clash rule)}]{\text{rejecting execution}}{(\sigma\;p)(\sigma\;\neg p)}
\end{array}$$

Another presentation is to handle a set $\Gamma$ of formulas. The set $\Gamma$ corresponds to a given $\sigma$: it contains the formulas $\phi$ such that $(\sigma ~ \phi)$ is produced. The rules are then:
\begin{itemize}
\item If $\phi_1 \land \phi_2$ is in $\Gamma$, then add $\phi_1$ and $\phi_2$ to $\Gamma$;
\item If $\phi_1 \lor \phi_2$ is in $\Gamma$, then non-deterministically choose to either add $\phi_1$ in $\Gamma$, or add $\phi_2$ in $\Gamma$;
\item If $p$ and $\lnot p$ are $\Gamma$, then the execution is rejecting.
\end{itemize}

\subsection{Tableau rules for modal operators}

Now we define the rules for modal operators:
$$\begin{array}{p{6cm}p{6cm}}\infer[\text{(Rule $\ldiamond$})]{(R\;\sigma\;\sigmanew)(\sigmanew\;\phi)}{(\sigma\;\ldiamond\phi)}
& 
\infer[\text{(Rule $\lbox$)}]{(\sigma'\;\phi)}{(\sigma\;\lbox\phi)(R\;\sigma\;\sigma')}
\end{array}$$
where $\sigmanew$ is new fresh symbol.

\section{Soundness and completeness}

\begin{theorem}[soundness]
	If there is an execution of the tableau method that is not clashing, then the initial formula is satisfiable.
\end{theorem}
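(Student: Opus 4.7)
The plan is to extract a Kripke model directly from a non-clashing execution, then prove by structural induction on formulas (in negative normal form) that every labelled formula appearing in the execution is indeed satisfied at the corresponding world.

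Fix a non-clashing execution $T$, understood as a \emph{saturated} set of terms: every rule of Section~\ref{subsection:Booleantableaurules} that is applicable has been fired (for modal logic K this causes no issue since each rule strictly decreases the multiset of ``unprocessed'' subformulas, so termination holds). From $T$ I would build a pointed Kripke model $M_T = (V, E, \labellinginitial)$ by taking
\begin{itemize}
\item $V := \{\sigma \mid \sigma \text{ occurs in some term of } T\}$,
\item $E := \{(\sigma,\sigma') \mid (R\;\sigma\;\sigma') \in T\}$,
\item $\labellinginitial(\sigma)(p) = 1$ iff $(\sigma\;p) \in T$.
\end{itemize}
The designated world is the initial symbol $\sigma_0$ introduced with the initial term $(\sigma_0\;\phi)$.

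Next I would prove the key lemma by structural induction on $\psi$ (in NNF):
\begin{center}
for every $\sigma \in V$ and every formula $\psi$, if $(\sigma\;\psi) \in T$ then $M_T, \sigma \models \psi$.
\end{center}
The atomic case $\psi = p$ is immediate from the definition of $\labellinginitial$. The case $\psi = \lnot p$ uses the hypothesis that $T$ is non-clashing: $(\sigma\;p) \notin T$, hence $\labellinginitial(\sigma)(p) = 0$ and $M_T, \sigma \models \lnot p$. For $\psi = \psi_1 \land \psi_2$, saturation under the $\land$-rule gives $(\sigma\;\psi_1), (\sigma\;\psi_2) \in T$ and the induction hypothesis closes the case; for $\psi = \psi_1 \lor \psi_2$, the non-clashing execution has chosen a branch, so at least one of the two terms lies in $T$. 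For $\psi = \ldiamond \chi$, saturation under the $\ldiamond$-rule yields a fresh $\sigmanew$ with $(R\;\sigma\;\sigmanew), (\sigmanew\;\chi) \in T$; then $\sigmanew \in E(\sigma)$ and by induction $M_T, \sigmanew \models \chi$. For $\psi = \lbox \chi$, every $\sigma' \in E(\sigma)$ comes from some $(R\;\sigma\;\sigma') \in T$, so the $\lbox$-rule produces $(\sigma'\;\chi) \in T$ and the induction hypothesis applies.

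Applying the lemma to the initial term $(\sigma_0\;\phi)$ gives $M_T, \sigma_0 \models \phi$, so $\phi$ is satisfiable.

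The main obstacle in this proof is conceptual rather than technical: one must be precise about what ``a non-clashing execution'' means as a finite object containing all the consequences required to carry the induction. Concretely, this amounts to a termination/saturation argument showing that the rewriting process halts and that, at the halting point, each formula has been decomposed enough to feed the induction (in particular that every $\lbox$-formula has been propagated to every $R$-successor that was eventually created). Once termination is in place, the inductive cases themselves are essentially a direct reading of the rules.
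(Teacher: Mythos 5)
Your proof is correct, but it follows a different route from the one the paper actually gives (the proof is deferred to the next chapter and carried out there for the recursive procedure \textsf{satK}). The paper argues by induction on the \emph{modal depth} of $\Gamma$: the base case builds a one-world model from the saturated Boolean part, and the inductive case glues together the tree models produced by the recursive calls $\textsf{satK}(\psi \cup \{\chi \mid \lbox\chi \in \Gamma\})$ under a fresh root (Figure~\ref{figure:gluing}). That organization buys an extra quantitative conclusion for free: the model is a tree of depth $md(\Gamma)$ and bounded arity, which is what later feeds the PSPACE analysis. You instead run the classical model-existence argument: extract a single Kripke model from the saturated set of labelled terms and prove a truth lemma ``$(\sigma\;\psi)\in T$ implies $M_T,\sigma\models\psi$'' by structural induction on $\psi$, uniformly over all labels $\sigma$. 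This is cleaner as a standalone soundness proof for the labelled-tableau presentation and avoids the double induction (on depth, then on formulas in $\Gamma$) that the paper performs, at the cost of not directly exhibiting the tree-shape bounds. Your one genuine obligation --- that saturation is well defined, terminates, and in particular that the $\lbox$-rule has fired for every $R$-edge created later by a $\ldiamond$-rule --- is correctly identified and discharged by the subformula argument, so I see no gap.
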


\begin{theorem}[completeness]
	If the initial formula is satisfiable then there is an execution of the tableau method that is not clashing.
\end{theorem}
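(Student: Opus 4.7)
The plan is to prove completeness by constructing a non-clashing execution from a satisfying Kripke model, mirroring the soundness argument. Suppose $\phi$ is satisfiable, so there exists a pointed Kripke model $(G, u_0)$ with $G = (V, E, \labellinginitial)$ such that $G, u_0 \models \phi$. I would build an execution of the tableau method step by step while maintaining an invariant that links the abstract symbols of the tableau to actual worlds of $V$.

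The invariant I would maintain is the following: there exists a map $\mu$ from the symbols currently used in the tableau to worlds in $V$ such that $\mu(\sigma_0) = u_0$ (where $\sigma_0$ is the initial symbol), for every term $(\sigma\;\psi)$ present in the tableau we have $G, \mu(\sigma) \models \psi$, and for every term $(R\;\sigma\;\sigma')$ present we have $(\mu(\sigma), \mu(\sigma')) \in E$. The invariant holds initially because $G, u_0 \models \phi$. I would then check that each rule from \Cref{subsection:Booleantableaurules} and its modal counterpart preserves the invariant when the nondeterminism is resolved carefully. For Rule $\land$, both conjuncts hold at $\mu(\sigma)$ by the truth conditions. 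For Rule $\lor$, at least one of $\phi_1, \phi_2$ holds at $\mu(\sigma)$, and we deterministically choose that disjunct. For Rule $\ldiamond$, the assumption $G, \mu(\sigma) \models \ldiamond\phi$ yields some $v \in E(\mu(\sigma))$ with $G, v \models \phi$; we set $\mu(\sigmanew) := v$. For Rule $\lbox$, combining $G, \mu(\sigma) \models \lbox\phi$ with $(\mu(\sigma), \mu(\sigma')) \in E$ gives $G, \mu(\sigma') \models \phi$.

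Along the constructed branch, the clash rule is never triggered: if both $(\sigma\;p)$ and $(\sigma\;\lnot p)$ appeared simultaneously, then $G, \mu(\sigma)$ would satisfy both $p$ and $\lnot p$, which is impossible. Hence the guided branch is not clashing.

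The main obstacle to turn this sketch into a complete proof is the termination/fairness part: we must ensure that the guided process actually yields a finite non-clashing execution (or at least a saturated non-clashing branch). For basic K this is routine since only subformulas of $\phi$ ever appear as $\psi$ in the terms $(\sigma\;\psi)$, and each diamond rule application creates a fresh successor whose associated formulas have strictly smaller modal depth. A standard induction on $(|\phi|, \modaldepth(\phi))$ shows that every branch terminates, so the guided execution indeed halts without clashing. I would package this as a lemma before concluding that the constructed execution witnesses the existence of a non-clashing execution.
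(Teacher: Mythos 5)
Your proposal is correct and follows essentially the same strategy as the paper: use the satisfying model to resolve the nondeterministic choices while maintaining the invariant that every formula attached to a symbol (resp.\ world) is true at the corresponding world of the model, so the clash rule can never fire. The only difference is organizational --- the paper proves completeness in the next chapter by induction on the modal depth of $\Gamma$, following the recursive structure of the algorithm \emph{satK} (each recursive call carries its own witness world), whereas you keep a single global map $\mu$ from tableau symbols to worlds of the model; both handle termination the same way, via the finiteness of subformulas and the decrease of modal depth at each $\ldiamond$-step.
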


We leave the proofs of these theorems to the next chapter.

\section*{Bibliographical notes}

	In \cite{DBLP:books/cu/BlackburnRV01} (p. 383), the definition of Hintikka set 
	[Blackburn p. 357, Def 6.24] is given. Their algorithm is cleaner in a sense, but not as easy to understand. 
	The tableau method presented here can be extended for graded modal logic \cite{DBLP:conf/cade/Tobies99}.
	
	%

Concerning implementation issues and optimisation, the reader may have a look to \cite{DBLP:books/el/07/HorrocksHSS07}.

\section*{Exercices}

\begin{enumerate}
	\item Apply the tableau method to the modal formula of your choice.
	\item Explain informally why any satisfiable modal formula is satisfiable in a tree. Can you bound its arity? its depth?
	\item How would you adapt the tableau method to know whether a given formula is true in a reflexive model ($u R u$ for all worlds $u$)?
\end{enumerate}

\chapter{Satisfiability problem of K is PSPACE-complete}
Savitch theorem says that PSPACE = NPSPACE. It means that proving a PSPACE upper bound can be proven by given a non-deterministic algorithm that requires a polynomial amount of space. 
We now turn the tableau method of Chapter~\ref{chapter:tableaumethod} we just saw into a non-deterministic algorithm. 

\index{PSPACE}

\section{Algorithm}

The design of our algorithm is as follows.
\begin{itemize}
	\item Boolean rules are applied in a given node are performed in the same call (see~\Cref{subsection:Booleantableaurules});
	\item Rules for modal operators are simulated by recursive calls. The number of recursive calls is thus bounded by the modal depth.
\end{itemize}
\begin{algo}
\algoprocedure satK($\Gamma$)

\begin{algobloc}

\algochoose outcomes of Boolean rules until $\Gamma$ is saturated

\algoif the clash rule can be applied on $\Gamma$ \algothen

\begin{algobloc}

 \algoreject

\end{algobloc}

\algofor all formulas of the form $\ldiamond \psi$ in $\Gamma$

\begin{algobloc}

satK($\psi \union \set{\chi \suchthat \lbox \chi \in \Gamma}$)

\end{algobloc}

\end{algobloc}

\end{algo}

\section{Soundness and completeness}

We denote the modal depth of $\phi$ by
$md(\phi)$. We define $md(\Gamma) = max_{\psi \in \Gamma} md(\psi)$.

\begin{theorem}[completeness]
If $\Gamma$ is satisfiable, then there exists an non-rejecting execution of $sat(\Gamma)$.
\end{theorem}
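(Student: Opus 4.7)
The plan is to prove this by induction on the modal depth $md(\Gamma)$, matching the recursion structure of \texttt{satK}. The induction hypothesis will state: for every set $\Gamma'$ with $md(\Gamma') < md(\Gamma)$ such that $\Gamma'$ is satisfiable, there is a non-rejecting run of $\texttt{satK}(\Gamma')$. The global invariant I want to maintain is: whenever \texttt{satK} is called on a set $\Gamma'$, there exists a pointed Kripke model $(G', u')$ with $G', u' \models \Gamma'$, and I will use this model to guide the non-deterministic choices made during the execution.

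First I would handle the base case $md(\Gamma)=0$. Here $\Gamma$ contains only Boolean combinations of literals and no modal operators. Using the witnessing model $G, u \models \Gamma$, I guide the \algochoose of the Boolean rules as follows: whenever a disjunction $\phi_1 \lor \phi_2$ is decomposed, since $G, u \models \phi_1 \lor \phi_2$ at least one $\phi_i$ is true at $u$; I choose such an $i$. A straightforward sub-induction on formula structure shows that the saturated $\Gamma^*$ is still satisfied at $(G, u)$. Consequently no atom $p$ and its negation $\lnot p$ can both appear in $\Gamma^*$, so the clash rule does not fire, and since $md(\Gamma)=0$ there are no $\ldiamond$-formulas and the \algofor loop is vacuous.

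For the inductive step with $md(\Gamma) = n+1$, I apply the same guided saturation: each Boolean rule produces a saturated set $\Gamma^*$ with $G, u \models \Gamma^*$, hence no clash. Then for every $\ldiamond \psi \in \Gamma^*$, since $G, u \models \ldiamond \psi$, pick a successor $v \in E(u)$ with $G, v \models \psi$; moreover for every $\lbox \chi \in \Gamma^*$, since $G, u \models \lbox \chi$, we also have $G, v \models \chi$. Define $\Gamma_{\ldiamond\psi} := \{\psi\} \cup \{\chi \mid \lbox \chi \in \Gamma^*\}$. Then $G, v \models \Gamma_{\ldiamond\psi}$, so $\Gamma_{\ldiamond\psi}$ is satisfiable. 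The crucial quantitative observation is $md(\Gamma_{\ldiamond\psi}) \leq md(\Gamma) - 1 < md(\Gamma)$: stripping one outer $\ldiamond$ or $\lbox$ reduces modal depth, and Boolean saturation never increases it. By the induction hypothesis there is a non-rejecting run of $\texttt{satK}(\Gamma_{\ldiamond\psi})$, so each recursive call in the \algofor loop admits a non-rejecting execution, producing a non-rejecting execution of $\texttt{satK}(\Gamma)$.

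The main subtlety I expect is the bookkeeping claim that guided Boolean saturation preserves satisfiability at the same pointed model and does not increase modal depth; this requires a small auxiliary lemma showing that each Boolean rule, when its choice is guided by truth in $(G,u)$, yields a set still satisfied at $(G,u)$. Everything else is routine: the finiteness of the number of $\ldiamond$-formulas in $\Gamma^*$ ensures the loop terminates, and the strict decrease of modal depth justifies the induction.
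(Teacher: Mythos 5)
Your proof is correct and follows essentially the same route as the paper's: induction on $md(\Gamma)$, using the witnessing pointed model to guide the non-deterministic Boolean choices so that the saturated set remains satisfied (hence clash-free), and then passing to a satisfying successor for each $\ldiamond\psi$ to invoke the induction hypothesis on $\{\psi\}\cup\{\chi \mid \lbox\chi\in\Gamma\}$. Your explicit remark that this set has strictly smaller modal depth is a useful detail the paper leaves implicit, but it does not change the argument.
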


\begin{proof}
By induction on $md(\Gamma)$. Let $P(k)$ is

\begin{center}
`For all $\Gamma$ such that $md(\Gamma) \leq k$, if $\Gamma$ is satisfiable, then there exists a non-rejecting execution of $sat(\Gamma)$.'
\end{center}

\fbox{$md(\Gamma) = 0$}
There exists a model $\modelM = (W, R, V)$ and a world $w$ such that $\modelM, w \models \psi$ for all $\Gamma$. We prove the following invariant during one possible execution of the algorithm:
\begin{center}
for all $\psi \in \Gamma$, $\modelM, w \models \psi$.
\end{center}

Rule and: if $\psi_1 \land \psi_2 \in \Gamma$, then $\modelM, w \models \psi_1 \land \psi_2$. The rule and adds $\psi_1, \psi_2 \in \Gamma$. By the definition of the truth condition, $\modelM, w \models \psi_1$ and $\modelM, w \models \psi_2$.

Rule or: if $\psi_1 \lor \psi_2 \in \Gamma$, then $\modelM, w \models \psi_1 \lor \psi_2$. Either $\modelM, w \models \psi_1$ or $\modelM, w \models \psi_2$. Suppose that we are in the case where $\modelM, w \models \psi_2$. It is then sufficient to consider the execution where $\psi_2$ is added to $\Gamma$ and the invariant remains true.

As $\modelM, w  \models p$ and $\modelM, w \models \lnot p$ is impossible, the execution is non-rejecting.

\fbox{recursive case}

Suppose $P(k)$ and let us prove $P(k+1)$. Let $\Gamma$ such that $md(\Gamma) = k+1$. There exists a model $\modelM = (W, R, V)$ and a world $w$ such that $\modelM, w \models \psi$ for all $\Gamma$. The beginning of the proof is the same than for the basic case: we make the non-deterministic choices according to the truth of formulas in $w$.

Now, for all formulas of the form $\ldiamond \psi$ in $\Gamma$, as $\modelM, w \models \ldiamond \psi$ there exists $u \in R(w)$ such that $\modelM, u \models \ldiamond \psi$. More: we have $\modelM, u \models \chi$ for all $\lbox \chi \in \Gamma$.

Thus, $\psi \union \set{\chi \suchthat \lbox \chi \in \Gamma}$ is satisfiable. By $P(k)$, satK($\psi \union \set{\chi \suchthat \lbox \chi \in \Gamma}$) has an non-rejecting execution. We can thus construct an non-rejecting execution of satK$(\Gamma)$.
\end{proof}

\begin{theorem}[soundness]
If $sat(\Gamma)$ has an non-rejecting execution, then $\Gamma$ is satisfiable in a tree of depth $md(\Gamma)$ and of arity the number of $\ldiamond$ that appears in $\Gamma$.
\end{theorem}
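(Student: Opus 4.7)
The plan is to mirror the structure of the completeness proof by induction on $md(\Gamma)$, but now building a model from a successful execution rather than exploiting one. Concretely, from a fixed non-rejecting execution of $\mathit{satK}(\Gamma)$, I will construct a tree-shaped Kripke model $\modelM_\Gamma$ rooted at a world $w_\Gamma$ and prove the invariant that $\modelM_\Gamma, w_\Gamma \models \psi$ for every $\psi \in \Gamma$, while keeping track of the depth and arity bounds.

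For the base case $md(\Gamma) = 0$, the execution only applies Boolean rules and does not trigger the clash rule. After saturation, $\Gamma$ contains no formula of the form $\ldiamond\psi$ or $\lbox\psi$, and for each atom $p$ we do not have both $p$ and $\lnot p$ in $\Gamma$. I take a single-world model whose valuation sets $p$ to true iff $p \in \Gamma$ (and arbitrarily otherwise), and verify by structural induction on formulas in $\Gamma$ that they all hold: literals hold by construction, conjunctions hold because the $\land$-rule has pushed both conjuncts in, disjunctions hold because the $\lor$-rule has non-deterministically chosen a true disjunct. The tree has depth $0$ and arity $0$, matching the bound.

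For the inductive step, I assume the statement for all $\Gamma'$ with $md(\Gamma') < md(\Gamma)$. In a non-rejecting execution of $\mathit{satK}(\Gamma)$, after Boolean saturation the algorithm performs one recursive call $\mathit{satK}(\Gamma_\psi)$ for each $\ldiamond\psi \in \Gamma$, where $\Gamma_\psi := \{\psi\}\cup\{\chi \mid \lbox\chi \in \Gamma\}$, and each such call is itself non-rejecting. Since $md(\Gamma_\psi) \le md(\Gamma) - 1$, the induction hypothesis yields a tree model $(\modelM_\psi, w_\psi)$ of depth at most $md(\Gamma)-1$ and arity at most the number of diamonds in $\Gamma_\psi$, satisfying every formula of $\Gamma_\psi$. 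I build $\modelM_\Gamma$ by taking a fresh root $w_\Gamma$, with propositional valuation determined by the literals produced by the Boolean saturation of $\Gamma$, and attaching each $(\modelM_\psi, w_\psi)$ via an edge $w_\Gamma R w_\psi$. The depth is bounded by $1 + (md(\Gamma)-1) = md(\Gamma)$ and the arity of $w_\Gamma$ is exactly the number of $\ldiamond$-formulas in $\Gamma$, which is at most the total number of diamonds occurring in $\Gamma$; the arity bound propagates recursively.

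It remains to check $\modelM_\Gamma, w_\Gamma \models \psi$ for every $\psi \in \Gamma$, which I do by structural induction on $\psi$ using the saturated form of $\Gamma$. Literals are handled by the valuation at $w_\Gamma$; conjunctions and disjunctions reduce to their components already present in $\Gamma$; a formula $\ldiamond\psi \in \Gamma$ is witnessed by the child $w_\psi$, which satisfies $\psi$ by induction hypothesis; and a formula $\lbox\chi \in \Gamma$ holds because each child $w_\psi$ satisfies $\chi$ (since $\chi \in \Gamma_\psi$ by construction of the recursive call), and $w_\Gamma$ has no other successors. The main obstacle is exactly this last point: making sure one tracks \emph{all} the implicit content carried by the Boolean saturation step (so that literals, conjuncts, and disjuncts behave correctly) and that the $\lbox$-rule is faithfully reflected by the way $\Gamma_\psi$ is formed in the algorithm. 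Once this bookkeeping is set up cleanly, the construction closes and yields the claimed tree with the announced depth and arity bounds.
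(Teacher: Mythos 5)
Your proposal is correct and follows essentially the same route as the paper's proof: induction on $md(\Gamma)$, a single-world model read off the saturated set in the base case, and in the inductive step gluing the tree models obtained from the recursive calls $\mathit{satK}(\{\psi\}\cup\{\chi \mid \lbox\chi\in\Gamma\})$ under a fresh root, then verifying all of $\Gamma$ at the root by structural induction. Your write-up is in fact more explicit than the paper's about the $\lbox$ case and the depth/arity bookkeeping, which the paper largely leaves implicit.
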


\begin{proof}
$P(k)$ is defined as: 

\begin{center}
If $sat(\Gamma)$ has an non-rejecting execution, then $\Gamma$ is satisfiable in a tree of depth $md(\Gamma)$ and of arity the number of $\ldiamond$ that appears in $\Gamma$.
\end{center}

\fbox{basic case}

There is no modal operator. We define a model $\modelM$ made up of a unique world $w$ and the valuation $V(w) = \atmset \inter \Gamma$ when the saturation has been made.

We prove that by induction on $\phi \in \Gamma$ that $\modelM, w \models \phi$.

Propositions: ok

Negations of proposition: ok

Or: If $\phi \lor \psi \in \Gamma$, we have either $\phi$ or $\psi \in \Gamma$ because all the Boolean rules has been applied. Suppose it is $\psi \in \Gamma$. We have $\modelM, w \models \psi$. Thus, $\modelM, w \models \phi \lor \psi$.

And: idem.

\fbox{recursive case}
Suppose $P(k)$. Let us prove $P(k+1)$.

Let us take $\Gamma$ of model depth $k+1$ such that $sat(\Gamma)$ succeeds. Then satK($\psi \union \set{\chi \suchthat \lbox \chi \in \Gamma}$) succeeds for all $\ldiamond \psi \in \Gamma$ after Boolean saturation.

As shown in Figure~\ref{figure:gluing}, we construct $\modelM$ by gluing models obtained from the subcall. By induction, for all $\ldiamond \psi \in \Gamma$, we can find a tree $\modelM_{\psi}$ of depth at most $k$ and arity at most the number of $\ldiamond$ in $\psi \union \set{\chi \suchthat \lbox \chi \in \Gamma}$... well $\Gamma$.

We then create a root $w$ as in the basic case where $V(w) = \atmset \inter \Gamma$. We prove that by induction on $\phi \in \Gamma$ that $\modelM, w \models \phi$.

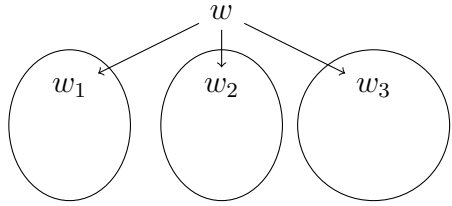
\begin{figure}
	\begin{center}
		\begin{tikzpicture}
		\node (w) {$w$};
		\node (1) at (-2, -1) {$w_1$};
		\node (2) at (-0, -1) {$w_2$};
		\node (3) at (2, -1) {$w_3$};
		\draw[->] (w) -- (1);
		\draw[->] (w) -- (2);
		\draw[->] (w) -- (3);
		\draw (-2, -1.5) ellipse (0.8cm and 1cm);
		\draw (0, -1.5) ellipse (0.8cm and 1cm);
		\draw (2, -1.5) ellipse (1cm and 1cm);
		
		\end{tikzpicture}
	\end{center}
	\caption{Model constructed by gluing models obtained from the subcall when diamond formulas are $\ldiamond \psi_1, \ldiamond\psi_2$ and $\ldiamond \psi_3$.\label{figure:gluing}}
\end{figure}

\end{proof}

\section{PSPACE upper bound}

\begin{theorem}
	The satisfiability problem for K is in PSPACE.
\end{theorem}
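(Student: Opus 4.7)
The plan is to rely on Savitch's theorem (\textsf{PSPACE} = \textsf{NPSPACE}) and analyse the nondeterministic procedure \textsf{satK} defined above. Since \textsf{satK} uses \textbf{choose} for the Boolean saturation step and otherwise only recurses, it is already a nondeterministic algorithm; we just have to bound the amount of space it uses. Concretely, I would prove that a call $\textsf{satK}(\Gamma)$ on any set $\Gamma$ whose formulas are subformulas of the input $\phi$ runs in space polynomial in $|\phi|$.

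First I would establish a \emph{subformula invariant}: whenever $\textsf{satK}(\Gamma)$ is called, every formula of $\Gamma$ is a subformula of $\phi$. This is immediate by inspection of the rules ($\land$, $\lor$, $\lbox$, $\ldiamond$ all produce strict subformulas), so at any point in the recursion $\Gamma$ has at most $|\phi|$ distinct formulas and can be stored in $O(|\phi|^2)$ bits. In particular the saturation by Boolean rules terminates after $O(|\phi|)$ nondeterministic choices and uses only polynomial space.

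Next I would bound the depth of recursion. Each recursive call $\textsf{satK}(\set{\psi}\cup\set{\chi \suchthat \lbox\chi \in \Gamma})$ strictly decreases the modal depth, since $\psi$ is a strict $\lbox$/$\ldiamond$-free piece of a $\ldiamond\psi \in \Gamma$, and every $\chi$ satisfies $md(\chi) < md(\lbox\chi) \leq md(\Gamma)$. Hence the recursion depth is bounded by $md(\phi) \leq |\phi|$. The crucial observation, which is where the PSPACE bound actually comes from, is that the \textbf{for} loop over the diamond formulas in $\Gamma$ is \emph{sequential}: the successors are explored one at a time and the space used for one recursive call is freed before the next call starts. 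Thus only one branch of the tableau tree is stored on the call stack at any moment.

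Putting these together, the total space used is bounded by (recursion depth) $\times$ (space per stack frame) $= O(|\phi|) \times O(|\phi|^2) = O(|\phi|^3)$, which is polynomial. Soundness and completeness (shown in the previous section) ensure that $\phi$ is satisfiable iff $\textsf{satK}(\set{\phi})$ has a non-rejecting execution, so the procedure decides satisfiability of K in \textsf{NPSPACE}, and by Savitch's theorem in \textsf{PSPACE}. The main subtle point will be explicitly justifying that iterating through the diamond formulas sequentially, and reusing space across iterations, does not miss any satisfying model — this is exactly what the gluing argument of the soundness proof already guarantees, since each successor world is treated independently in the constructed tree.
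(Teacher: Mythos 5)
Your proposal is correct and follows essentially the same route as the paper: the paper's proof simply asserts that the nondeterministic procedure \textsf{satK} is sound, complete, and runs in polynomial space, then invokes Savitch's theorem, and your write-up supplies the space analysis (subformula invariant, recursion depth bounded by modal depth, sequential reuse of space across the \textbf{for} loop) that the paper leaves implicit. The only small imprecision is the phrase describing $\psi$ as a ``$\lbox$/$\ldiamond$-free piece'' of $\ldiamond\psi$ --- $\psi$ may itself contain modalities; what matters, and what you correctly use, is that its modal depth is strictly smaller.
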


\begin{proof}
	The algorithm we saw is sound and complete. It runs in polynomial space and is non-deterministic. So the satisfiability problem for K is in NPSPACE. By Savitch's theorem, NPSPACE = PSPACE.
\end{proof}

\section{PSPACE lower bound}

\begin{theorem}
	The satisfiability problem for K is in PSPACE-hard.
\end{theorem}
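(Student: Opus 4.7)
The plan is to reduce from \emph{TQBF} (True Quantified Boolean Formula), a canonical PSPACE-complete problem, following the classical argument due to Ladner. Given an input QBF $\Phi = Q_1 x_1 \dots Q_n x_n\,\psi(x_1,\dots,x_n)$ with $Q_i \in \set{\forall, \exists}$ and $\psi$ quantifier-free, I will build in polynomial time a modal formula $tr(\Phi)$ that is satisfiable iff $\Phi$ is true.

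The key idea is to encode quantifier alternation as modal alternation, so that a satisfying Kripke model for $tr(\Phi)$ looks like the QBF semantic game tree with its root at depth~$0$. I would reuse $x_1,\dots,x_n$ as propositional atoms; the value of $x_j$ along a branch records the assignment chosen for the $j$-th variable. Recursively, for the quantifier suffix starting after level $i$, I would set
\begin{align*}
tr(\psi,\,n) &= \psi,\\
tr(\exists x_{i+1}\,\Phi',\,i) &= \ldiamond(x_{i+1}\land tr(\Phi',i{+}1)) \;\lor\; \ldiamond(\lnot x_{i+1}\land tr(\Phi',i{+}1)),\\
tr(\forall x_{i+1}\,\Phi',\,i) &= \ldiamond(x_{i+1}\land tr(\Phi',i{+}1)) \;\land\; \ldiamond(\lnot x_{i+1}\land tr(\Phi',i{+}1)).
\end{align*}
Both quantifiers use $\ldiamond$, which forces the required successors to actually exist in any model; the difference is only disjunction vs.\ conjunction over the two possible choices for $x_{i+1}$.

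The main obstacle is \emph{propagation}: the matrix $\psi$ is evaluated at depth~$n$, but the atom $x_j$ is committed to a truth value at depth~$j$, and nothing in $tr(\cdot,\cdot)$ alone prevents $x_j$ from flipping along the remaining $n-j$ modal steps. To fix this I would conjoin a polynomial-size ``propagation'' formula that forces, at every reachable world of depth $k\geq j$, the value of $x_j$ to be preserved at every successor, for instance
\[
\Psi_{\mathrm{prop}} \;=\; \bigwedge_{j=1}^{n}\; \bigwedge_{k=j}^{n-1} \lbox^{k}\bigl((x_j \imply \lbox x_j)\land (\lnot x_j \imply \lbox \lnot x_j)\bigr),
\]
where $\lbox^{k}$ denotes $k$ nested boxes. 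Starting the inner conjunction at $k=j$ (rather than $k=0$) is essential so that $\Psi_{\mathrm{prop}}$ does not clash with the two sibling diamonds produced by a $\forall$-step, which demand both values of $x_{j}$ at depth~$j$. Setting $tr(\Phi) := tr(\Phi,0) \land \Psi_{\mathrm{prop}}$ gives a formula of size $O(n^3+|\psi|)$, clearly computable in polynomial time.

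Correctness I would prove by induction on $n-i$: a winning verifier strategy for $\Phi$ yields a tree-shaped Kripke model of $tr(\Phi)$ by taking disjuncts/conjuncts according to the strategy and keeping each branch's assignment fixed, and conversely any model of $tr(\Phi)$ yields such a strategy (each $\exists$-disjunct tells the verifier how to answer, $\Psi_{\mathrm{prop}}$ guarantees that the assignment reaching depth~$n$ is consistent with all earlier choices, and $\psi$ holds there by the base case of $tr$). Combined with the PSPACE upper bound already established, this gives PSPACE-completeness of K-satisfiability.
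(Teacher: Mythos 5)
There is a genuine gap: your reduction is not polynomial-time. In the clauses
\[
tr(Q\, x_{i+1}\,\Phi',\,i) \;=\; \ldiamond(x_{i+1}\land tr(\Phi',i{+}1)) \;\circ\; \ldiamond(\lnot x_{i+1}\land tr(\Phi',i{+}1)),\qquad \circ\in\set{\lor,\land},
\]
the subformula $tr(\Phi',i{+}1)$ is written out \emph{twice}, once under each diamond. Letting $T(i)$ denote the size of the formula produced at level $i$, you get $T(i)\geq 2\,T(i{+}1)$, hence $|tr(\Phi,0)|=\Theta(2^{n}|\psi|)$. The claimed bound $O(n^3+|\psi|)$ only accounts for $\Psi_{\mathrm{prop}}$; the translation of the quantifier prefix blows up exponentially, so the map $\Phi\mapsto tr(\Phi)$ cannot be computed in polynomial time and the reduction fails as stated. (The semantic argument itself — strategy trees in one direction, extraction of a verifier strategy plus propagation in the other — is fine; only the size analysis is broken, but for a hardness proof that is fatal.)

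The fix is to make sure the continuation of the game is mentioned only once per level. The paper does this by splitting the formula into two parts: a $TREE$ formula, described level by level with $\lbox^{i-1}[\ldiamond p_i\land\ldiamond\lnot p_i\land\cdots]$, which forces the model to contain the \emph{full} binary tree of assignments together with the downward propagation of $p_1,\dots,p_{i-1}$ (your $\Psi_{\mathrm{prop}}$ plays this role and is already of the right shape), and a single game formula $\ldiamond\lbox\cdots\ldiamond\lbox\,\chi$ in which the matrix $\chi$ occurs exactly once — existential levels contribute a $\ldiamond$ (pick a child) and universal levels a $\lbox$ (all children), so nothing is duplicated. Alternatively, you can keep your recursive style but rewrite the clauses without duplication, e.g.\ $tr(\exists x_{i+1}\Phi',i)=\ldiamond\,tr(\Phi',i{+}1)$ and $tr(\forall x_{i+1}\Phi',i)=\ldiamond x_{i+1}\land\ldiamond\lnot x_{i+1}\land\lbox\, tr(\Phi',i{+}1)$, relying on your propagation formula (and on each world at depth $i{+}1$ having a definite value of $x_{i+1}$) to record the choice; then $T(i)=T(i{+}1)+O(1)$ and the reduction becomes polynomial.
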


\begin{proof}
By reduction from TQBF. Let us take a quantified binary formula $\exists p_1 \forall p_2 \dots \exists p_{2n-1} \forall p_{2n} \chi$ where $\chi$ is propositional.
The game behind TQBF can be represented by the binary tree in which $p_i$ is chosen at the $i$-th level. Player $\exists$ plays at the root level (level 0), at level 2, at level 4, etc. while player $\forall$ responds at level 1, at level 3, etc. The leaves correspond to all propositional valuations. Once we reach a leaf, the game ends and player $\exists$ wins iff the corresponding valuation satisfies $\chi$. 

\begin{center}
\scalebox{0.8}{
	\begin{tikzpicture}[level distance=15mm,
	level 4/.style={sibling distance=14mm},
	level 3/.style={sibling distance=26mm},
	level 2/.style={sibling distance=50mm},
	level 1/.style={sibling distance=100mm},
	]
	\node {}
	child { node {$p_1$} 
		child { node {$p_1p_2$}
			child {node {$p_1p_2p_3$}
				child {node {$p_1p_2p_3p_4$}}
				child {node {$p_1p_2p_3$}}
			}
			child {node {$p_1p_2$}
				child {node {$p_1p_2p_4$}}
				child {node {$p_1p_2$}}
			}
		}
		child { node {$p_1$}
			child {node {$p_1p_3$}
				child {node {$p_1p_3p_4$}}
				child {node {$p_1p_3$}}
			}
			child {node {$p_1$}
				child {node {$p_1p_4$}}
				child {node {$p_1$}}
			}
		}
	}
	child { node {} 
		child { node {$p_2$}
			child {node {$p_2p_3$}
				child {node {$p_2p_3p_4$}}
				child {node {$p_2p_3$}}
			}
			child {node {$p_2$}
				child {node {$p_2p_4$}}
				child {node {$p_2$}}
			}
		}
		child { node {}
			child {node {$p_3$}
				child {node {$p_3p_4$}}
				child {node {$p_3$}}
			}
			child {node {}
				child {node {$p_4$}}
				child {node {}}
			}
		}
	};
	\end{tikzpicture}}
\end{center}

What is nice is that modal logic can express that a Kripke model contains the above tree:

$$TREE := \lbigand_{i=1}^{2n} \lbox ^{i-1} [\ldiamond p_i \land \ldiamond \lnot p_i \land \lbigand_{j<i} (p_j \lequiv \lbox p_j) \land (\lnot p_j \lequiv \lbox \lnot p_j)].$$

Formula $TREE$ explains explains how each $i$-th level works:

\begin{itemize}
\item  With $\lbox ^{i-1} = \lbox \dots \lbox$, we reach the $i-1$-th level;
\item $\ldiamond p_i \land^ \ldiamond \lnot p_i$ enforces the existence of a $p_i$-child and a $\lnot p_i$-child;
\item $\lbigand_{j<i} (p_j \lequiv \lbox p_j) \land (\lnot p_j \lequiv \lbox \lnot p_j)$ 
says the values of $p_1, \dots, p_{i-1}$ are copied to the successors.
\end{itemize}

On the input $\phi := \exists p_1 \forall p_2 \dots \exists p_{2n-1} \forall p_{2n} \chi$, the reduction computes in polynomial time the modal formula $tr(\phi) := TREE \land \ldiamond \lbox ... \ldiamond \lbox \chi$. The former $\phi$ is QBF-true iff the latter $tr(\phi)$ is K-satisfiable. Furthermore, $tr$ is computable in poly-time.
\end{proof}

\section*{Exercices}

\begin{enumerate}
	\item Write a deterministic algorithm for deciding the satisfiability for K.
	
	Hint: use backtracking.

	\item S5 is the modal logic interpreted in Kripke models where the relation is universal.
	\begin{enumerate}
		\item Prove that if $\phi$ is S5-satisfiable, then it is true in a model in which the number of worlds is bounded by the number of modal operators in $\phi$.
		\item Deduce that the satisfiability for S5 is in NP.
		\item Why is the satisfiability for S5 NP-hard?
	\end{enumerate}
	
	\item Prove that K is PSPACE-hard even with 0 variables! See \cite{DBLP:conf/aiml/ChagrovR02}
	
	\item Adapt the algorithm for the satisfiability for S4, the modal logic interpreted in Kripke models where the relation is reflexive and transitive. (difficult)
\end{enumerate}

\chapter{Verifying GNNs}
\label{chapter:verificationGNNs}

\fbox{
Key references: \cite{DBLP:conf/ijcai/NunnSST24} and \cite{DBLP:conf/icalp/BenediktLMT24}}

~

\index{verification}
Our goal is to be able to verify GNNs as follows.
We consider formulas in some logic evaluated on pointed graphs (for instance, modal logic or graded modal logic). Given a formula $\phi$, we write $\semone{\phi}$ for the set of pointed graphs $(G, u)$ satisfying $\phi$.
Let us list some verification tasks.
\begin{enumerate}
\item Satisfiability problem of a GNN: Given a GNN $\aGNN$, is there an input $(G, E, \labelling)$ that is positively classified by $\aGNN$? ($\semone{\aGNN} \neq \emptyset$)
\item Given a GNN $\aGNN$, given a specification formula $\phi$, are the inputs positively classified by $\aGNN$ exactly the inputs satisfying $\phi$? ($\semone{\aGNN} = \semone{\phi}$)
\item Given a GNN $\aGNN$, given a specification formula $\phi$, are the inputs positively classified by~$\aGNN$ satisfying $\phi$? ($\semone{\aGNN} \subseteq \semone{\phi}$)
\item Given a GNN $\aGNN$, given a specification formula $\phi$, are the inputs satisfying $\phi$ classified positively by $\aGNN$? ($\semone{\phi} \subseteq \semone{\aGNN}$)
\item Given a GNN $\aGNN$, given a specification formula $\phi$, does there exist an input satisfying $\phi$ and classified positively by $\aGNN$? ($\semone{\phi} \inter \semone{\aGNN} \neq \emptyset$)
\end{enumerate}

\cadregris{Link with Hoare logic}{
Problem 4 corresponds to the following Hoare logic triplet:

$$\set{\phi} N \set{output = true}$$

}

The methodology is to design a "superlogic" in which $\phi$ as well as the computation of the GNN $\aGNN$ can be described.

\section{Representing a GNN with a "logic"}

In this section, we introduce a \emph{lingua franca} to describe GNNs. The language just mimics the computation performed by a GNN. The language is inspired from the one in \cite{DBLP:journals/corr/abs-2502-16244}.

\newcommand{\gnnlogic}{GNN-logic\xspace}
\newcommand{\expression}{\vartheta} 
\newcommand{\agreggationfunction}{agg}
\newcommand{\setvertices}{V}
\newcommand\setedges E

\subsection{Syntax}

We consider expressions generated by the following grammar:

$$\expression ::= c \mid x_i \mid \activationfunction (\expression) \mid \agreggationfunction (\expression)
 \mid
\expression + \expression \mid c\times\expression$$

where $c$ is any number, $x_i$ is any feature, $\activationfunction$ is any symbol to denote any activation function, $\agreggationfunction$ is a symbol to represent any aggregation function (but it will interpreted as the sum in our case).

\subsection{Semantics}

The semantics mimics the computation performed by a GNN:

\begin{align*}
\sem {c}{G,u} & = c, \\
    \sem {x_i}{G,u} & = \ell(u)_i, \\
      \sem{\expression + \expression'}{G,u} & = \sem \expression{G,u} + \sem {\expression'}{G,u}, \\
      \sem{c \times \expression}{G,u} & = c \times \sem{\expression} {G,u}, \\
            \sem{\activationfunction (\expression)} {G,u} & = \semone{\activationfunction}(\sem \expression {G,u}),  \\
          \sem{\agreggationfunction (\expression)}{G,u} & = \Sigma_{v \mid u \setedges v}\sem {\expression}{G,v},
          \\
\end{align*}

We write $\sem{\expression \geq 1}{} = \set{G, u \mid \sem{\expression}{G, u} \geq 1}$.

\subsection{Correspondence with GNNs}

\begin{proposition}
Given a GNN $\aGNN$, there exists an expression $\expression$ such that\linebreak[4] $\semone{\aGNN}=\semone{\expression \geq 1}$.
\end{proposition}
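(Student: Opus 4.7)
The plan is to mirror the layer-by-layer computation of $\aGNN$ inside the expression language, one coordinate at a time, and then translate the final linear-inequality test $\weightcolor{w}^t \labelling_\numberoflayers(u) + \weightcolor{b} \geq 0$ into a single $\geq 1$ test by shifting the constant. Since expressions are scalar-valued whereas layer labellings take values in $\setR^d$, I will build a family of expressions $\expression_{t,j}$, for $t \in \set{0, \dots, \numberoflayers}$ and $j \in \set{1, \dots, d}$, such that for every pointed graph $(G, u)$ we have $\sem{\expression_{t,j}}{G,u} = \labelling_t(u)_j$.

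First, set $\expression_{0, j} := x_j$ so that $\sem{\expression_{0,j}}{G,u} = \labellinginitial(u)_j = \labelling_0(u)_j$. For the inductive step, assuming $\expression_{t-1, k}$ has already been built for every $k$, define
\[
\expression_{t, j} := \activationfunction\!\left( \sum_{k=1}^d (\weightcolor{A_t})_{jk} \times \expression_{t-1, k} ~+~ \sum_{k=1}^d (\weightcolor{B_t})_{jk} \times \agreggationfunction(\expression_{t-1, k}) ~+~ (\weightcolor{b_t})_j \right),
\]
where sums and scalar multiplications are expressed using the binary $+$ and $c \times \cdot$ constructs of the grammar, and $\activationfunction$ is the symbol interpreted as the GNN's activation function. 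A direct unfolding of the semantics of $\agreggationfunction$, namely $\sem{\agreggationfunction(\expression_{t-1,k})}{G,u} = \sum_{v \in E(u)} \sem{\expression_{t-1,k}}{G,v} = \sum_{v \in E(u)} \labelling_{t-1}(v)_k$, shows that $\sem{\expression_{t,j}}{G,u}$ exactly equals the $j$-th coordinate produced by $layer_t$ on vertex $u$, which is $\labelling_t(u)_j$.

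Finally, to capture the classification step, take
\[
\expression := \sum_{j=1}^d \weightcolor{w}_j \times \expression_{\numberoflayers, j} + (\weightcolor{b} + 1).
\]
Then $\sem{\expression}{G,u} = \weightcolor{w}^t \labelling_\numberoflayers(u) + \weightcolor{b} + 1$, so $\sem{\expression}{G,u} \geq 1$ iff $\weightcolor{w}^t \labelling_\numberoflayers(u) + \weightcolor{b} \geq 0$ iff $\aGNN(G, u)$ returns yes. Hence $\semone{\aGNN} = \semone{\expression \geq 1}$.

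The argument is essentially a bookkeeping induction on $t$; the only mild subtleties are (i) the mismatch between the vector-valued hidden states of the GNN and the scalar-valued semantics of the expression language, which forces us to introduce one expression per coordinate, and (ii) the mismatch between the threshold $\geq 0$ used by the GNN and the threshold $\geq 1$ used by $\sem{\cdot \geq 1}{}$, which is resolved by adding a constant $1$ to the final expression. I expect no real obstacle beyond this, and the construction is polynomial in the description of $\aGNN$.
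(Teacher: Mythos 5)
Your construction is correct and is essentially the paper's own argument: one scalar expression per coordinate per layer, built by induction to mirror $layer_t$, followed by the final linear test (the paper merely illustrates this on a concrete two-layer example rather than writing the general recursion). Your explicit handling of the threshold mismatch by adding the constant $1$ so that $w^t\labelling_\numberoflayers(u)+b\geq 0$ becomes a $\geq 1$ test is a detail the paper glosses over, and it is handled correctly.
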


\begin{proof}
See \cite{DBLP:journals/corr/abs-2502-16244} for a formal proof (even it is given for quantized GNNs). For being self-contained, we give a proof. Consider a GNN $N$:

\begin{itemize}
\item $\aGNNoutlayer{0}(G, u) = \labellinginitial(u)$;
\item $\aGNNoutlayer{t+1}(G,u) = \vec \activationfunction(\weightcolor{A_t} \times \aGNNoutlayer{t}(G,u) + \weightcolor{B_\timestep} \times \sum \multiset{\aGNNoutlayer{t}(G,v) \suchthat v \in E(u)} + \weightcolor{b_\timestep})$ for all $u \in V$.
\end{itemize}

with the stopping condition $\weightcolor{w}^t \labelling_\numberoflayers(u) + \weightcolor{b} \geq 0.$

The idea is to write the expression $\expression$.
To make it clearer we explain the proof via an example. Suppose:

\begin{itemize}
\item $\aGNNoutlayer{0}(G, u) = \labellinginitial(G, u)$;
\item $\aGNNoutlayer{1}(G,u) = \vec \activationfunction\left(\weightcolor{
\begin{pmatrix}
 2 & 1 \\
-1 & 4
\end{pmatrix}
} \times \aGNNoutlayer{0}(G,u) + 
\weightcolor{\begin{pmatrix}
 5 & 3 \\
2 & 6
\end{pmatrix}} \times \sum \multiset{\aGNNoutlayer{0}(G,v) \suchthat v \in E(u)} + \weightcolor{
\columnvector{1 \\ -2}
}\right)$ for all $u \in V$.
\item $\aGNNoutlayer{2}(G,u) = \vec \activationfunction\left(\weightcolor{
\begin{pmatrix}
 3 & 0 \\
-2 & 0
\end{pmatrix}
} \times \aGNNoutlayer{1}(G,u) + 
\weightcolor{\begin{pmatrix}
 -1 & 0 \\
0 & 5
\end{pmatrix}} \times \sum \multiset{\aGNNoutlayer{1}(G,v) \suchthat v \in E(u)} + \weightcolor{
\columnvector{0 \\ 0}
}\right)$ for all $u \in V$.
\end{itemize}

We suppose the stopping condition to $\weightcolor{2} \aGNNoutlayer{2}(G,v)[1] +\weightcolor{3}  \aGNNoutlayer{2}(G,v)[2] \geq 1$.

\end{proof}
\newcommand{\COMB}{comb}

Then, the corresponding \gnnlogic expression $\expression_\aGNN$ is given by: 

\begin{align*}
\psi_1 &= \alpha(2x_1 + x_2 + 5\agreggationfunction(x_1) - 3\agreggationfunction(x_2) + 1), \\ 
\psi_2 & := \alpha(-x_1 + 4x_2 + 2\agreggationfunction(x_1) + 6\agreggationfunction(x_2) - 2), \\
\chi_1 &:= \alpha(3\psi_1 - \agreggationfunction(\psi_1), \\
\chi_2 & := \alpha(-2\psi_1 + 5(\agreggationfunction(\psi_2)), \\
\varphi_A & := 2(\chi_1) - \chi_2 \geq 1.
\end{align*}

\begin{remark}
Note that some expressions do not represent a GNN. For instance, $\agreggationfunction(2\times x)$ does not syntactically correspond to a GNN. Indeed, aggregation is always computed on values of the form $\alpha(...)$.
\end{remark}

\section{Logic $K^\#$}

We now define logic $K^\#$ defined in \cite{DBLP:journals/corr/abs-2307-05150} and \cite{DBLP:conf/ijcai/NunnSST24}. A similar logic is defined in \cite{DBLP:conf/icalp/BenediktLMT24} but it does not have the $1_\phi$ construction. It has the same expressivity but probably not the same succinctness.

\subsection{Syntax}

\newcommand{\NTexpression}{\xi}
\newcommand{\Ap}{Ap}
\newcommand{\logicKsharpone}{K^\#}
\newcommand{\modalitynumber}{\#}
\newcommand{\istrue}[1]{1_{#1}}

Consider a countable set $\Ap$ of propositions. We define the language of logic $\logicKsharpone$ as the set of formulas generated by the following BNF:
\begin{align*}
	\phi & ::= p \mid \lnot \phi \mid \phi \lor \phi \mid \NTexpression \geq 0 \\ 
	\NTexpression & ::= c \mid \istrue\phi \mid \modalitynumber \phi \mid \NTexpression + \NTexpression \mid c\times \NTexpression 
\end{align*}

\subsection{Semantics}

\newcommand{\semanticsvalue}[2]{[[#1]]_{#2}}

As in modal logic, a formula $\phi$ is evaluated in a pointed graph $(G, u)$ (also known as pointed Kripke model). 
We define the truth conditions $(G,u) \models \phi$ ($\phi$ is true in $u$) by 
	\begin{center}
		\begin{tabular}{lll}
			$(G,u) \models p$ & if & $\labelling(u)(p) = 1$, \\
			$(G,u) \models \neg \phi$ & if & it is not the case that $(G,u) \models \phi$, \\
			$(G,u) \models \phi \land \psi$ & if & $(G,u) \models \phi$ and $(G,u) \models \psi$, \\
			$(G,u) \models \NTexpression \geq 0$ & if &  $\semanticsvalue{\NTexpression}{G,u} \geq 0$, \\
		\end{tabular}
	\end{center}
	and the semantics $\semanticsvalue{\NTexpression}{G,u}$ (the value of $\NTexpression$ in $u$) of an expression $\NTexpression$ by mutual induction on $\phi$ and $\NTexpression$ as follows.
	\begin{center}
		$\begin{array}{ll}
			\semanticsvalue{c}{G, u} & = c, \\
			\semanticsvalue{\NTexpression_1+\NTexpression_2}{G, u} & = \semanticsvalue{\NTexpression_1}{G,u}+\semanticsvalue{\NTexpression_2}{G,u}, \\
			\semanticsvalue{c \times \NTexpression}{G, u} & = c \times \semanticsvalue{\NTexpression}{G,u}, \\
			\semanticsvalue{\istrue\phi}{G, u} & = \begin{cases}
				1 & \text{if $(G,u) \models \phi$} \\
				0 & \text{else},
			\end{cases} \\  
			\semanticsvalue{\modalitynumber\phi}{G, u} & = \card{\{v \in \setvertices \mid (u,v) \in \setedges \text{ and } (G,v) \models \phi\}}.
		\end{array}$
	\end{center}
We illustrate it in the next example.
\begin{example}
	\begin{figure}
		\centering
		\begin{tikzpicture}[scale=2, rotate=90]
		\tikzstyle{vertex} = [circle, draw];
			\node[vertex] (u) at (-1, 0) {};
			\node[vertex] (v) at (-0.5, 1) {$p$};
			\node at (-0.5, 1.3) {$u$};
			\node[vertex] (w) at (0, 0) {$q$};
			\node[vertex] (y) at (-1, -1) {$p$};
			\draw[->] (v) edge (u);
			\draw[->] (v) edge (w);
			\draw[->] (u) edge (y);
		\end{tikzpicture}
		\caption{Example of a pointed graph $G, u$. We indicate true propositional variables at each vertex.}
		\label{fig:pointedgraph}
	\end{figure}
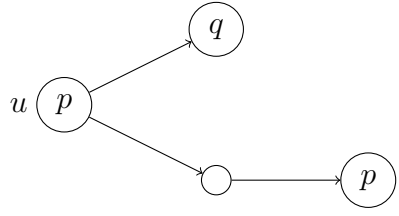
	
	Consider the pointed graph $G, u$ shown in Figure~\ref{fig:pointedgraph}. We have $G, u \models p \land (\modalitynumber \lnot p \geq 2) \land \modalitynumber (\modalitynumber p \geq 1) \leq 1$. Indeed, $p$ holds in $u$, $u$ has (at least) two successors in which $\lnot p$ holds. Moreover, there is (at most) one successor which has at least one $p$-successor.
\end{example}

\begin{proposition}
GML is a syntactic fragment of \Ksharp.
\end{proposition}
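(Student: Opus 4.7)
The plan is to exhibit an explicit translation $tr$ from GML formulas into $\Ksharp$ formulas and to verify by structural induction that the two semantics agree, so that GML can literally be viewed as a sub-language of $\Ksharp$ up to the obvious syntactic rewriting of the graded diamond.

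First I would define $tr$ by structural induction: $tr(p) := p$, $tr(\lnot \phi) := \lnot tr(\phi)$, $tr(\phi \lor \psi) := tr(\phi) \lor tr(\psi)$, and for the graded diamond, $tr(\ldiamond^{\geq k} \phi) := (\modalitynumber tr(\phi) + (-k)) \geq 0$. Observe that the right-hand side of the last clause is a well-formed $\Ksharp$ formula: the expression grammar provides the constants $c$, the counting construct $\modalitynumber \psi$, and the binary sum $\NTexpression + \NTexpression$, so $\modalitynumber tr(\phi) + (-k)$ is a legal expression $\NTexpression$, which we then wrap with the atomic test $\NTexpression \geq 0$.

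Next, I would prove by induction on $\phi$ that for every pointed graph $(G, u)$, $(G, u) \models \phi$ under GML semantics iff $(G, u) \models tr(\phi)$ under $\Ksharp$ semantics. The atomic, negation, and disjunction cases follow immediately because the truth conditions for propositions and Boolean connectives coincide verbatim in the two logics. For the graded diamond, by GML semantics $(G, u) \models \ldiamond^{\geq k} \phi$ iff there exist $k$ distinct successors $v_1, \dots, v_k \in E(u)$ with $(G, v_i) \models \phi$; by the induction hypothesis this is equivalent to $\card{\{v \in \setvertices \mid (u, v) \in \setedges \text{ and } (G, v) \models tr(\phi)\}} \geq k$, which by definition of $\semanticsvalue{\modalitynumber tr(\phi)}{G, u}$ rewrites as $\semanticsvalue{\modalitynumber tr(\phi)}{G, u} + (-k) \geq 0$, i.e.\ $(G, u) \models tr(\ldiamond^{\geq k} \phi)$.

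The translation is linear in the size of the input formula and essentially an abbreviation unfolding, so there is no real obstacle: the only point worth underlining is that constants and addition in the expression language of $\Ksharp$ suffice to turn the counting modality $\modalitynumber$ into an arbitrary threshold comparison $\modalitynumber \psi \geq k$, which is exactly what GML's graded diamond needs.
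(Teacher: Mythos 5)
Your proposal is correct and follows essentially the same route as the paper, which simply observes that $\ldiamond^{\geq k}\phi$ is shorthand for $\#\phi \geq k$; your version merely spells out the details, including the (helpful) observation that the threshold test must be unfolded to $\modalitynumber\, tr(\phi) + (-k) \geq 0$ to fit the official grammar, and the routine semantic induction.
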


\begin{proof}
$\ldiamond^{\geq k} \phi$ is a shorthand for $\# \phi \geq k$.
\end{proof}

\section{$K^\#$ and truncReLU-GNNs}

\index{truncated ReLU}

\newcommand\truncReLU{\mathsf{truncReLU}}

In this section, we consider GNNs where the activation function $\activationfunction$ is $\truncReLU$. We also suppose that the features are in $\set{0, 1}$ (in \cite{DBLP:journals/corr/abs-2307-05150} and \cite{DBLP:conf/ijcai/NunnSST24}, graphs are tagged with propositions while in \cite{DBLP:conf/icalp/BenediktLMT24}, vertices have colors).

\subsection{From $K^\#$ to truncReLU-GNNs}

\newcommand{\mathgray}[1]
{\textcolor{gray}{\ensuremath{#1}}}

\newcommand{\trtoGNNL}{tr}

\begin{proposition}
\label{proposition:KsharptoGNN}
For all $K^\#$-formulas $\phi$, there is a GNN-expression $tr(\phi)$, computable in poly-time in $|\phi|$ such that $$\semone{\phi} = \semone{tr(\phi)\geq 1}.$$
\end{proposition}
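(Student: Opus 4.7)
The plan is to define the translation $tr$ by simultaneous structural induction on formulas $\phi$ and expressions $\NTexpression$, maintaining two invariants: (i) for every formula $\phi$, $\sem{tr(\phi)}{G,u} \in \set{0,1}$ with value $1$ iff $(G,u) \models \phi$; and (ii) for every expression $\NTexpression$, $\sem{tr(\NTexpression)}{G,u} = \semanticsvalue{\NTexpression}{G,u}$. The required equality $\semone{\phi} = \semone{tr(\phi) \geq 1}$ then follows immediately from (i), since $tr(\phi) \geq 1$ iff $tr(\phi) = 1$ iff $(G,u) \models \phi$.

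For the routine cases I would set $tr(p) := x_p$ (identifying propositions with coordinates of the feature vector), $tr(\lnot \phi) := 1 + (-1) \times tr(\phi)$, and $tr(\phi \lor \psi) := \truncReLU(tr(\phi) + tr(\psi))$; each preserves invariant (i) by direct case analysis using the fact that the arguments are in $\set{0,1}$. On the expression side, I set $tr(c) := c$, $tr(\NTexpression_1 + \NTexpression_2) := tr(\NTexpression_1) + tr(\NTexpression_2)$, $tr(c \times \NTexpression) := c \times tr(\NTexpression)$, and $tr(\istrue\phi) := tr(\phi)$ (which by (i) already equals the indicator of $\phi$). For the counting construct, $tr(\modalitynumber\phi) := agg(tr(\phi))$ satisfies (ii): each successor $v$ contributes exactly $0$ or $1$ to the sum according to whether $(G,v) \models \phi$, so the total equals $|\{v : uEv, (G,v) \models \phi\}|$. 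All of these are syntactically legal constructs in the GNN-expression grammar.

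The delicate case, and the main obstacle, is $\NTexpression \geq 0$, where an unbounded real-valued quantity must be clamped to $\set{0,1}$. My plan is to set $tr(\NTexpression \geq 0) := \truncReLU(M \cdot tr(\NTexpression) + 1)$ where $M$ is a positive rational constant chosen large enough that, for every pointed graph $(G,u)$, $\semanticsvalue{\NTexpression}{G,u} < 0$ implies $M \cdot \semanticsvalue{\NTexpression}{G,u} \leq -1$. The existence of such an $M$ follows from a secondary induction on $\NTexpression$: the realisable values of $\semanticsvalue{\NTexpression}{G,u}$ all lie in the $\setZ$-module generated by the (rational) constants occurring syntactically in $\NTexpression$, so taking $M$ to be the least common denominator of those constants forces the minimum of $|\semanticsvalue{\NTexpression}{G,u}|$ over its nonzero values to be at least $1/M$. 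Hence when $\semanticsvalue{\NTexpression}{G,u} < 0$ we have $M \cdot \semanticsvalue{\NTexpression}{G,u} + 1 \leq 0$ so $\truncReLU$ returns $0$, and when $\semanticsvalue{\NTexpression}{G,u} \geq 0$ the argument is at least $1$ and $\truncReLU$ returns $1$, which establishes (i).

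Finally, polynomial-time computability is immediate: each constructor produces $O(1)$ extra symbols in $tr$, and the scaling constant for each subformula $\NTexpression \geq 0$ is the least common denominator of rationals written explicitly in $\NTexpression$, computable in polynomial time. Hence both $|tr(\phi)|$ and the cost of building $tr(\phi)$ are polynomial in $|\phi|$.
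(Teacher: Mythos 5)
Your construction follows essentially the same route as the paper's: a simultaneous structural induction maintaining (i) that $tr(\phi)$ evaluates to a value in $\set{0,1}$ equal to the indicator of $\phi$ and (ii) that the translation of an expression evaluates to exactly $\semanticsvalue{\NTexpression}{G,u}$, with $\truncReLU$ used as the threshold gate. The cosmetic differences (you treat $\lor$ where the paper treats $\land$; you write $1+(-1)\times tr(\phi)$ where the paper writes $1-\truncReLU(tr(\phi))$) are immaterial. The paper also silently switches from the official comparison $\NTexpression\geq 0$ to $\NTexpression\geq 1$ inside its proof; your $+1$ shift handles $\geq 0$ as written, which is cleaner.

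The one genuine problem is your choice of the scaling constant $M$ in the comparison case. You claim the realisable values of $\semanticsvalue{\NTexpression}{G,u}$ lie in the $\setZ$-module generated by the constants occurring syntactically in $\NTexpression$, so that their least common denominator clears all denominators. This fails under nested scalar multiplication, because the module is generated by \emph{products} of constants along nested $c\times(\cdot)$ applications. Concretely, for $\NTexpression = \tfrac13\times(\tfrac13\times\modalitynumber p) + (-\tfrac13)$ the realisable values are $(n-3)/9$, your $M$ is $3$, and at a vertex with $n=2$ successors satisfying $p$ one gets $\truncReLU(M\cdot(-\tfrac19)+1)=\tfrac23\notin\set{0,1}$; feeding this into $tr(\lnot(\NTexpression\geq 0))$ yields $\tfrac13$, so a pointed graph satisfying $\lnot(\NTexpression\geq 0)$ is misclassified. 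The repair is easy (take $M$ to be the product of all denominators, or the LCD raised to the multiplication-nesting depth, still of polynomial bit-size), and the issue is vacuous in the paper's setting, which takes the constants of \Ksharp{} to be integers so that every $\semanticsvalue{\NTexpression}{G,u}$ is an integer and $M=1$ suffices. But as stated, your justification of the key invariant in the delicate case is incorrect.
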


\newcommand{\textiff}{\text{ iff }}

\begin{proof}

\begin{align*}
\mathgray{tr({\text{\Ksharp-expression or formula}})} & \mathgray{= \text{GNN-expression}} \\
tr(x_i=1)& =x_i \text{ provided $x_i$ takes its value in $\set{0, 1}$}\\
tr(\lnot \phi) & = 1 - \truncReLU(tr(\phi)) \\
tr(\phi \land \psi) & = \truncReLU(tr(\phi) + tr(\psi) - 1) \\
tr(\expression \geq 1) & = \truncReLU(\tau(\expression)) \\
\tau(\# \phi) &= \agreggationfunction(tr(\phi)) \\
\tau(\expression + \expression') & = \tau(\expression) + \tau(\expression') \\
\tau(1_\phi) & = tr(\phi) \\
\tau(c) & = c \\
\tau(c\expression) &= c\tau(\expression)
\end{align*}

Note that as constants $c$ are integers, the weights in the produced GNN $tr(\phi)$ are integers.
\begin{property}
For all $K^\#$-formulas $\phi$, 
$\sem{tr(\phi)}{G, u} \in \set{0, 1}$.
\end{property}
\begin{proof}
By definition of $tr$.
\end{proof}

We prove it by mutual induction on $\phi$ and on $\expression$ that:
\begin{enumerate}
\item $G, u \models \phi$ iff $\sem{tr(\phi)}{G, u} = 1$;
\item $\sem{\expression}{G, u} = \sem{\tau(\expression)}{G, u}$.
\end{enumerate}

\begin{itemize}
\item $G, u \models x_i = 1$ iff $\sem{x_i}{G, u} = 1$;
\item \begin{align*}
G, u \models \lnot \phi & \textiff G, u \not \models \phi \\
& \textiff \sem{tr(\phi)}{G, u} \neq 1 \\
&  \textiff  \sem{tr(\phi)}{G, u} =0 \\
&   \textiff  \sem{\truncReLU(tr(\phi))}{G, u} =0 \\
& \textiff \sem{tr(\lnot \phi)}{G, u} = 1.
\end{align*}

\item \begin{align*}
G, u \models \expression \geq 1 & \textiff \sem{\expression}{G, u} \geq 1 \\
 & \textiff \sem{\tau(\expression)}{G, u} \geq 1 \\
 & \textiff \sem{\truncReLU(\tau(\expression))}{G, u} = 1 \\
 & \textiff \sem{tr(\expression \geq 1)}{G, u} = 1
\end{align*}

\item \begin{align*}
\sem{\# \phi}{G, u} & = \card{\{v \in \setvertices \mid (u,v) \in \setedges \text{ and } (G,v) \models \phi\}} \\
 & = \card{\{v \in \setvertices \mid (u,v) \in \setedges \text{ and } \sem{tr(\phi)}{G, v} = 1\}} \\
 & = \sum_{v \mid u E v} \sem{tr(\phi)}{G, v} \\
 & = \sem{\agreggationfunction(tr(\phi))}{G, u}
\end{align*}
\end{itemize}

\end{proof}

\subsection{From truncReLU-GNNs to $K^\#$}
\newcommand{\trtoKsharp}{tr'}
\newcommand\trtoKsharpt{\trtoKsharp_t}
\newcommand\trtoKsharptm{\trtoKsharp_{t-1}}

The following translation function $\trtoKsharp$ takes a GNN described as an expression $\expression$ and gives an equivalent $\logicKsharpone$-expression.

\begin{proposition}
\label{proposition:truncrelugnnstoKsharp}
For all GNN-expressions $\expression$, there is a $\Ksharp$-expression such that \linebreak[4] $\semone{\expression \geq 1} = \semone{\trtoKsharp(\expression) \geq 1}$. Furthermore, $\trtoKsharp$ is computable in poly-time in $\expression$ and \textbf{ and the common denominator of weights (only good if it is represented in unary)}.
\end{proposition}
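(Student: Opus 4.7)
My plan is to prove Proposition~\ref{proposition:truncrelugnnstoKsharp} by structural induction on the GNN-expression $\expression$. I let $D$ be the least common multiple of the denominators of the rational constants appearing in $\expression$; because the statement only claims polynomial time when $D$ is presented in unary, spending $O(D)$ effort per occurrence will be acceptable.

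For every sub-expression $\expression'$, I plan to construct simultaneously (i) a $\logicKsharpone$-expression $\NTexpression_{\expression'}$ together with an integer scaling factor $D_{\expression'}$ (a power of $D$) such that $\sem{\NTexpression_{\expression'}}{G,u} = D_{\expression'} \cdot \sem{\expression'}{G,u}$ is always an integer, and (ii) for each relevant threshold $k$, a $\logicKsharpone$-formula $\phi^{\geq k}_{\expression'}$ equivalent to $\NTexpression_{\expression'} - k \geq 0$. The base cases ($\NTexpression_c := cD$; $\NTexpression_{x_i} := D \cdot \istrue{p_i}$ where $p_i$ is the proposition for feature~$x_i$) and the linear cases ($+$ and $c\times$, after rescaling both operands to a common factor) will be routine.

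The two critical inductive cases will be activation and aggregation. For $\truncReLU(\expression')$, I would encode
$$\NTexpression_{\truncReLU(\expression')} := \sum_{k=1}^{D_{\expression'}} \istrue{\phi^{\geq k}_{\expression'}},$$
using the identity $\max(0, \min(D_{\expression'}, m)) = \sum_{k=1}^{D_{\expression'}} \mathbf{1}[m \geq k]$ valid for integer $m$; the $\istrue{\cdot}$ construct is exactly what encodes such an indicator in $\logicKsharpone$. The genuinely hard case is aggregation, because $\logicKsharpone$ offers only the counting modality $\modalitynumber$ over successors satisfying a formula, not a sum of expression values over successors. The trick I plan to use is that when $\sem{\NTexpression_{\expression'}}{G,v} \in \{0, \ldots, K\}$ uniformly in $v$, one has
$$\sum_{v : u E v} \sem{\NTexpression_{\expression'}}{G,v} \;=\; \sum_{k=1}^{K} \left|\{v : u E v,\, \sem{\NTexpression_{\expression'}}{G,v} \geq k\}\right|,$$
so that $\NTexpression_{\agreggationfunction(\expression')} := \sum_{k=1}^{K} \modalitynumber\, \phi^{\geq k}_{\expression'}$ does the job. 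The bound $K$ is immediate when the argument of $\agreggationfunction$ is a $\truncReLU$-output; for arbitrary GNN-expressions I would first push aggregations through $+$ and $c\times$, reducing to this well-behaved case.

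Once $\NTexpression_\expression$ is in hand, I set $\trtoKsharp(\expression) := \NTexpression_\expression - (D_\expression - 1)$, so that $\trtoKsharp(\expression) \geq 1$ iff $\NTexpression_\expression \geq D_\expression$ iff $\sem{\expression}{G,u} \geq 1$. The main obstacle I expect is the aggregation step, whose sum-of-thresholds encoding forces me to enumerate up to $D_{\expression'}$ terms; carefully tracking how the scaling factor $D_{\expression'}$ grows through the induction (it may balloon to roughly $D^L$ where $L$ is the $\truncReLU$-nesting depth) is the secondary difficulty, and is precisely what makes the unary presentation of $D$ essential for the polynomial-time claim.
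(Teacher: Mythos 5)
Your construction is essentially the paper's own proof: scale each sub-expression by a power of the common denominator so that it takes integer values, replace $\truncReLU$ by a sum of indicator terms ranging over the finitely many possible scaled values, replace aggregation by the corresponding sum of counting modalities $\#$, and compare the result against the accumulated scaling factor at the end. The only cosmetic differences are that you use threshold indicators $1_{\cdot \geq k}$ where the paper uses equality indicators $k \times 1_{\cdot = k}$, and that you reduce general aggregation arguments to $\truncReLU$-outputs by linearity where the paper simply restricts to that form; both variants carry the same $D^{L}$ blow-up in the number of terms, which the paper itself acknowledges in the remark following the proof.
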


\begin{proof}

We first suppose that \textbf{weights are integers}, as in \cite{DBLP:journals/corr/abs-2307-05150} and \cite{DBLP:conf/ijcai/NunnSST24}.

\begin{align*}
\mathgray{\trtoKsharp({\text{expression for a GNN}})} & \mathgray{= \text{\Ksharp-expression}} \\
\trtoKsharp(x_i) &= x_i \\
\trtoKsharp(c) &= c \\
\trtoKsharp(c\expression)  &= c \times \trtoKsharp(\expression) \\
\trtoKsharp(\expression + \expression') &= \trtoKsharp(\expression) + \trtoKsharp(\expression') \\
\trtoKsharp(\truncReLU(\expression)) &= 1_{\trtoKsharp(\expression) \geq 1} \\
\trtoKsharp(\agreggationfunction(\expression)) &= \#(\trtoKsharp(\expression) \geq 1) \text{ provided $\expression$ is of the form $\truncReLU(.)$}
\end{align*}

\begin{proof}
We prove by induction on $\expression$ that $\sem{\expression}{G, u} = \sem{\trtoKsharp(\expression)}{G, u}$.
\end{proof}

\paragraph{Translating GNNs with rational weights. }
\newcommand{\truncReLUten}{\truncReLU_{M}}

So far we handled GNNs with integer weights.
In order to handle weights that are rationals, consider $M$ a common denominator. Let us analyse the computation of a GNN.

\begin{itemize}
\item $\aGNNoutlayer{0}(G, u) = \labellinginitial(u)$;
\item $\aGNNoutlayer{t}(G,u) = \vec \activationfunction(\weightcolor{A_t} \times \aGNNoutlayer{t-1}(G,u) + \weightcolor{B_\timestep} \times \sum \multiset{\aGNNoutlayer{t-1}(G,v) \suchthat v \in E(u)} + \weightcolor{b_\timestep})$ for all $u \in V$.
\end{itemize}

By induction on $t$, we prove that that $\aGNNoutlayer{t}(G,u) \in \set{0, \frac{1}{M^t}, \frac{2}{M^t}, \dots, 1}$. The idea is then to multiply by $M^t$ at round $t$ and to simulate the  
the activation function $\truncReLU$ by:

$$x \mapsto \max(0, \min(M^t, x)).$$

The translation is then:
\begin{align*}
\mathgray{\trtoKsharpt({\text{expression for a GNN}})} & \mathgray{= \text{\Ksharp-expression}} \\
\trtoKsharpt(x_i) &=  M^t \times x_i \\
\trtoKsharpt(c) &= M^t \times c \\
\trtoKsharpt(c\expression)  &=  M\times c \times \trtoKsharptm(\expression) \\
\trtoKsharpt(\expression + \expression') &= \trtoKsharpt(\expression) + \trtoKsharpt(\expression') \\
\trtoKsharpt(\truncReLU(\expression)) &=
1_{\trtoKsharpt(\expression) = 1} + 2 \times 1_{\trtoKsharpt(\expression) = 2} + \dots + M^t \times 1_{\trtoKsharpt(\expression) = M^t} \\ 
\trtoKsharpt(\agreggationfunction(\expression)) &= 
\#(\trtoKsharptm(\expression) = 1) + 2 \times \#(\trtoKsharptm(\expression) = 2) + \dots M^{t-1} \times \#(\trtoKsharptm(\expression) = M^{t-1})
\\
& ~~~~~~~~~~~~~~~~\text{ provided $\expression$ is of the form $\truncReLU(.)$}
\end{align*}

\begin{fact}
$\sem{\trtoKsharpt(\expression)}{G,u} = M^t \times \sem{\expression}{G,u}$.
\end{fact}

\begin{fact}
For all GNNL-expressions $\expression$ with $L$ layers, $\semone{\expression \geq 1} = \semone{\trtoKsharp_L(\expression) \geq M^L}$.
\end{fact}

\end{proof}

\begin{remark}
If $M$ is written in unary, there is no blow-up. Otherwise there is an exponential blow-up. The idea in the algorithm we will see at the end that it is sufficient to guess the value of $\trtoKsharp(\expression)$ between 0 and $M$. But there is still a blow-up problem for $\agreggationfunction$. Is it harmful? We will see later, later...
\end{remark}

In 
\cite{DBLP:conf/icalp/BenediktLMT24}, the authors explain a methodology to capture any activation functions that is piece-wise linear and eventually constant.

\section{Reduction to the satisfiability of $K^\#$}

We explain how to solve problem 4 ($\semone{\phi} \subseteq \semone{\aGNN}$). First we suppose that $\phi$ is already in $K^\#$.
 Then we use  \Cref{proposition:truncrelugnnstoKsharp} to get $\tau'(\aGNN)$.
 We then check that $\phi \rightarrow \tau'(\aGNN)$ is $\Ksharp$-valid, i.e. that $\lnot (\phi \rightarrow \tau'(\aGNN))$ is not $\Ksharp$-satisfiable.

\section{Going further}

\subsection{ReLU}
 \index{ReLU}
 \index{NEXPTIME}

\newcommand{\existsPAstar}{\exists PA^*}

Handling ReLU is more involved, and we do not have a clear correspondence with modal logic yet
\cite{DBLP:conf/icalp/BenediktLMT24}. 
In \cite{DBLP:conf/lics/HaaseZ19}, they generalize existential Presburger arithmetics with Kleene star as follows.

\begin{definition}
The Kleene star of a set $M \subseteq \setZ^d$ of vectors is defined by:

$$M^* := \bigcup_{k\geq 0}\set{\sum_{i=1}^k v_i \suchthat v_i \in M}.$$
\end{definition}

In other words, the Kleene star of $M$ contains any (finite) sum of vectors from $M$. 

We extend Presburger arithmetic with Kleene star.
 The obtained logic is \emph{Existential Presburger arithmetic with star} $\exists PA^*$. The syntax is:

$$ \phi, \psi, ... \grammaris \vec a \cdot \vec z \geq c \grammarsep \phi \land \psi \grammarsep \phi \lor \psi \grammarsep \exists y \phi \grammarsep \phi^*$$

The semantics $\semone\phi$ for $\phi$ is the set of vectors of values in $\setZ$ such that $\phi$ is true when we replace the free variables in $\phi$ by these values. By induction on $\phi$, we define $\semone\phi$ (w.l.o.g. we suppose that $\phi$ and $\psi$ contains the same free variables for $\phi \lor \psi$ and $\phi \land \psi$):

\begin{align*}
\semone{\vec a \cdot \vec z \geq c} & = \set{\vec x \in \setZ^n \suchthat \vec a \cdot \vec x \geq c} \\
\semone{\phi \lor \psi} & = \semone{\phi} \union \semone{\psi} \\
\semone{\phi \land \psi} & = \semone{\phi} \cap \semone{\psi} \\
\semone{\phi^*} & = \semone{\phi}^*
\end{align*}

\begin{theorem}(Th. III.1 in \cite{DBLP:conf/lics/HaaseZ19}) The satisfiability problem of $\existsPAstar$ is NEXPTIME-complete, and NP-complete if the number of nested star is bounded.
\end{theorem}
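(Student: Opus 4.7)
The plan is to establish the upper and lower bounds separately, and to handle the bounded-nesting case by reusing the upper-bound argument.

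For the \textbf{lower bound} on NEXPTIME-hardness, I would reduce from a natural NEXPTIME-complete problem such as the $2^n \times 2^n$ Wang-tiling problem (already used in these notes to prove NEXPTIME-hardness of FO$_2$). Using an $\existsPAstar$-formula with one level of star nesting, one can encode an exponentially long sequence of tile placements by summing polynomially described ``unit vectors'' indexed by a position on the torus; the star operator naturally produces arbitrarily many such unit contributions, and Presburger constraints on the cumulative sums pin down the uniqueness of each cell's tile, horizontal and vertical compatibility of neighbours, and the presence of the seed tile at the origin. The succinctness of $*$ is exactly what allows the formula to remain of size polynomial in $n$ while still describing a grid of exponential side.

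For the \textbf{NEXPTIME upper bound}, the key structural tool is an Eisenbrand--Shmonin-style bound: whenever $\vec v \in \semone{\phi}^*$, there is a decomposition $\vec v = \vec v_1 + \cdots + \vec v_k$ with each $\vec v_i \in \semone{\phi}$, where both $k$ and the bit-size of each $\vec v_i$ are bounded by a function exponential in the star-nesting depth and polynomial in the remainder of the formula. The plan is then, by induction on the nesting depth of stars, to non-deterministically guess a witness ``tree'': at each star-subformula one guesses a list of at most exponentially many summand-vectors, each together with a recursive witness for the subformula underneath; at the leaves one is left with a single $\exists PA$-satisfiability check, which is in NP. All guesses fit in exponential space and verification runs in polynomial time in the exponential-size certificate, giving NEXPTIME overall.

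For the \textbf{bounded-nesting NP case}, I would re-run the same upper-bound argument but observe that when the star-nesting depth is a constant, the iterated blow-up in the Eisenbrand--Shmonin bound composes only a constant number of times and therefore collapses to a polynomial bound on the total size of the witness tree. Verification then stays in NP. The matching NP lower bound is immediate since $\exists PA$ (zero nested stars) is itself NP-complete. The \textbf{main obstacle} will be establishing the Eisenbrand--Shmonin-style bound in the setting where the ``generator set'' $M$ is itself $\existsPAstar$-definable rather than a finite explicit set of vectors: the classical statement is for finitely generated semigroups, so the care is in iterating the bound through nested stars in a way that (i) the final summand count and bit-length are exponential in the whole formula size for the general case, and (ii) they degenerate to polynomial when the nesting depth is $O(1)$.
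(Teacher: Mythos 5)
These notes do not actually prove this theorem: it is imported verbatim from Haase and Zetzsche, and the text moves straight on to the corollary about ReLU-GNNs, so there is no in-paper proof to measure your attempt against. Judged on its own terms, your upper-bound plan is in the right spirit. Small-witness Carath\'eodory bounds of Eisenbrand--Shmonin type, guessing explicit generator vectors at each star, and reducing to an existential Presburger check in \NP is essentially how such results are obtained, and you correctly isolate the hard part, namely extending the bound from finite explicit generator sets to sets that are themselves $\existsPAstar$-definable. But the accounting is not yet a proof: you need the blow-up composed over up to $|\phi|$ nesting levels to stay single-exponential while simultaneously degenerating to polynomial at constant depth, and a bound that is ``exponential in the star-nesting depth and polynomial in the remainder'' does not obviously compose that way (iterating a polynomial bound $h$ times makes the degree grow exponentially in $h$). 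The bound has to be set up so that bit-lengths of witnesses grow only additively per nesting level.

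The lower bound as you describe it cannot work. You propose a reduction from exponential tiling ``using an $\existsPAstar$-formula with one level of star nesting''. A formula with one level of star nesting has bounded star-height, and by the second half of the very theorem you are proving, satisfiability at bounded star-height is in \NP; if your reduction existed it would place a \NEXPTIME-hard problem in \NP, contradicting the nondeterministic time hierarchy. Any correct \NEXPTIME-hardness proof must therefore exploit unbounded star nesting in an essential way. Separately, even setting the star-height issue aside, the encoding is underspecified: the star produces an unordered multiset of summands, and it is not clear how polynomially many linear constraints on cumulative sums could enforce local horizontal and vertical compatibility between $2^{2n}$ individual cells without an exponential number of coordinates.
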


\begin{corollary}
(Th. 6.20 in \cite{DBLP:conf/icalp/BenediktLMT24})
The satisfiability problem of ReLU-GNNs is NEXPTIME-complete, and NP-complete if the number of layers is bounded.
\end{corollary}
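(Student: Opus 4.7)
The plan is to establish both directions by a tight translation between satisfiability of ReLU-GNNs and satisfiability of $\exists PA^*$, so that the theorem just quoted above transfers \emph{verbatim} to the GNN setting. For the upper bounds, given a ReLU-GNN $\aGNN$ with $\numberoflayers$ layers and rational weights, I would build an $\exists PA^*$-formula $\Phi_\aGNN$ that is satisfiable iff $\semone{\aGNN} \neq \emptyset$. The construction is by induction on the layer $t$: after clearing a common denominator, the set $S_t$ of values $\aGNNoutlayer t(G,u)$ that a vertex can take is Presburger-definable using the variables already introduced at layer $t-1$. The update rule at $u$ requires summing the multiset $\multiset{\aGNNoutlayer{t-1}(G,v) \suchthat v \in E(u)}$, and this is exactly what $S_{t-1}^*$ expresses: an arbitrary finite multiset-sum of vectors taken from a Presburger-definable set. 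Composing the layers therefore yields a formula with at most $\numberoflayers$ nested stars. The final linear classifier and the existence of a designated vertex are straightforwardly added. Applying the cited Theorem III.1 of \cite{DBLP:conf/lics/HaaseZ19} then gives NEXPTIME in general, and NP when $\numberoflayers$ is fixed (bounded star-nesting).

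For the lower bounds I would go the other way, reducing $\exists PA^*$ satisfiability to ReLU-GNN satisfiability and preserving the nesting depth of $*$ as the number of GNN layers. Linear inequalities and Boolean combinations are handled inside a single vertex by $\truncReLU$/$\mathrm{ReLU}$ gadgets (using $\max$ and linear combinations to implement comparison, conjunction and disjunction, as is standard for piecewise-linear networks). Existentials over integer variables are implemented by the free choice of how many neighbours of a designated type the graph contains, since the aggregator $\sum$ over $E(u)$ turns multiplicities into integer values at the next layer. One Kleene star is then paid for by one additional GNN layer, because iterating aggregation over a Presburger-definable set of neighbour values is exactly what the ReLU layer above it computes. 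This transfers NEXPTIME-hardness for unrestricted $\aGNN$, and NP-hardness when the number of layers (equivalently, the star-depth) is bounded.

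The main obstacle, and the part that deserves the most care, is the interface between the real/rational semantics of ReLU-GNNs and the integer semantics of $\exists PA^*$. Two points must be handled. First, a small-witness lemma is needed: for a GNN with integer weights (or rational weights with common denominator $M$), if $\aGNN$ accepts some pointed graph then it accepts one in which every intermediate value is a rational with bounded denominator, so that after scaling by $M^t$ at layer $t$ all quantities are integers representable in $\exists PA^*$; this mirrors the scaling used in the translation $\trtoKsharp_t$ of \Cref{proposition:truncrelugnnstoKsharp}. Second, the encoding of the graph itself into Presburger variables must be done carefully: one cannot quantify over graphs directly, so one quantifies over type-multiplicities and uses nested stars to check that a consistent graph realising these multiplicities exists. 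Once these two technicalities are settled, the remaining steps are routine structural inductions, and the complexity bounds follow immediately from the Haase–Zetzsche theorem.
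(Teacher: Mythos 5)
Your overall strategy is the paper's: translate the GNN into an $\exists PA^*$ formula whose star-nesting depth equals the number of layers and invoke the Haase--Zetzsche theorem for the upper bounds, and reduce $\exists PA^*$ to ReLU-GNN satisfiability for the lower bounds. But there is one genuine gap in your upper-bound construction, and it is exactly the point the paper goes out of its way to flag. You set the induction up on ``the set $S_t$ of values $\aGNNoutlayer{t}(G,u)$ that a vertex can take'' and assert that the aggregation at layer $t$ ``is exactly what $S_{t-1}^*$ expresses''. If you apply one star per aggregation --- $S_{t-1}^*$ for layer $t$, $S_{t-2}^*$ for layer $t-1$, and so on --- each star guesses an \emph{independent} multiset of neighbours. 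In a real graph the vertex $u$ has one fixed set of successors and every layer aggregates over that same set: each successor $v$ contributes the consistent tuple $(\aGNNoutlayer{0}(G,v),\dots,\aGNNoutlayer{t-1}(G,v))$. Decoupling the multisets strictly enlarges the set of realisable value sequences, so your formula could be satisfiable while $\semone{\aGNN}=\emptyset$; soundness of the reduction fails.

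The fix, which is what the paper's $\phi_t$ does, is to make the inductive object the set of whole \emph{histories} $(\vec x_0,\dots,\vec x_t)$ rather than the layer-$t$ value alone: one existentially introduces a single block $\vec y_0,\dots,\vec y_{t-1}$, applies one Kleene star to $\phi_{t-1}(\vec y_0,\dots,\vec y_{t-1})$ so that the whole neighbourhood, with all of its per-layer values, is guessed at once, and then writes the update equations for every layer $\tau\le t$ against that single aggregated block. Your closing remark about ``type-multiplicities and nested stars to check that a consistent graph realising these multiplicities exists'' gestures at the problem but does not resolve it; as literally written your induction does not go through. The remaining ingredients --- clearing denominators and scaling by $M^t$ so that everything is integral, appending the final linear classifier, and obtaining the lower bound by reducing from $\exists PA^*$ (which the paper itself only cites from Th.~6.28 of \cite{DBLP:conf/icalp/BenediktLMT24} without detail) --- match the paper.
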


\begin{proof}
\fbox{Lower bound}
 Lower bound (Th. 6.28 in \cite{DBLP:conf/icalp/BenediktLMT24}) is proven by reducing the  $\existsPAstar$-satisfiability to the satisfiability of a GNN with ReLU.
We reduce to the satisfiability of~$\existsPAstar$.

\fbox{Upper bound}
We only reproduce the proof of the upper bound here (Th. 6.26 in \cite{DBLP:conf/icalp/BenediktLMT24}). The Kleene-star is used to perform the aggregation over an unbounded number of successors.

Consider a GNN $\aGNN$ (w.l.o.g we suppose that the weights are integers, if not multiply by a common denominator).
We define formula $\phi_t(\vec x_0, \dots, \vec x_t)$ that says that $x_0, x_1, \dots, x_t$ is a possible sequence for $\aGNNoutlayer{0}(G,u), \dots, \aGNNoutlayer{t}(G,u)$ for some pointed graph $G, u$.

\begin{align}
\phi_0(\vec x_0) = & \lbigand_{i=1..d} (x_{0,i} = 0) \lor (x_{0,i} = 1) \\
\phi_t(\vec x_0, \dots ,\vec x_t) = & \phi_0(\vec x_0) \land \\
 & \exists \vec y_0, \dots \exists \vec y_{t-1} \\
 & \left( \phi_{t-1}(\vec y_0, \dots, \vec y_{t-1}) \right)^* \\
 & \land \exists \vec z_1, \dots \exists \vec z_{t} \\
 & ~~~~ \lbigand_{\tau=1..t} (\vec z_\tau = A_\tau \vec x_{\tau-1} + B_\tau \vec y_{\tau-1} + b_\tau)
 \\
 & ~~~~~~~~ \land \lbigand_i (z_{\tau,i} \leq 0)\land (x_{\tau,i}=0) \lor (z_{\tau,i} \geq 0)\land (x_{\tau,i}=z_{\tau,i}) 
\end{align}

(5.1) says that the initial value of each feature should be 0 or 1. The vector $\vec y_\tau$, see (5.3), is the one obtained by the aggregation, i.e. by summing on the outputs in the successors, see (5.4). Note that we grouped all the values so that the structure -- the successors of $G, u$ -- is guessed at once in the Kleene star (if we would have several Kleene stars, we would have several independent sets of "successors"). The vector $\vec z_\tau$ is the one obtained just after the combination (see 5.6), and before the activation function. In (5.7), we wrote the fact that $x_{\tau,i} = ReLU(z_{\tau, i})$. 

Our final formula is:

$$tr(\aGNN) := \exists \vec x_0 \dots \exists \vec x_L \phi_L(x_0, \dots, x_L) \land (w^t \times x_L + b \geq 0).$$

We have that there is a pointed graph $G, u$ such that $\aGNN(G,u) = true$ iff $tr(\aGNN)$ is $(PA)^*$-satisfiable.
\end{proof}

Our GNN-logic and $\existsPAstar$ are very similar, but they also differ. The semantics of a expression is a single real number, whereas the semantics of a $\existsPAstar$ formula is a list of values (one for each free variable). Having a single real number requires to evaluate wrt to a pointed graph $G, u$: we need to have the structure somewhere, to be sure that all $agg(....)$, the evaluation is according the same structure. In the semantics $\existsPAstar$ the structure is implicit. If we have several formulas $(..)*$, then the successors are different. That is why in \cite{DBLP:conf/icalp/BenediktLMT24} they pack all the relevant values in the same list, to have a single formula $(...)*$, i.e. to have the a single set of successors for a given vertex.

\subsection{Panorama}

We reproduce a (simplified version of the) table from \cite{DBLP:conf/icalp/BenediktLMT24} that sums up the situation for the satisfiability problem of a GNN.

\begin{center}
\begin{tabular}{|c|c|c|}
\hline
 & searching for directed graphs & searching for undirected graphs \\
\hline
truncated ReLU & PSPACE-complete & PSPACE-complete \\
\hline
ReLU & NEXPTIME-complete & undecidable \\
\hline
\end{tabular}
\end{center}

\subsection{Quantized GNNs}
 \index{quantization}

In real implementations, GNNs are quantized. For instance, numbers are 32-bit floats.
In \cite{DBLP:journals/corr/abs-2502-16244}, the authors prove that the satisfiability problem of \gnnlogic when numbers are quantized is in PSPACE-complete. The bitwidth $n$ is the number of bits used to represent the numbers.

\begin{theorem}
Verifying quantized GNNs where the bitwidth $n$ is given in unary is PSPACE-complete.
\end{theorem}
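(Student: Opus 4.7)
The plan is to establish PSPACE membership by reducing verification of a quantized GNN to satisfiability of a bounded-integer variant of $\Ksharp$, then adapt the tableau method of Chapter~\ref{chapter:tableaumethod} to run in polynomial space; PSPACE-hardness will follow almost for free from the PSPACE-hardness of plain $K$-satisfiability already shown in these notes.

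For the upper bound, I would first mimic the translation $\trtoKsharp$ of Proposition~\ref{proposition:truncrelugnnstoKsharp} but interpreted over the fixed $n$-bit arithmetic of the quantized GNN. The key observation is that, since $n$ is given in unary, every number manipulated by the GNN is an integer in $\set{0,\dots,2^{O(n)}}$ whose binary representation has $O(n)$ bits. Consequently, the constants of the shape $M^t$ that produced the blow-up in the unquantized rational case are now bounded by the fixed quantization range and cost only polynomial space to store. The image of the translation is a $\Ksharp$-like formula whose counting modalities $\modalitynumber\phi$ are only ever compared against integers of polynomial bit-length.

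Next I would adapt the tableau procedure for $\Ksharp$ (the modal tableau of Chapter~\ref{chapter:tableaumethod} combined with QFBAPA reasoning, as announced in the preface). The adaptation is purely arithmetical: at each tableau world we keep a polynomially-sized set of formulas, and the constraints on the counts of successor-types that must be satisfied form a quantifier-free Presburger formula whose constants are polynomial. Checking satisfiability of that constraint system is in \NP. The modal part proceeds exactly as in Chapter~\ref{chapter:tableaumethod}: one recursive call per $\modalitynumber$-formula, with only the current branch of the tree kept on the stack. The whole procedure is nondeterministic and uses polynomial space, so by Savitch's theorem it is in PSPACE.

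For the matching lower bound, I would reduce from plain $K$-satisfiability. Given a modal formula $\phi$, which is also a $\Ksharp$-formula, Proposition~\ref{proposition:KsharptoGNN} yields an equivalent GNN-expression $\semone{\phi}=\semone{tr(\phi)\geq 1}$. Inspecting that translation, all intermediate values $\sem{tr(\psi)}{G,u}$ lie in $\set{0,1}$ and all weights are small integers, so the only unbounded quantities are counts of successors satisfying a subformula; but for $K$ (no numerical modalities), counts are only compared against $1$, so we can restrict without loss to graphs of bounded out-degree. A bitwidth of $n=O(1)$ suffices and is certainly polynomial in unary, giving a polynomial-time reduction from the PSPACE-hard $K$-satisfiability problem to satisfiability of a quantized GNN, which is an instance of the verification problem.

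The main obstacle will be the first step: pinning down a clean semantics for the quantized arithmetic (saturation at $2^n-1$ versus modular wrap-around, handling of the $\truncReLU$ activation) so that the translation to the bounded variant of $\Ksharp$ is exact, and checking that the QFBAPA constraints produced at each tableau node faithfully encode saturation without an exponential case split. Once this encoding is nailed down, both the membership and the hardness arguments are essentially an assembly of previously proven building blocks.
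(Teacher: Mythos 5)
The paper itself offers no proof of this theorem --- it is stated with a bare citation --- so there is nothing in the text to compare your argument against; I can only judge it on its merits. Your lower bound is essentially sound: reducing from $K$-satisfiability through \Cref{proposition:KsharptoGNN} works because every exact intermediate value in the resulting truncReLU-GNN lies in $\set{0,1}$ except the aggregated sums, and under saturating $n$-bit arithmetic a saturated sum is still $\geq 1$ exactly when the true sum is, so the quantized network classifies every graph exactly as the ideal one does and both directions of the reduction go through (you rightly note this must be re-checked under wrap-around semantics). A constant bitwidth already suffices there.

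The upper bound, however, has a genuine gap at its very first step. You propose to ``mimic the translation $\trtoKsharp$'' into a bounded variant of $\Ksharp$ and then run a tableau on the image. But the only translations of $\truncReLU$ and of $\agreggationfunction$ available in these notes expand a quantity ranging over $\set{0,\dots,M}$ into the sum $1_{\cdot=1}+2\times 1_{\cdot=2}+\dots+M\times 1_{\cdot=M}$, i.e.\ one term per representable value; with bitwidth $n$ there are $2^{\Theta(n)}$ representable values, so even with $n$ written in unary the translated formula is exponentially large. This is precisely the blow-up flagged in the remark following \Cref{proposition:truncrelugnnstoKsharp} (``there is still a blow-up problem for $\agreggationfunction$''), and your closing paragraph names the danger without resolving it --- yet resolving it is the actual content of the theorem. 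The repair is to drop the translation altogether and run the decision procedure directly on the GNN expression: at each node of the tableau, nondeterministically guess the $n$-bit value of every subexpression (polynomially many guesses, each of $O(n)$ bits), and encode the achievability of each aggregated, saturating sum as a QFPA system whose constants have $O(n)$ bits; the Carath\'eodory bound of \Cref{lemma:intcone} still limits the number of non-zero successor types to $2d\log_2(4dM)=O(d(n+\log d))$ because $M=2^{O(n)}$ only enters logarithmically. Without this (or an equivalent) replacement for the first step, the recursion you describe does not run in polynomial space.
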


\section*{Open questions}

\begin{itemize}
\item Nice logic for truncReLU without the exponential blow-up. In \cite{DBLP:conf/icalp/BenediktLMT24}, they seem to treat the problem algorithmically but not with a poly-time translation in a logic
\item Parametrized complexity of $\existsPAstar$ wrt to the nested number of stars?
\item Design a "neat" modal logic that is equivalent to GNN with ReLU?
\item Parametrized complexity of verifying quantized GNNs wrt to the bitwidth
\item Lower bounds for other activation functions than truncReLU for quantized GNNs
\item Proof systems for the introduced logics
\end{itemize}

\newpage
\section*{Exercises}

\begin{exercise}
Write formally the translation from GML into $K^\#$ preserving the semantics.
\end{exercise}

\begin{exercise}
Prove that $K^\#$ is more expressive than FO.

\hfill Hint: see Appendix in  \cite{DBLP:conf/ijcai/NunnSST24}
\end{exercise}

\begin{exercise}
Prove that we can reduce in poly-time the satisfiability problem of $\Ksharp$ to the satisfiability problem of a $\Ksharp$-formula without any occurrence of $1_\psi$.

\hfill Hint: see  \cite{DBLP:conf/ijcai/NunnSST24}
\end{exercise}

\chapter{Satisfiability problem of logic with counting}
\label{chapter:qfbapa}

In this chapter, we focus on the satisfiability problem of a $K^\#$-formula.

\section{Difficulty to get a PSPACE Tableau Method for $K^\#$}

As explained in
\cite{DBLP:journals/corr/abs-2307-05150}, it is not sufficient, as for K, to prove consistency in successors. We also have to take into account implicit counting relations. For instance, we always have:

$$\# p~~ +~~ \# \lnot p ~~~=~~~~ \# q ~~+~~ \# \lnot q.$$

To take these constraints into account, we rely on QFBAPA (Quantiﬁer-free Fragment Boolean Algebra with Presburger Arithmetic) which combines Presburger arithmetic (reasoning about linear inequalities) and Boolean algebra (reasoning about $p$, $\lnot p$, $q$, $\lnot q$, and all Boolean formulas).

Despite we will not use the full power of QFBAPA, it is interesting to present QFBAPA on its own.

\section{Quantiﬁer-free Fragment Boolean Algebra with Presburger Arithmetic}
 \index{quantiﬁer-free Fragment Boolean Algebra with Presburger Arithmetic}

\newcommand{\setvariable}{S}
\newcommand{\setexpression}{B}
\newcommand{\universe}{\mathcal U}
\newcommand{\integerexpression}{E}

A QFBAPA formula is propositional formula where each atom is either an inclusion of sets or equality of sets or linear constraints 
\cite{kuncak-rinard-QFBAPA}.
 Sets are denoted by Boolean algebra expression, e.g., $(\setvariable \cup \setvariable') \setminus \setvariable''$, or $\universe$ where $\universe$ denotes the set of all points in some domain. Here $\setvariable$, $\setvariable'$, etc. are set variables.
Linear constraints are over $|\setexpression|$ denoting the cardinality of the set denoted by the set expression $\setexpression$. Let us give a formal definition of the syntax and semantics.

\newcommand{\integervariable}{x}

\begin{definition}(see Figure 1 in \cite{kuncak-rinard-QFBAPA})
A QBFBAPA formula is generated by the axiom~$\phi$ in the following BNF grammar:
\begin{align*}
\phi & ::= A \mid \phi_1 \lor \phi_2 \mid \lnot \phi \\
A & ::= B_1 = B_2 \mid B_1 \subseteq B_2 \mid E_1 = E_2 \mid E_1 \leq E_2 \\
B & ::= \setvariable \mid \emptyset \mid \universe \mid B_1 \union B_2 \mid \compl{B} \\
E & ::= \integervariable \mid k \mid E_1 + E_2 \mid k \times E \mid |B|
\end{align*}
where $\setvariable$ ranges in a countable set of set variables, $\integervariable$ ranges in a countable set of integer variables, $k$ ranges in $\setZ$.
\end{definition}

\begin{example}[of a QFBAPA formula]
$|pianist \cap student| + x \geq 5 ~~ \land~~(|pianist| \leq 10 \lor |student| \leq 10)$ 
\end{example}

The original QFBAPA \cite{kuncak-rinard-QFBAPA} also contains the construction $k~\textsf{divides}~E$ where $k$ is an integer and $E$ an expression. We omit it here since we do not use it. They also have a constant $\textsf{MAXC}$ which is always equal to $|\universe|$.

\newcommand\domain D
\newcommand\interpreted{}

\begin{definition}
A QBFBAPA model is a tuple $\modelM = (\domain, \semone{\cdots})$ where
\begin{itemize}
\item $\domain$ is a (possible empty) finite set, called the \indexemph{domain};
\item for all integer variables $\integervariable$, $\semone{\integervariable} \in \ensZ$;
\item for all set variables $\setvariable$, $\semone{\setvariable} \subseteq \domain$.
\end{itemize}
\end{definition}

We naturally extends $\semone{\cdots}$ to integer expressions $\integerexpression$ and set expressions $\setexpression$ as follows:
\begin{align*}
\semone{k} & := k \\
\semone{E_1 + E_2} & := \semone{E_1} + \semone{E_2} \\
\semone{k \times E} & := k \times \semone{E} \\
\semone{|B|} & := |\semone{B}|\\
\semone{B_1 \union B_2} & := \semone{B_1} \union \semone{B_2} \\
\semone{\compl{B}} & := \compl{\semone{B}} \\
\semone{\universe} & := \domain
\end{align*}

\begin{definition}
The truth conditions are given as follows:
\begin{align*}
\modelM \models B_1 = B_2 & \textiff \semone{B_1} = \semone{B_2} \\
\modelM \models B_1 \subseteq B_2 & \textiff \semone{B_1} \subseteq \semone{B_2} \\
\modelM \models E_1 = E_2 & \textiff \semone{E_1} = \semone{E_2} \\
\modelM \models E_1 \leq E_2 & \textiff \semone{E_1} \leq \semone{E_2}
\end{align*}

\end{definition}

We are going to prove:

\index{NP}
\begin{theorem}
QFBAPA satisfiability problem is in NP.
\end{theorem}

To prove that we rely on the fact that QFPA (quantifier-free Presburger arithmetics) is in NP, just a non-deterministic step (to resolve disjunction) on integer linear programming which is in NP \cite{DBLP:journals/jacm/Papadimitriou81}. 


\section{Fail for proving NP by naïve argument}

We first discuss the fact that knowing the size (written in binary) does not help much.
Consider the following formula.

$$	|\mathcal{U}| = n \ \land \ \bigwedge_{0 \leq i < j \leq m} |\setvariable_i \cup \setvariable_j| = 30 \ \land \ \bigwedge_{0 \leq i \leq m} |\setvariable_i| = 20  $$

A certificate would consist in telling for each set variable $\setvariable_i$ which elements are in $\semone{\setvariable_i}$. So each set variable is represented by a word $\set{0, 1}^{n}$ while $n$ is given in binary:

\begin{center}
\begin{tabular}{|l|l|}
\hline
& certificate (in/out for each elements in the domain $\domain$) \\
\hline
$S_1$ & 010010101010 \\
$S_2$ & 000010001010 \\
\vdots & \vdots \\
$S_m$ & 111011111100 \\
\hline
\end{tabular}
\end{center}

 The certificate is of exponential size in the size of the formula. So it seems that the satisfiability problem of QFBAPA is in NEXPTIME.

\section{Venn diagrams}

\newcommand{\coderegion}{\rho}
\newcommand{\sizeofregion}[1]{s_{#1}}
\newcommand{\nbbooleanexpressions}{d}
\newcommand{\nbvariables}{e}

\index{Venn diagram}
A Venn diagram is a picture that contains all the possible intersections called \indexemph{regions}, see \Cref{figure:venndiagram}. Each region is denoted by a \emph{region code} $\coderegion \in \set{0, 1}^\nbvariables$ telling whether the set $S_i$ is in or out. For instance, the code of $\compl{\setvariable_1} \inter \compl{\setvariable_2} \inter \setvariable_3$ is $001$ (out of $S_1$, out of $S_2$, in $S_3$).

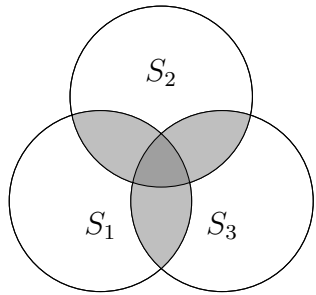
\begin{figure}[h]
\begin{center}
 \begin{tikzpicture}[scale=0.8]
        \def\firstcircle{(0,0) circle (1.5cm)}
        \def\secondcircle{(60:2cm) circle (1.5cm)}
        \def\thirdcircle{(0:2cm) circle (1.5cm)}

        \draw \firstcircle node[below] {$\setvariable_1$};
        \draw \secondcircle node [above] {$\setvariable_2$};
        \draw \thirdcircle node [below] {$\setvariable_3$};

        \begin{scope}
            \clip \firstcircle;
            \fill[gray!50] \secondcircle;
            \fill[gray!50] \thirdcircle;
        \end{scope}

        \begin{scope}
            \clip \secondcircle;
            \fill[gray!50] \thirdcircle;
        \end{scope}

        \begin{scope}
            \clip \firstcircle;
            \clip \secondcircle;
            \fill[gray!75] \thirdcircle;
        \end{scope}

        \draw \firstcircle \secondcircle \thirdcircle;


    \end{tikzpicture}
\end{center}
\caption{A Venn diagram with 3 set variables is made up of 8 regions.\label{figure:venndiagram}}
\end{figure}

The idea is to reason about the size of each region obtained by intersection and introducing a variable $\sizeofregion{01010110}$ to denote that size:

\begin{center}
$\begin{array}{l|l}
\text{Regions} & \text{Size of that region} \\
\hline
\compl{\setvariable_1} \inter \compl{\setvariable_2} \inter \cdots \inter \compl{\setvariable_\nbvariables} & \sizeofregion{000\ldots0} \\
\hline
\compl{\setvariable_1} \inter \compl{\setvariable_2} \inter \cdots \inter \setvariable_\nbvariables & \sizeofregion{000\ldots1} \\
\hline
\vdots & \vdots \\
\hline
\setvariable_1 \inter \setvariable_2 \inter \cdots \inter \compl{\setvariable_\nbvariables} & \sizeofregion{111\ldots0} \\
\hline
\setvariable_1 \inter \setvariable_2 \inter \cdots \inter \setvariable_\nbvariables & \sizeofregion{111\ldots1}
\end{array}$ 
\end{center}

\begin{example}
For instance for $n=3$ we have:

\begin{center}
$\begin{array}{l|l}
\text{Regions} & \text{Size of that region} \\
\hline
\compl{\setvariable_1} \inter \compl{\setvariable_2} \inter \compl{\setvariable_3} & \sizeofregion{000} \\
\hline
\compl{\setvariable_1} \inter \compl{\setvariable_2} \inter \setvariable_3 & \sizeofregion{001} \\
\hline
\compl{\setvariable_1} \inter \setvariable_2 \inter \compl{\setvariable_3} & \sizeofregion{010} \\
\hline
\compl{\setvariable_1} \inter \setvariable_2 \inter \setvariable_3 & \sizeofregion{011} \\
\hline
\setvariable_1 \inter \compl{\setvariable_2} \inter \compl{\setvariable_3} & \sizeofregion{100} \\
\hline
\setvariable_1 \inter \compl{\setvariable_2} \inter \setvariable_3 & \sizeofregion{101} \\
\hline
\setvariable_1 \inter \setvariable_2 \inter \compl{\setvariable_3} & \sizeofregion{110} \\
\hline
\setvariable_1 \inter \setvariable_2 \inter \setvariable_3 & \sizeofregion{111}
\end{array}$
\end{center}
\end{example}

We could rewrite any QFBAPA-formula into a QFPA-formula using variables $\sizeofregion{0101010111}$ as follows. We replace each cardinality expression with sums of the appropriate variables $\sizeofregion{0101010111}$. For instance:

\begin{align*}
|S_1 \inter S_2| & = \sizeofregion{110} + \sizeofregion{111} \\ 
|S_1 \inter S_2 \inter \compl{S_3}| & = \sizeofregion{110} \\
|S_1| & = \sizeofregion{100} + \sizeofregion{101} + \sizeofregion{110} + \sizeofregion{111}  \\
|\compl{S_1}| & = \sizeofregion{000} + \sizeofregion{001} + \sizeofregion{010} + \sizeofregion{011}  
\end{align*}

But there are an exponential number of variables in the number of set variables that will be used. This idea is good, but implemented naïvely, we get that the satisfiability problem of QFBAPA is in NEXPTIME.
Later on we will see that a poly-number of non-empty regions is sufficient. From that, we get NP membership. So let us go into this naïve reduction first.

\section{Naïve reduction to QFPA}

We first explain the naïve reduction to quantifier-free Presburger arithmetics. It is made of several steps of rewriting.

\paragraph{Getting rid of inclusions and set equality.}
We perform the following rewritings:

\begin{center}
\begin{tabular}{ccc}
 $\setexpression = \setexpression'$ & rewriting into & $\setexpression \subseteq \setexpression' \land \setexpression' \subseteq \setexpression$ \\
 $\setexpression \subseteq \setexpression'$ & rewriting into & $|\setexpression \inter \compl{\setexpression'}| = 0$.
\end{tabular}
\end{center}

\paragraph{Variables for cardinalities of Boolean expressions.}

\newcommand{\sizeofsetvariable}[1]{k_{#1}}
\newcommand{\numberofsetexpressions}{d}
Let $\setexpression_1, \dots, \setexpression_\numberofsetexpressions$ be the set expressions in $\phi$.
Instead of writing $|\setexpression_i|$, we introduce a new integer variable~$k_i$ that represent the cardinal of $\setexpression_i$. Our QFBAPA-formula $\phi$ is written into
 $$\underbrace{\phi[|\setexpression_i| := k_i]}_{\text{PA formula}} \land \underbrace{\lbigand_{i=1}^d k_i = |\setexpression_i| }_{\psi}.$$

Said otherwise, we have replaced $|\setexpression_i|$ by $k_i$ and enforced the equalities $k_i = |\setexpression_i|$ in a separate clause $\psi$.

\paragraph{Venn diagrams. }

Now, the main idea is to rewrite $\psi$. The goal is to get rid from $\cap$, $\cup$, etc. and only have integer variables. To do that, we will introduce integer variables $\sizeofregion{1010010}$ etc. to represent cardinals of regions. At the end $\psi$ will be replaced by $\psi'$ which is a formula free from Boolean algebra operators ($\cap$, $\cup$, etc. are deleted).

Let $\setvariable_1, \dots, \setvariable_\nbvariables$ the set variables appearing in $\setexpression_1, \dots, \setexpression_\nbbooleanexpressions$.

Each Venn diagram region is represent by a string $\coderegion \in \set{0, 1}^\nbvariables$.
\newcommand{\venndiagramintersection}{R}
The Venn diagram region corresponding to $\coderegion$ is:

$$\venndiagramintersection_\coderegion := \bigcap_{j=1}^\nbvariables \setvariable_i^{\coderegion_i}.$$

\noindent where $S_i^1 = S_i$ while $S_i^0 = \compl{S_i}$.

\begin{example}
$\venndiagramintersection_{101001} = \setvariable_1 \inter \compl{\setvariable_2} \inter \setvariable_3 \inter \compl{\setvariable_4} \inter \compl{\setvariable_5} \inter \setvariable_6$.
\end{example}

Given a set expression $\setexpression$ we can say whether a given $\venndiagramintersection_\coderegion$ is included in $\setexpression$.

\begin{example}
$\venndiagramintersection_{101001}$ is included in $S_1 \inter S_3$.
\end{example}

To check that $\venndiagramintersection_\coderegion$ is included in $\setexpression$, we can consider $\setexpression$ as a propositional formula and $\coderegion$ as a valuation. If $\coderegion$ satisfies $\setexpression$ then $\venndiagramintersection_\coderegion$ is included in $\setexpression$. In the sequel, we write $\coderegion \models \setexpression$.

\begin{example}
$\coderegion = 101001$ is the valuation 

$$\left\{
\begin{array}{l}
S_1 := 1 \\
S_2 := 0 \\
S_3 := 1 \\
S_4 := 0 \\
S_5 := 0 \\
S_6 := 1
\end{array}
\right\} $$
and it satisfies the formula $S_1 \land S_3$.
\end{example}

We introduce variable $\sizeofregion{\coderegion}$ to represent $|\venndiagramintersection_\coderegion|$. 
We rewrite $\psi$ into $\psi'$:

$$\psi' := \lbigand_{i=1}^\nbbooleanexpressions \left(k_i=\sum_{\coderegion \in \set{0, 1}^\nbvariables \suchthat \coderegion \models \setexpression_i} \sizeofregion\coderegion \right).$$

\begin{proposition}
$\phi$ is QFBAPA-satisfiable iff ${\phi[|\setexpression_i| := k_i]} \land \psi'$ is QFPA-satisfiable.
\end{proposition}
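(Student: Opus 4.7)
The plan is to prove the two directions separately by exhibiting explicit translations between QFBAPA models and QFPA models. The key algebraic fact I would establish first, as a lemma, is the following partition identity: for every set expression $\setexpression$ built from $\setvariable_1, \dots, \setvariable_\nbvariables$ and every QFBAPA model $\modelM$, one has $\semone{\setexpression} = \bigsqcup_{\coderegion \models \setexpression} \semone{\venndiagramintersection_\coderegion}$, where $\coderegion \models \setexpression$ means $\coderegion$, viewed as a propositional valuation, satisfies $\setexpression$ viewed as a propositional formula. This is proved by a straightforward structural induction on $\setexpression$, using that the $\venndiagramintersection_\coderegion$ partition the domain $\domain$. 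Taking cardinalities on both sides gives $|\semone{\setexpression}| = \sum_{\coderegion \models \setexpression} |\semone{\venndiagramintersection_\coderegion}|$, which is exactly the shape of each conjunct of $\psi'$.

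For the ($\Rightarrow$) direction, assume a QFBAPA model $\modelM = (\domain, \semone{\cdots})$ of $\phi$. I define a QFPA interpretation of the new variables by $\semone{\sizeofregion{\coderegion}} := |\semone{\venndiagramintersection_\coderegion}|$ and $\semone{\sizeofsetvariable{i}} := |\semone{\setexpression_i}|$, keeping all original integer variables unchanged. The partition identity above immediately yields $\semone{\sizeofsetvariable{i}} = \sum_{\coderegion \models \setexpression_i} \semone{\sizeofregion{\coderegion}}$, so every conjunct of $\psi'$ is satisfied. Moreover, since $\semone{\sizeofsetvariable{i}} = |\semone{\setexpression_i}|$ by construction, replacing each occurrence of $|\setexpression_i|$ by $\sizeofsetvariable{i}$ preserves the truth value of $\phi$; hence $\phi[|\setexpression_i| := \sizeofsetvariable{i}] \land \psi'$ holds.

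For the ($\Leftarrow$) direction, assume a QFPA model assigning values $n_{\coderegion} := \semone{\sizeofregion{\coderegion}}$ and $m_i := \semone{\sizeofsetvariable{i}}$ that satisfies $\phi[|\setexpression_i| := \sizeofsetvariable{i}] \land \psi'$. I first need the $n_\coderegion$ to be non-negative integers; I would add this explicitly as $\lbigand_\coderegion \sizeofregion{\coderegion} \geq 0$ to $\psi'$ (a harmless strengthening, since the forward direction produces non-negative values anyway). Then I build a QFBAPA model by choosing pairwise disjoint finite sets $\domain_\coderegion$ with $|\domain_\coderegion| = n_\coderegion$, setting $\domain := \bigsqcup_\coderegion \domain_\coderegion$, and defining $\semone{\setvariable_j} := \bigcup_{\coderegion \suchthat \coderegion_j = 1} \domain_\coderegion$. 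A direct check shows $\semone{\venndiagramintersection_\coderegion} = \domain_\coderegion$, so by the partition lemma $|\semone{\setexpression_i}| = \sum_{\coderegion \models \setexpression_i} n_\coderegion = m_i$ (the last equality by $\psi'$). Thus $|\semone{\setexpression_i}| = \semone{\sizeofsetvariable{i}}$, and substituting back $|\setexpression_i|$ for $\sizeofsetvariable{i}$ in the truth evaluation of $\phi$ yields $\modelM \models \phi$.

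The only real subtlety, which I would flag as the main technical point, is the bookkeeping in the partition lemma together with the implicit non-negativity of cardinality variables in QFPA; everything else is routine structural induction and disjoint-union construction. Note also that this reduction is only a correctness statement about $\psi'$, not a complexity bound: $\psi'$ has exponentially many variables $\sizeofregion\coderegion$, which is exactly why the next step of the argument (not carried out here) will need to argue that only polynomially many regions can be forced to be non-empty.
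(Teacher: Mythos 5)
Your proof follows essentially the same route as the paper's: in the $(\Rightarrow)$ direction interpret $k_i$ as $|\semone{\setexpression_i}|$ and $\sizeofregion\coderegion$ as $|\semone{\venndiagramintersection_\coderegion}|$, and in the $(\Leftarrow)$ direction build the domain as a disjoint union of fresh sets $R_\coderegion$ of cardinality $\semone{\sizeofregion\coderegion}$ and let $\semone{\setvariable_j}$ be the union of the $R_\coderegion$ with $\coderegion_j=1$; your ``partition lemma'' is exactly the identity the paper uses implicitly when it writes $|\setexpression_i|$ as the sum of the $\sizeofregion\coderegion$ with $\coderegion\models\setexpression_i$. One point in your favour: the non-negativity issue you flag is a genuine gap in the statement as written, not just pedantry --- since QFPA variables range over $\setZ$, a formula such as $|\setvariable_1|\leq -1$ is QFBAPA-unsatisfiable while $k_1\leq -1\land k_1=\sizeofregion{1}$ is QFPA-satisfiable, so the constraints $\sizeofregion\coderegion\geq 0$ (which the paper's proof silently needs when it ``creates disjoint sets of cardinality $\semone{\sizeofregion\coderegion}$'') must indeed be added to $\psi'$ for the equivalence to hold.
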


\begin{proof}
\fbox{$\Rightarrow$}
Consider a model $\modelM$ of $\phi$. We construct a model $\modelM'$ for ${\phi[|\setexpression_i| := k_i]} \land \psi'$ as follows:
\begin{itemize}
\item $\sem{k_i}{\modelM'} = |\sem{\setexpression_i}{\modelM}|$;
\item $\sem{\sizeofregion{\coderegion}}{\modelM'} = |\sem{\venndiagramintersection_\coderegion}{\modelM}|$.
\item $\sem{\integervariable}{\modelM'} := \sem{\integervariable}{\modelM}$ for the other integer variables $\integervariable$.
\end{itemize}

\fbox{$\Leftarrow$} Conversely, given a model $\modelM'$ for ${\phi[|\setexpression_i| := k_i]} \land \psi'$, we create a model $\modelM$ as follows. We create disjoint sets $R_\coderegion$ of cardinality $\sem{\sizeofregion{\coderegion}}{\modelM'}$.

\begin{itemize}
\item $\domain := \union_{\coderegion} R_\coderegion$;
\item $\sem{\setvariable_i}{\modelM} := \bigunion_{\coderegion \models \setvariable_i} R_{\coderegion}$;
\item $\sem{\integervariable}{\modelM} := \sem{\integervariable}{\modelM'}$ for the other integer variables $\integervariable$.
\end{itemize}

\end{proof}

\section{Polynomial upper bound on the number of non-zero regions}

\newcommand{\intcone}{\texttt{cone}}
Up to now, we are not finished since there are an exponential number of variables $\sizeofregion\coderegion$.
We will show that if a formula $\phi$ has a model, then there is another model in which only a polynomial number of $\sizeofregion\coderegion$ is non-zero.

\begin{center}

\begin{tabular}{cc}
\begin{tikzpicture}[scale=1, draw=black]

  \draw[]    ( 45:1) circle (2);
  \draw[]   (135:1) circle (2);
  \draw[]  (225:1) circle (2);
  \draw[] (315:1) circle (2);

\foreach \pos in {
  (0,0),
  (0.35,0.15), (-1.5, -2), (-0.35,0.15), (-0.35,-0.15),
  (1.5, 0), (-0.9,0.4), (-1.5, 0), (1.5, -1.5),
  (2, 1.5), (0, 2), (-2.5, -1), (0, -1.5),
  (1.5, 1.5), (-1.5, 2), (-0.6,-1.0), (1.5, -2)
}
{
  \fill \pos circle (0.05);
}

\end{tikzpicture}
& 

\begin{tikzpicture}[scale=1, draw=black]

  \draw[]    ( 45:1) circle (2);
  \draw[]   (135:1) circle (2);
  \draw[]  (225:1) circle (2);
  \draw[] (315:1) circle (2);

\foreach \pos in {
  (0,0),
  (0.35,0.15), (-1.5, -2), (-0.35,0.15), (-0.35,-0.15),
  (2, -1.5), (-1.5, -1.5), (-1, -2.5), (1.5, -1.5),
  (-2.5, 1), (-1.5, 1.5), (-2.5, -1), (1, -2.5),
  (-2, 1.5), (-1.5, 2), (-2, -1.5), (1.5, -2)
}{
  \fill  \pos circle (0.05);
}

\end{tikzpicture}

\\
\faTimes & \faCheck 
\end{tabular}

\end{center}

\newcommand{\baselemma}{X}

 To do that, we will apply a Carathéodory bound for integer cones, see Th. 1 (ii) in \cite{DBLP:journals/orl/EisenbrandS06}, reformulated by the following \Cref{lemma:intcone}. We define the definition of a cone.
 
 \index{cone}
 \begin{definition}
 Given $\baselemma \subseteq \mathbb Z^d$, we define the \indexemph{integer cone} of $\baselemma$ by: 
$$\intcone(\baselemma) := \set{
\lambda_1 x_1 + \dots + \lambda_t x_t 
\suchthat t \geq 0, 
x_1, \dots, x_t \in \baselemma,
 \lambda_1, \dots, \lambda_t \in \mathbb{N}}.
$$
\end{definition}

In the following \Cref{lemma:intcone}, for a $d$-dimensional vector $x$, we write $||x||_{\infty} := \max_{i=1..d} |x_i|$. It stands for the magnitude of $x$. And then $M$ is the magnitude of a subset $X$ of vectors. The following \Cref{lemma:intcone} is a \indexemph{Carathéodory bound} saying that any vector in the cone can be obtained as a sum of only few vectors in $X$.

\index{vector}

\begin{lemma}
\cite{DBLP:journals/orl/EisenbrandS06}
\label{lemma:intcone}
Let $\baselemma \subseteq \ensZ^d$ be a finite subset. Let $M = \max_{x\in X} ||x||_{\infty}.$ \linebreak[4]
For all $b \in \intcone(\baselemma)$ there exists $\tilde{\baselemma} \subseteq \baselemma$ such that $|\tilde{\baselemma}| \leq 2d \log_2(4dM)$ and $b \in \intcone(\tilde{\baselemma})$.
\end{lemma}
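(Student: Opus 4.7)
The plan is to adapt the Eisenbrand--Shmonin argument, which combines Carath\'eodory's theorem for the real conical hull with a dyadic halving scheme to cope with the integrality requirement. I would begin from an arbitrary integer-cone representation $b = \sum_{x \in X} \lambda_x x$ with $\lambda_x \in \ensN$, and aim to trim it to a representation whose support has size at most $2d\log_2(4dM)$.

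The first ingredient is the \emph{real} Carath\'eodory statement: any $b$ in the real conical hull of $X$ admits a representation supported by at most $d$ linearly independent vectors of $X$. This is pure linear algebra: whenever more than $d$ vectors carry positive coefficients, a non-trivial linear dependence among them allows one to strictly lower at least one coefficient to $0$ while preserving the sum $b$. If we were allowed arbitrary nonnegative real coefficients the lemma would follow with the stronger bound $|\tilde X| \leq d$ and $M$ would not enter at all; the logarithmic factor in the statement is entirely the price of forcing coefficients to be integers.

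The second ingredient is a dyadic expansion that converts the integrality constraint into a logarithmic number of ``easy'' rounds. Write each $\lambda_x = 2\mu_x + r_x$ with $r_x \in \set{0,1}$, so that $b = b_0 + 2 b'$ where $b_0 = \sum_x r_x x$ is a $\set{0,1}$-combination and $b' = \sum_x \mu_x x$ again lies in $\intcone(X)$ with coefficients essentially halved. I would apply real Carath\'eodory to $b_0$, round the resulting rational coefficients down to their integer parts, and absorb the (bounded) fractional residue into $b'$; this ensures that each round of the recursion contributes at most $d$ vectors to the global support while shrinking the $\ell_\infty$-magnitude of the running residue by a factor of two, up to an additive $O(dM)$ error coming from the rounding. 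After a preliminary real-Carath\'eodory step on $b$ itself that peels off a ``large integer skeleton'' of support $\leq d$ and leaves a residue of $\ell_\infty$-norm $\leq dM$, the halving stabilises after $L = O(\log_2(dM))$ rounds, and the accumulated support is bounded by $d(L+1) \leq 2d\log_2(4dM)$.

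The main obstacle I expect is the rounding/absorption bookkeeping in the halving step. One must verify that the rational coefficients produced by the real Carath\'eodory reduction at each level have denominators controlled by a Hadamard-type bound in terms of $d$ and $M$, so that the ``round down and carry'' operation really fits into the dyadic framework, and so that the carried-over remainder genuinely lives in $\intcone(X)$ rather than only in its rational relaxation. Once this invariant is stated carefully and checked at each level, the support count and the magnitude count decouple cleanly and the bound $2d\log_2(4dM)$ falls out.
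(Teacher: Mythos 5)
Your route is genuinely different from the one in the paper, and unfortunately the obstacle you flag at the end is not a bookkeeping issue but a fatal gap. The crux of your plan is: apply real Carath\'eodory to (a piece of) $b$, round the resulting nonnegative real coefficients down, and ``absorb the residue'' into the next level of a dyadic recursion. The residue $\rho = b - \sum_i \lfloor \alpha_i\rfloor x_i$ is indeed an integer vector of $\ell_\infty$-norm at most $dM$, but there is no reason for it to lie in $\intcone(\baselemma)$ --- it lies only in the \emph{real} cone of the chosen vectors intersected with $\setZ^d$, and the gap between that set and the integer cone is exactly the difficulty the lemma is about. Controlling denominators via Hadamard-type bounds does not help: the problem is not the size of the denominators but the fact that a nonnegative \emph{integer} combination of elements of $X$ may simply fail to exist for $\rho$. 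The dyadic splitting $\lambda_x = 2\mu_x + r_x$ has a second problem: the $\set{0,1}$-part $b_0 = \sum_x r_x x$ is still supported on up to $|X|$ vectors, so each level of the recursion must again invoke the rounding step on $b_0$, and the same non-membership issue reappears there. So the invariant ``the carried-over remainder genuinely lives in $\intcone(X)$'' cannot be established, and the recursion does not close.

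The paper's proof (following Eisenbrand--Shmonin) avoids real relaxations entirely and stays inside the integer cone at every step. Assume $|X| > 2d\log_2(4dM)$ and that $b = \sum_{x\in X}\lambda_x x$ with all $\lambda_x \geq 1$. Every subset sum $\sum_{x\in\tilde X} x$ lies in $\set{-|X|M,\dots,|X|M}^d$, and a short computation shows $(2|X|M+1)^d < 2^{|X|}$, so by pigeonhole two distinct subsets $A,B\subseteq X$ have equal sums; passing to the disjoint sets $A' = A\setminus B$ and $B' = B\setminus A$ preserves this equality, and transferring weight $\lambda = \min_{x\in A'}\lambda_x$ from $A'$ to $B'$ yields a new nonnegative integer representation of $b$ in which at least one coefficient vanishes. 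Iterating strictly shrinks the support until the bound is met. If you want to salvage a Carath\'eodory-flavoured argument you would need an integer analogue of the exchange step, and that is precisely what this subset-sum collision provides.
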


\begin{proof}
Suppose that $|X| > 2d \log_2(4dM)$ (otherwise we are done). That is $M < \frac{2^{|X| / 2d}}{4d}$.

\begin{align*}
d \log (2|X|M+1) 
&< d \log \left( \frac{|X|}{2d} 2^{|X|/(2d)} + 1 \right) & \text{ by assumption} \\[6pt]
&\leq d \log \left( 2^{|X|/(2d)} \left( \frac{|X|}{2d} + 1 \right) \right) \\[6pt]
&= \frac{|X|}{2} + d \log \left( \frac{|X|}{2d} + 1 \right) \\[6pt]
&\leq \frac{|X|}{2} + d \cdot \frac{|X|}{2d} \text{ by concavity of $\log$} \\[6pt]
&= |X|.
\end{align*}

Suppose that $b = \sum_{x \in X} \lambda_x x$ with $\lambda_x \in \ensN^*$ for all $x \in X$ (all the $\lambda_x$ are strictly positive, otherwise we are done).

For $\tilde{X} \subseteq X$, we have $\sum_{x\in \tilde{X}} x \in \set{-|X|M, \dots, |X|M}^d$. So 

\begin{align*}
\card{\set{\sum_{x\in \tilde{X}} x \mid \tilde{X} \subseteq X}} 
& \leq (2|X|M + 1)^d \\
& < 2^{|X|}.
\end{align*}

So there are two sets $A, B \subseteq X$, $A\neq B$ such that 

$$\sum_{x\in A} x = \sum_{x \in B} x.$$


We set:
\begin{align*}
A' := A \setminus B \\
B := B \setminus A
\end{align*}

We have 
$$\sum_{x\in A'} x = \sum_{x\in A} x - \sum_{x \in A\inter B} x = \sum_{x\in B} x - \sum_{x \in A\inter B} x  =  \sum_{x \in B'} x.$$

W.l.o.g. we suppose that $A' \neq \emptyset$. We set $\lambda := \min_{x \in A'} \lambda_x$. Then:

\[
\begin{aligned}
b = 
\sum_{x \in X} \lambda_x x 
&= \sum_{x \in X \setminus A'} \lambda_x x + \sum_{x \in A'} \lambda_x x \\[6pt]
&= \sum_{x \in X \setminus A'} \lambda_x x + \sum_{x \in A'} (\lambda_x - \lambda)x + \lambda \sum_{x \in A'} x \\[6pt]
&= \sum_{x \in X \setminus A'} \lambda_x x + \sum_{x \in A'} (\lambda_x - \lambda)x + \lambda \sum_{x \in B'} x \\[6pt]
& = \sum_{x \in A'} (\lambda_x - \lambda)x + \sum_{x \in X \setminus (A' \union B')} \lambda_x x + \sum_{x \in B'} (\lambda_x + \lambda)
\end{aligned}
\]

The last line is another linear combination for $b$, in which all coefficients are positive but $\lambda_x - \lambda = 0$ for some $x \in A'$ by definition of $\lambda$. So we found $\tilde{X} \subsetneq X$ such that $b \in \intcone(\tilde{X})$.  We can iterate and remove elements from $X$ until $|X| \leq 2d \log_2(4dM)$.

\end{proof}

\newcommand{\subsetregionnames}{\mathbb{B}_0}
\newcommand{\subsetnonzeroregions}{\mathbb B}

From \Cref{lemma:intcone}, we can prove that we can suppose that we have at most $2d \log_2(4d)$ non-empty regions. We state the following \Cref{lemma:QFBAPAfewregions} in which we may also suppose that we know that non-empty regions are among a set $\subsetregionnames \subseteq \set{0, 1}^\nbvariables$ (at first read think $\subsetregionnames$ to be equal to $\set{0, 1}^\nbvariables$).

\begin{lemma}
Let $\subsetregionnames \subseteq \set{0, 1}^\nbvariables$.

\begin{center}
${\phi[|\setexpression_i| := k_i]} \land \lbigand_{i=1}^\nbbooleanexpressions \left(k_i = \sum_{\coderegion \in \subsetregionnames \suchthat \coderegion \models \setexpression_i} \sizeofregion\coderegion \right)$ is QFPA-satisfiable

 iff
 
  there is $\subsetnonzeroregions \subseteq \subsetregionnames$ such that $|\subsetnonzeroregions| \leq 2d \log_2(4d)$ and ${\phi[|\setexpression_i| := k_i]} \land \lbigand_{i=1}^\nbbooleanexpressions  \left(k_i =  \sum_{\coderegion \in \subsetnonzeroregions \suchthat \coderegion \models \setexpression_i} \sizeofregion\coderegion \right)$ is QFPA-satisfiable.
\end{center}
\label{lemma:QFBAPAfewregions}
\end{lemma}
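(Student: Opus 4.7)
The plan is to reformulate the system of $\nbbooleanexpressions$ cardinality equations as a single integer-cone membership statement and then invoke the Carathéodory bound of Lemma~\ref{lemma:intcone}. The direction $\Leftarrow$ will be immediate: starting from a QFPA-model of the right-hand side with region supports in $\subsetnonzeroregions$, I would extend it by setting $\sem{\sizeofregion\coderegion}{} := 0$ for every $\coderegion \in \subsetregionnames \setminus \subsetnonzeroregions$; because the added summands vanish, each left-hand equality reduces to its right-hand counterpart, and the propositional part $\phi[|\setexpression_i| := k_i]$ is untouched.

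For the forward direction, I would fix a QFPA-model of the left-hand side with values $k_1, \ldots, k_\nbbooleanexpressions$ and $(\sizeofregion\coderegion)_{\coderegion \in \subsetregionnames}$, and attach to every code $\coderegion \in \subsetregionnames$ its characteristic vector $\chi_\coderegion \in \set{0,1}^{\nbbooleanexpressions}$ defined by $(\chi_\coderegion)_i = 1$ iff $\coderegion \models \setexpression_i$. The key observation is that the $\nbbooleanexpressions$ scalar equalities $k_i = \sum_{\coderegion \models \setexpression_i} \sizeofregion\coderegion$ collapse into the single vector identity $\vec k = \sum_{\coderegion \in \subsetregionnames} \sizeofregion\coderegion \cdot \chi_\coderegion$, which is precisely the statement that $\vec k$ lies in the integer cone $\intcone(X)$ where $X := \set{\chi_\coderegion \suchthat \coderegion \in \subsetregionnames} \subseteq \set{0,1}^{\nbbooleanexpressions}$.

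Because every element of $X$ is a $0/1$ vector, the magnitude parameter of Lemma~\ref{lemma:intcone} is $M \leq 1$, so the lemma applied with dimension $\nbbooleanexpressions$ yields a subset $\tilde X \subseteq X$ with $|\tilde X| \leq 2\nbbooleanexpressions \log_2(4\nbbooleanexpressions)$ such that $\vec k \in \intcone(\tilde X)$. Unpacking this gives a set $\subsetnonzeroregions \subseteq \subsetregionnames$ of at most $2\nbbooleanexpressions \log_2(4\nbbooleanexpressions)$ region codes, together with nonnegative integer coefficients $(s'_\coderegion)_{\coderegion \in \subsetnonzeroregions}$, satisfying $k_i = \sum_{\coderegion \in \subsetnonzeroregions,\ \coderegion \models \setexpression_i} s'_\coderegion$ for every $i$. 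Keeping the original values of the integer variables of $\phi[|\setexpression_i| := k_i]$ and using the $s'_\coderegion$ as interpretations of the retained region-size variables then yields a QFPA-model of the right-hand side. The only slightly delicate step, and essentially the whole content of the proof, is recognising that the bank of cardinality equations is exactly a conic decomposition by $0/1$ vectors; once this is observed, Lemma~\ref{lemma:intcone} with $M = 1$ does all the work and no further obstacle arises.
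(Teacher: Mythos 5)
Your proposal is correct and takes essentially the same route as the paper's proof: the easy direction by setting $\sizeofregion\coderegion := 0$ for the discarded regions, and the forward direction by packaging the $d$ cardinality equations into the single statement that $\vec k$ lies in the integer cone of the $0/1$ indicator vectors of the regions, then invoking \Cref{lemma:intcone} with $M=1$ to obtain the $2d\log_2(4d)$ bound. Your explicit remark that the retained region variables receive new values $s'_\coderegion$ while the $k_i$ and the other integer variables are unchanged (so the propositional part is unaffected) is a detail the paper leaves implicit, but there is no substantive difference.
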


\begin{proof}
\fbox{$\Uparrow$} We take a model for ${\phi[|\setexpression_i| := k_i]} \land \lbigand_{i=1}^\nbbooleanexpressions
\left( k_i= \sum_{\coderegion \in \subsetnonzeroregions \suchthat \coderegion \models \setexpression_i} \sizeofregion\coderegion\right)$ and we assign

$$\semone{\sizeofregion{\coderegion}} = 0$$

for all regions $\coderegion \in \subsetregionnames \setminus \subsetnonzeroregions$.

\fbox{$\Downarrow$}
To apply \Cref{lemma:intcone}, we rewrite $\psi'$ as the following system with $d$ equations:

\newcommand\onewhenregioninside[2]{1_{#2 \models #1}}
$$
\left\{
\begin{array}{l}
k_1 = \sum_{\coderegion\in \subsetregionnames} \sizeofregion\coderegion \onewhenregioninside{\setexpression_1}\coderegion \\
\vdots \\
k_\nbbooleanexpressions = \sum_{\coderegion\in \subsetregionnames} \sizeofregion{\coderegion} \onewhenregioninside{\setexpression_\nbbooleanexpressions}\coderegion  \\

\end{array}
\right.
$$

where $$\onewhenregioninside{\setexpression_i}\coderegion = \begin{cases} 1 \text{ if $\coderegion \models \setexpression_i$} \\
0 \text{ otherwise}
\end{cases}.$$

In a vectorial form, we get:

\newcommand\vectorK{\columnvector{k_1 \\
\vdots
\\
k_\nbbooleanexpressions}}

$$
\vectorK
=
\sum_{\coderegion \in \subsetregionnames} \sizeofregion\coderegion \columnvector{\onewhenregioninside{\setexpression_1}\coderegion \\
\vdots 
\\
\onewhenregioninside{\setexpression_\nbbooleanexpressions}\coderegion
}
$$

Said differently, if we set $\baselemma = \set{\columnvector{\onewhenregioninside{\setexpression_1}\coderegion 
\\
\vdots 
\\
\onewhenregioninside{\setexpression_\nbbooleanexpressions}\coderegion
}
 \suchthat \coderegion \in \subsetregionnames}$, we get:

$$\vectorK
\in \intcone(\baselemma).$$

By \Cref{lemma:intcone}, there exists a subset  $\mathbb{B} \subseteq \subsetregionnames$ of size at most $2d\log_2(4d)$ and values for the $\sizeofregion\coderegion$ such that

$$
\vectorK = 
\sum_{\coderegion \in \mathbb{B}} \sizeofregion\coderegion \columnvector{\sem{\setexpression_1}\coderegion \\
\vdots 
\\
\sem{\setexpression_\nbbooleanexpressions}\coderegion
}.
$$

Just to recall:

\begin{center}
\begin{tabular}{|l|l|}
\hline
$d$ & number of set expressions \\
\hline
$e$ & number of set variables \\
\hline
\end{tabular}
\end{center}

\end{proof}

\section{QBFPAPA satisfiability in NP}

\index{NP}
Here is an algorithm for testing the satisfiability problem of QFBAPA-formula $\phi$.

\begin{center}
\begin{algo}
input: a QFBAPA-formula $\phi$

output: has a non-rejecting execution iff $\phi$ is QFBAPA-satisfiable

\begin{algoblocprocedure}{QFBAPAsat($\phi$)}

$d$ := number of Boolean set expressions

$e$ := number of set variables

Guess a subset $\mathbb{B} \subseteq \set{0, 1}^e$ of size $2d\log_2(4d)$

Check whether the QFPA-formula ${\phi[|\setexpression_i| := k_i]} \land \lbigand_{i=1}^d \sum_{\coderegion \in \mathbb{B} \suchthat \coderegion \models \setexpression_i} \sizeofregion\coderegion = k_i $ is satisfiable

\end{algoblocprocedure}
\end{algo}
\end{center}

\begin{example}
Consider the formula $(|S \inter T | \leq 5) \land (|S| > |T|)$. We have $e = 2$ set variables: $S$ and $T$. We have $d = 3$ set expressions: $S\inter T$, $S$ and $T$.
\end{example}

\begin{theorem}
QFBAPA satisfiability problem is in NP.
\end{theorem}

\begin{proof}
We have to prove that QFBABAsat is sound and complete. We have $\phi$ QFBAPA-satisfable iff $\phi[|\setexpression_i := k_i]\land \lbigand_{i=1}^\nbbooleanexpressions 
\left( k_i=
\sum_{\coderegion \in \set{0, 1}^\nbvariables \suchthat \coderegion \models \setexpression_i} \sizeofregion\coderegion
\right)$ is QFPA-satisfiable. 

\fbox{$\Rightarrow$} If $\phi$ has a model, then $\phi[|\setexpression_i := k_i]\land \lbigand_{i=1}^\nbbooleanexpressions 
\left(k_i = 
\sum_{\coderegion \in \set{0, 1}^\nbvariables \suchthat \coderegion \models \setexpression_i} \sizeofregion\coderegion\right)$ has a model $\modelM$: interpret $k_i$ as the cardinality of $\setexpression_i$, i.e. $\sem{\setexpression_i}\modelM$ and $\sizeofregion\coderegion$ as the cardinality of the region $R_\coderegion$, i.e. $\sem{R_\coderegion}\modelM$.

\fbox{$\Leftarrow$} If $\phi[|\setexpression_i := k_i]\land \lbigand_{i=1}^\nbbooleanexpressions 
\left(k_i = 
\sum_{\coderegion \in \set{0, 1}^\nbvariables \suchthat \coderegion \models \setexpression_i} \sizeofregion\coderegion\right)$ is satisfiable in a QFPA-model $\modelM$. 
We construct a QFBAPA model as follows. First, we construct regions $R_\coderegion$ with~$\sizeofregion\coderegion$ each. The domain is the (disjoint) union of these regions. Then we interpret $\setexpression_i$ as the union of $R_\coderegion$ such that $\coderegion \models \setexpression_i$ while the interpretation of integer variables remain the same as in $\modelM$.

Finally, the procedure QFBAPAsat is non-deterministic algorithm that runs in polynomial time in~$|\phi|$.
\end{proof}

\section{Application: PSPACE Tableau Method for $K^\#$}
\index{tableau method}
\index{PSPACE}

\newcommand{\algosatKsharp}{\textsf{sat$K^\#$}}
We write a non-deterministic procedure inspired from \cite{DBLP:journals/corr/abs-2307-05150} (in which we forgot to use QFBAPA) and \cite{DBLP:conf/frocos/Baader17} (in which QFBAPA is used but for a description logic close to $K^\#$).

\subsection{Description of the algorithm}

\newcommand{\setinequalities}{\mathcal S}
\begin{center}
\begin{algo}
input: $\Gamma$ a set of $K^\#$-formulas

output: has a non-rejecting execution iff $\Gamma$ is $K^\#$-satisfiable

\begin{algoblocprocedure}{\algosatKsharp($\Gamma$)}

\begin{algoblocfor}{non-nested expressions $1_\phi$}
choose either to add $\phi$ in $\Gamma$ and replace $1_\phi$ by 1 in $\Gamma$

~~~~~~~~~~ or to add $\lnot \phi$ in $\Gamma$ and replace $1_\phi$ by 0 in $\Gamma$
\end{algoblocfor}

apply non-deterministically Boolean tableau rules to $\Gamma$

let $\setinequalities$ be the set of inequalities in $\Gamma$

let $\# \psi_1, \dots \# \psi_\nbbooleanexpressions$ be a list of all constructions of the form $\# \psi$ appearing in $\Gamma$

Guess $\mathbb{B} \subseteq \set{0, 1}^d$ of size $\leq 2d\log_2(4d)$

Replace in $\setinequalities$ each occurrence of $\# \psi_i$ by $\sum_{\coderegion \in \mathbb B \suchthat \coderegion_i = 1} \sizeofregion\coderegion$


We obtain $\setinequalities'$

Check that $\setinequalities'$ is QFPA-satisfiable

\begin{algoblocfor}{$\coderegion \in \mathbb B$}

\algosatKsharp($\set{\lnot \psi_i \suchthat \coderegion_i = 0} \union \set{\psi_i \suchthat \coderegion_i = 1}$)

\end{algoblocfor}

\end{algoblocprocedure}
\end{algo}
\end{center}

In the algorithm, at each step, we extract the set of inequalities in $\Gamma$. For instance, we may get $S$ to be

$$\left\{ \begin{array}{l}
\# \psi_1 + 3\# \psi_2 \leq 5 \\
2 \# \psi_1 + 4 \# \psi_3 \leq 42 
\end{array}
\right. $$

The use we make of QFBAPA is "trivial" since set expressions and set variables coincide: they are $\psi_1, \psi_2, \dots$. The content of $\psi_i$ may be Boolean (and also modal) but the Boolean reasoning is directly handled by tableau rules. So the use of the QBFPAPA technique is for $d = e$ (we write $\mathbb{B} \subseteq \set{0, 1}^e$ instead of $\mathbb{B} \subseteq \set{0, 1}^d$).

We then guess $\mathbb B$ which are the non-zero regions. For instance, having

$$\mathbb B := \set{100, 101, 111}$$

mean that we are trying to create successors for the current vertex where
\begin{enumerate}
\item $\psi_1 \land \lnot \psi_2 \land \lnot \psi_3$ holds in a subset of successors ($\sizeofregion{100}$ is the cardinal);
\item $\psi_1 \land \lnot \psi_2 \land  \psi_3$ holds  in a subset of successors ($\sizeofregion{101}$ is the cardinal);
\item $\psi_1 \land  \psi_2 \land  \psi_3$ holds  in a subset of successors ($\sizeofregion{111}$ is the cardinal).
\end{enumerate}

Other combinations (e.g. $\lnot \psi_1 \land  \psi_2 \land  \psi_3$) are not present in the model we construct.

\newcommand\bloup[1]{$\sizeofregion{#1}$ (if #1 in $\mathbb B$)}
We then replace $\# \psi_1$ by the sum of \bloup{100}, \bloup{101}, \bloup{110}, \bloup{111}. Similarly for $\# \psi_2$ and $\# \psi_3$.

As we are going then to check for satisfiability of 1.-3. (for loop at the end of  the algorithm), we add that $\sizeofregion\coderegion \geq 1$ for all non-zero regions $\coderegion$.

\subsection{Soundness and completeness}

\begin{proposition}
If $\algosatKsharp(\Gamma)$ has an accepting execution then $\Gamma$ is $K^\#$-satisfiable.
\end{proposition}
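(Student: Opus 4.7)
The plan is to prove soundness by induction on the modal depth of $\Gamma$, counting $\#$ as a modality. The crucial observation justifying this induction is that each recursive call is made on the set $\{\psi_i : \coderegion_i = 1\} \cup \{\lnot \psi_i : \coderegion_i = 0\}$, where each $\psi_i$ appears as an argument of a $\#$ inside $\Gamma$, so $md$ strictly decreases across recursion. The first loop of the algorithm (which iteratively expands non-nested $1_\phi$ subexpressions and whose termination follows from the finite number of $1_.$-subterms) does not change $md$, but does introduce the auxiliary invariant that whenever $1_\phi$ has been resolved to $1$ (resp.\ $0$), the formula $\phi$ (resp.\ $\lnot \phi$) has been inserted into $\Gamma$.

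First I would state the induction hypothesis $P(n)$: for every $\Gamma$ with $md(\Gamma) \leq n$, if $\algosatKsharp(\Gamma)$ has an accepting execution then $\Gamma$ is $K^\#$-satisfiable. In the base case $n = 0$ there are no $\#$-subterms, so $\mathbb B$ may be taken empty and no recursive call is made; the Boolean rules produce a propositionally consistent $\Gamma$ in which every $1_\phi$ has been replaced by a constant and $\mathcal S'$ has a QFPA-witness. Build a single-vertex pointed graph $(G, u)$ with $\labelling(u)(p) = 1$ iff $p \in \Gamma$ after saturation. A routine structural induction on subformulas of $\Gamma$, coupled with the auxiliary invariant above, shows $G, u \models \Gamma$.

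For the inductive step with $md(\Gamma) = n+1$, an accepting execution commits to a Boolean branch, a set $\mathbb B$ of non-zero regions, and a QFPA-witness assigning values $\sizeofregion{\coderegion}$ to each $\coderegion \in \mathbb B$ making $\mathcal S'$ true. By $P(n)$, for each $\coderegion \in \mathbb B$ there is a pointed graph $(G_\coderegion, u_\coderegion)$ satisfying $\{\psi_i : \coderegion_i = 1\} \cup \{\lnot \psi_i : \coderegion_i = 0\}$. Construct $(G, u)$ by taking a fresh root $u$ whose label matches the propositional atoms appearing in $\Gamma$ and attaching $\sizeofregion{\coderegion}$ disjoint copies of $(G_\coderegion, u_\coderegion)$ as $E$-successors of $u$, for every $\coderegion \in \mathbb B$.

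The main obstacle is verifying $G, u \models \Gamma$, which I would do by structural induction on formulas of $\Gamma$. Propositional atoms hold by construction and Boolean combinations by the standard tableau-saturation argument. For an inequality $\vartheta \geq 0$ in $\Gamma$, one computes $\semanticsvalue{\vartheta}{G, u}$ by linearity, using two subclaims: (i) $\semanticsvalue{\#\psi_i}{G, u} = \sum_{\coderegion \in \mathbb B,\ \coderegion_i = 1} \sizeofregion{\coderegion}$, because each attached copy of $(G_\coderegion, u_\coderegion)$ satisfies $\psi_i$ exactly when $\coderegion_i = 1$ (applying $P(n)$ inside each $\psi_i$); and (ii) $\semanticsvalue{1_\phi}{G, u}$ equals the constant chosen by the first loop, which follows from a side induction on the $1_\phi$-nesting depth using that $\phi$ or $\lnot \phi$ was inserted into $\Gamma$ and is therefore already handled by the main structural induction. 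Substituting these values reduces every inequality in $\mathcal S$ to the corresponding constraint of $\mathcal S'$, which is true of the guessed witness, so $G, u \models \vartheta \geq 0$ and the induction closes. The delicate bookkeeping lies precisely in interleaving the structural induction on $\Gamma$ with the side induction on $1_\phi$-nesting, so that the two uses of ``induction on formulas'' do not circularly depend on each other.
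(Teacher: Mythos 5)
Your proposal is correct and follows essentially the same route as the paper's own proof: induct on the recursion (equivalently, the $\#$-depth), obtain by the induction hypothesis a pointed model $(G_\coderegion, u_\coderegion)$ of $\bigwedge_{\coderegion_i=1}\psi_i \land \bigwedge_{\coderegion_i=0}\lnot\psi_i$ for each guessed region $\coderegion\in\mathbb B$, and glue $\sizeofregion{\coderegion}$ copies of each as successors of a fresh root whose valuation is read off from $\Gamma$, using the QFPA witness for $\mathcal S'$ to discharge the inequalities. You merely spell out the verification step (the two subclaims about $\#\psi_i$ and $1_\phi$) that the paper leaves implicit.
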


\begin{proof}
We consider an accepting execution.
    
    We construct the root to be a vertex $u$ labelled by $\labellinginitial(u)$ constrained by the presence/absence of $x_i = 1$ in $\Gamma$.
    
    By induction we know that for each non-region $\coderegion \in \mathbb{B}$, we know that $\psi_{\coderegion} := \lbigand_{i \suchthat \coderegion_i = 0} \lnot \psi_i \land \lbigand_{i \suchthat \coderegion_i = 1} \psi_i$ is \Ksharp-satisfiable in a model $G_\coderegion, u_\coderegion$. As $S'$ is satisfiable, there is a QFPA-model $\semone{\cdots}$ for $S'$. 
    
    We make $\semone{\sizeofregion\coderegion}$ copies of $G_\coderegion, u_\coderegion$ that we declare as successors of $u$.
	 We obtain a pointed graph satisfying $\Gamma$.
\end{proof}

\begin{proposition}
If $\Gamma$ is $K^\#$-satisfiable, then $\algosatKsharp(\Gamma)$ has an accepting execution.
\end{proposition}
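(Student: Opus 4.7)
The plan is to prove this by induction on the modal depth of $\Gamma$ (where each occurrence of $\#$ counts as one modal level and $1_\phi$ does not contribute), with induction hypothesis: for every $K^\#$-satisfiable set of strictly smaller modal depth, the procedure has an accepting execution. Fix a pointed graph $G, u$ with $G, u \models \psi$ for all $\psi \in \Gamma$. I will exhibit a sequence of non-deterministic choices in $\algosatKsharp(\Gamma)$ that never leads to a rejecting state.

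First I handle the preparatory phase: for each non-nested subexpression $1_\phi$ of $\Gamma$, choose the branch that matches $G, u$ (add $\phi$ and replace $1_\phi$ by $1$ if $G, u \models \phi$; otherwise add $\lnot\phi$ and replace by $0$). Then resolve the Boolean tableau rules by always selecting a disjunct that is true at $u$. Because the rewriting $1_\phi \mapsto 1/0$ is sound w.r.t.\ the semantics and the Boolean rules preserve truth at $u$, every formula in the saturated $\Gamma$ is still satisfied at $u$; hence the clash rule never fires, since $G, u$ cannot satisfy both $p$ and $\lnot p$.

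After this phase, the inequality set $\setinequalities$ holds at $u$, and each counting term satisfies $\semanticsvalue{\# \psi_i}{G,u} = |\{v \in E(u) \mid G, v \models \psi_i\}|$. For each $v \in E(u)$, define its type $\rho(v) \in \{0,1\}^d$ by $\rho(v)_i = 1$ iff $G, v \models \psi_i$. Let $\subsetregionnames := \{\rho(v) \mid v \in E(u)\}$ and $\sizeofregion{\rho} := |\{v \in E(u) \mid \rho(v) = \rho\}|$ for $\rho \in \subsetregionnames$. These values witness satisfiability of the QFPA system obtained by substituting $\# \psi_i \mapsto \sum_{\rho \in \subsetregionnames,\ \rho_i = 1} \sizeofregion\rho$ in $\setinequalities$. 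Applying \Cref{lemma:QFBAPAfewregions} (with the $\subsetregionnames$ just defined in place of the full $\{0,1\}^d$) yields a subset $\subsetnonzeroregions \subseteq \subsetregionnames$ with $|\subsetnonzeroregions| \leq 2d\log_2(4d)$ such that the restricted system remains QFPA-satisfiable; this is the $\subsetnonzeroregions$ the algorithm guesses.

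Finally, for each $\rho \in \subsetnonzeroregions \subseteq \subsetregionnames$ pick any witness $v_\rho \in E(u)$ with $\rho(v_\rho) = \rho$. Then $G, v_\rho$ satisfies $\{\psi_i \mid \rho_i = 1\} \cup \{\lnot \psi_i \mid \rho_i = 0\}$, so the induction hypothesis applies to this set, whose modal depth is strictly smaller than that of $\Gamma$ since each $\psi_i$ is a proper subformula of some $\# \psi_i$ appearing in $\Gamma$. Hence each recursive call $\algosatKsharp(\{\lnot\psi_i \mid \rho_i = 0\} \cup \{\psi_i \mid \rho_i = 1\})$ has an accepting execution, completing the accepting execution for $\algosatKsharp(\Gamma)$. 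The main obstacle is ensuring that the polynomially bounded region set produced by the Carathéodory-style reduction consists only of types realized in $G$, so that every recursive call is fired on a satisfiable set; this is handled precisely by invoking \Cref{lemma:QFBAPAfewregions} with $\subsetregionnames$ set to the realized types rather than all of $\{0,1\}^d$, so that $\subsetnonzeroregions \subseteq \subsetregionnames$ automatically inherits a witness for each of its elements.
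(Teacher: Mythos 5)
Your proof is correct and follows essentially the same route as the paper's: guide the non-deterministic choices by the truth of formulas at the model $G,u$, observe that the realized successor types give a QFPA-model of $\setinequalities$, invoke \Cref{lemma:QFBAPAfewregions} with $\subsetregionnames$ restricted to the realized (non-empty) types so that the polynomially-bounded guess $\mathbb{B}$ consists only of types carrying an actual witness successor, and close the recursive calls by induction. Your write-up is somewhat more explicit than the paper's (naming the induction measure and spelling out why each $\coderegion \in \mathbb{B}$ yields a satisfiable recursive call), but the argument is the same.
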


\begin{proof}
Consider a pointed graph $G, u$ which is a model of $\Gamma$.
 We apply the Boolean tableau rules accordingly for $\lor$, $1_\phi$ (e.g. choosing to put $\phi$ in $\Gamma$ if $\phi$ holds in $G, u$ and 0 otherwise). 
 As $\Gamma$ holds in $G, u$,
  $\setinequalities$ is QFBAPA-satisfiable: let $\modelM$ be a model of $\setinequalities$. Let $\subsetregionnames$ be the subset of regions that are non-empty in $\subsetregionnames$. 
 By \Cref{lemma:QFBAPAfewregions}, there is a subset $\mathbb{B} \subseteq \subsetregionnames$ with $|\mathbb{B}| \leq 2d \log_2(4d)$ such that there is a model $\modelM'$ in which $\setinequalities$ is satisfied and $\mathbb B$ is the exactly the set of non-zero regions. We consider the execution that guesses exactly $\mathbb B$.
 As $\mathbb B \subseteq \subsetregionnames$, we know that $\psi_\coderegion$ is satisfiable for all $\coderegion \in \mathbb B$.
 By induction there an accepting execution.
\end{proof}

\section*{Open questions}

\begin{itemize}
\item Efficient implementation
\item Nice proof system
\item Write down a comprehensive proof for the satisfiability problem of $\Ksharp$ for undirected graphs (and other classes of graphs)
\end{itemize}

\newpage
\section*{Exercises}

\begin{exercise}
Take a $K^\#$-formula of your choice and apply the algorithm $\algosatKsharp$.
\end{exercise}

\begin{exercise}
Show that QFBAPA is NP-hard even if numbers that are written are in formulas are 0 and 1.
\end{exercise}

\begin{exercise}
Prove formally the theorems of the chapter.
\end{exercise}

\begin{exercise}
Adapt the algorithm $\algosatKsharp$ when we are search for an undirected graph.
\end{exercise}

\chapter{Other settings}
\label{chapter:globalreadout}
%
%
\newcommand\equationbox[1]{\colorbox{blue!5!white}{\ensuremath{#1}}}
\newcommand{\agreggationfunctiong}{\agreggationfunction^g}

In this section, we reproduce some proofs of \cite{DBLP:conf/icalp/BenediktLMT24} with our $\Ksharp$-formalism, and also using our language of GNN expressions.

\section{Global readout}
\index{global readout}

\newcommand{\sharpU}{\#^g}
\newcommand{\Ksharpglobal}{K^{\#,\#^g}}
\newcommand{\setequations}{\mathcal E}
\newcommand{\boxU}{\lbox^g}

We consider now GNNs with global readout, meaning that the update is:

\begin{align*}
\aGNNoutlayer{t}(G,u) = & \vec \activationfunction(\weightcolor{A_t} \times \aGNNoutlayer{t-1}(G,u) \\
 & + \weightcolor{B_\timestep} \times \sum \multiset{\aGNNoutlayer{t-1}(G,v) \suchthat v \in E(u)} \\
 & + \weightcolor{C_\timestep} \times \sum \multiset{\aGNNoutlayer{t-1}(G,v) \suchthat v \in V} \\
 & + \weightcolor{b_\timestep}).
\end{align*}

The term $\weightcolor{C_\timestep} \times \sum \multiset{\aGNNoutlayer{t-1}(G,v) \suchthat v \in V}$ is the global readout term.

To capture global readout, we extend $\Ksharp$ with a universal operator $\sharpU \phi$ which is the global counting in the whole graph whose semantics is
\begin{align*}
\sem{\sharpU \phi}{G, u}  :=  |\sem{\phi}G|
\end{align*}

The obtained logic is $\Ksharpglobal$. In this logic, we can simulate the \indexemph{global modality} $\boxU$ as follows:
\begin{align*}
\boxU \phi ::= \sharpU(\phi) = \sharpU(\top)
\end{align*}

\noindent
which says that $\phi$ holds in all vertices. This logic captures GNNs with global readout in the same way than $\Ksharp$ captures GNNs. It means that we can restate \Cref{proposition:KsharptoGNN} and \Cref{proposition:truncrelugnnstoKsharp} for $\Ksharp$ and GNN-expressions with a new construction $\agreggationfunctiong$ for the global readout aggregation.

\index{NEXPTIME}
\begin{theorem}(see Theorem 23 in \cite{2510.08045})
The satisfiability problem of $\Ksharpglobal$ in directed graphs is NEXPTIME-complete.
\end{theorem}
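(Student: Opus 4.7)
For the NEXPTIME upper bound, the plan is to generalize the QFBAPA-based algorithm $\algosatKsharp$ of Chapter~\ref{chapter:qfbapa} to accommodate the global counting operator $\sharpU$. The essential obstacle is that $\sharpU\phi$ is a constraint on the entire model, so the tree-model style tableau no longer suffices: the same model must simultaneously meet every local $\modalitynumber$-constraint and every global $\sharpU$-constraint. I would therefore guess, in a single non-deterministic step of exponential size, a \emph{global type structure}: the set of Hintikka-like types realized in the model (each type being a consistent truth assignment to the subformula closure of $\phi$), together with a cardinality $n_T$ written in binary for each realized type $T$. Since there are at most $2^{|\phi|}$ such types, the guess has size $2^{O(|\phi|)}$, which fits into NEXPTIME.

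Given the guess, I would construct a single QFBAPA instance verifying three conditions. First, \emph{local consistency at each realized type} $T$: exactly as in $\algosatKsharp$, for every $\modalitynumber\psi$-subterm one introduces non-empty successor regions over realized types, Carathéodory-bounded in number via \Cref{lemma:intcone}, and encodes the equalities $\sem{\modalitynumber\psi}{} = \sum_{\coderegion \models \psi}\sizeofregion{\coderegion}$. Second, \emph{global consistency}: for every $\sharpU\psi$-subterm one adds the equation $\sem{\sharpU\psi}{} = \sum_{T \text{ realized}} n_T \cdot \istrue{\psi \in T}$, which couples the global guess to the value that $\sharpU\psi$ takes uniformly in every vertex. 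Third, propositional consistency of each type. Since the resulting system has size polynomial in the guess and QFBAPA satisfiability is in NP, the overall procedure runs in NEXPTIME.

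For the matching lower bound, I would reduce from the NEXPTIME-complete $2^n \times 2^n$-torus tiling problem, exploiting that $\Ksharpglobal$ can define the global box $\boxU\phi := (\sharpU\phi = \sharpU\top)$. Using $2n$ atoms $X_0,\dots,X_{n-1},Y_0,\dots,Y_{n-1}$ for the binary-encoded coordinates and a fresh atom per tile type, the reduction outputs a conjunction of $\boxU$-axioms asserting: (i) every vertex carries a unique tile type and a unique coordinate, (ii) each cell has a horizontal successor whose $X$-coordinate is the binary successor and whose $Y$-coordinate agrees, (iii) the analogous vertical successor exists, (iv) the horizontal and vertical tile adjacency constraints are respected. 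Uniqueness of cells (one vertex per coordinate pair) is imposed by the global equality $\sharpU\top = 2^{2n}$; these numerals can be written in polynomial space because constants in $\Ksharpglobal$ are binary, which is exactly what separates this hardness proof from the PSPACE regime of pure $\Ksharp$.

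The main obstacle is, in the upper bound, proving correctness of combining local and global counting constraints inside a single QFBAPA instance: one must show that any satisfying assignment of the combined system yields a pointed graph in which the multiset of types and their Venn-region splits glue consistently. The proof follows the soundness schema of $\algosatKsharp$ but at the granularity of the global type structure; instead of recursing into successor subcalls, one constructs the model in one shot by producing $n_T$ vertices of each realized type and wiring edges according to the successor-region multiplicities guessed by QFBAPA. The restriction to \emph{directed} graphs is used here in a crucial way, because it allows successor regions of distinct types to be chosen independently without creating hidden back-edge constraints that would couple the local instances.
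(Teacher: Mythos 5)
First, note that the paper does not actually prove this theorem: it is stated with a pointer to Theorem~23 of the cited reference and no proof environment follows it (the proof that appears next in the source belongs to the subsequent undecidability theorem). So there is no in-paper argument to compare yours against; I can only assess your plan on its own terms.

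Your overall architecture (guess an exponential-size global type structure with binary cardinalities $n_T$, verify it with one QFBAPA instance that mixes the local $\modalitynumber$-constraints of $\algosatKsharp$ with the global equations $\sharpU\psi = \sum_T n_T \cdot 1_{\psi \in T}$, and a torus-tiling lower bound using $\boxU$ plus $\sharpU\top = 2^{2n}$) is the standard and plausible route. But there is a concrete gap in the upper bound, at exactly the point you wave at in your last paragraph. You assert that in a directed graph the successor regions of distinct vertices ``can be chosen independently without creating hidden back-edge constraints.'' That is false for the counting part: the successors of a vertex are distinct vertices \emph{of the same model}, so the number of successors of $u$ satisfying $\psi$ is bounded by the total number of vertices of the model satisfying $\psi$. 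Concretely, $(\modalitynumber p \geq 5) \land (\sharpU p \leq 2)$ is unsatisfiable, yet the system you describe accepts it: the local Venn-region equations at the root happily set $\sizeofregion{\coderegion} = 5$ for a $p$-region while the global equations independently set $\sum_{T \ni p} n_T = 2$. You must add the coupling constraints $\sizeofregion{\coderegion} \leq \sum_{T \models \coderegion} n_T$ (each successor region draws its elements from the pool of vertices whose guessed type realizes that region), and then the soundness argument must actually wire edges within the fixed vertex multiset $\{n_T\}$ rather than spawning fresh subtrees as $\algosatKsharp$ does. A secondary unaddressed point is that two vertices of the same Hintikka type need not have the same successor profile, so the ``one QFBAPA instance per realized type'' step needs either a regular-model lemma or per-type existential quantification over profiles. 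The lower bound is essentially right, though uniqueness of cells follows from $\sharpU\top = 2^{2n}$ only \emph{in conjunction with} an axiom forcing every coordinate pair to be realized (a $(0,0)$-vertex plus the successor axioms give this by reachability on the torus); the global count alone imposes nothing.
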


\index{undecidability}

\begin{theorem}(adapted from Theorem 4.16, \cite{DBLP:conf/icalp/BenediktLMT24})
The satisfiability problem of $\Ksharpglobal$ in undirected graphs is undecidable.
\end{theorem}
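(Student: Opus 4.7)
The plan is to reduce from the halting problem of deterministic two-counter Minsky machines, which is classically undecidable. Given such a machine $M$, I would construct computably a $\Ksharpglobal$-formula $\phi_M$ such that $\phi_M$ is satisfiable in some undirected graph if and only if $M$ halts on the empty input. Since the halting problem for two-counter machines is undecidable, this suffices.

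The encoding I have in mind is as follows. A satisfying graph should represent an accepting run $c_0, c_1, \ldots, c_T$ of $M$, where each configuration $c_t = (q_t, n_t^1, n_t^2)$ is represented by a distinguished ``head'' vertex $h_t$. The control state $q_t$ is encoded by a propositional label on $h_t$; the counter values $n_t^1$ and $n_t^2$ are encoded by the numbers of ``unit'' vertices of two kinds (tagged by fresh propositions $u_1, u_2$) adjacent to $h_t$, so that the local expression $\#u_i$ evaluated at $h_t$ returns the value of counter $i$. To arrange the heads into a linear order in an undirected graph, I would three-colour them cyclically using fresh propositions $\mathit{col}_0, \mathit{col}_1, \mathit{col}_2$ and require each head to have exactly one head-neighbour of each of the two other colours; this locally determines a successor relation despite the absence of edge orientation. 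The transitions of $M$, including zero-tests (expressible by $\#u_i = 0$) and increments/decrements (expressible as the head-successor's $\#u_i$ equalling the current head's $\#u_i$ plus or minus one), then become a conjunction of local $\Ksharp$-constraints, globalised by the universal modality $\boxU$ derived from $\sharpU$.

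The main obstacle, and the reason global readout is essential, is that in an undirected graph nothing locally forbids ``leakage'' between heads: a single $u_i$-vertex could be adjacent to several heads and thereby be double-counted, or spurious edges could shortcut the chain of heads. To block this, I would use $\sharpU$ to enforce strict counting identities. Each $u_i$-vertex must have exactly one head-neighbour, expressible by $\boxU(u_i \limply \#\mathit{head} = 1)$; and by a double-counting argument the global quantity $\sharpU u_i$ must equal the sum over all heads of their declared counter values, an identity that can be written as a single linear inequality among $\sharpU$-terms. Analogous degree and global-count constraints on the coloured heads rule out chain-shortcuts among them. The delicate part of the construction is to verify that these mixed local--global constraints together pin down the ``useful'' part of any satisfying model to a chain-shaped subgraph faithfully encoding a run of $M$; once that is checked, satisfiability of $\phi_M$ is equivalent to halting of $M$, yielding undecidability.
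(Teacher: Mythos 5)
There is a genuine gap at the heart of your transition encoding. You write that increments and decrements are ``expressible as the head-successor's $\#u_i$ equalling the current head's $\#u_i$ plus or minus one'' and that the transitions ``become a conjunction of local $\Ksharp$-constraints.'' They do not: in $\Ksharpglobal$ an atomic constraint is a linear inequality over expressions $c$, $1_\phi$, $\#\phi$, $\sharpU\phi$ all evaluated at the \emph{same} vertex, and any counting term nested inside a $\#$ must compare against a \emph{constant} threshold. So at a head $h_t$ you can say $\#(\mathit{head}\land\mathit{col}_{j}\land(\#u_1\geq 3))\geq 1$, but you cannot say ``I have a head-neighbour whose $u_1$-count is one more than mine'': there is no way to bind the inner threshold to the outer vertex's own count. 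This is precisely the expressiveness barrier that any reduction must work around, and your proposal silently assumes it away. (A secondary, fixable issue: requiring every head to have exactly one head-neighbour of each of the two other colours makes the head-subgraph 2-regular, i.e.\ a union of cycles, so you would also need endpoint exceptions for the initial and halting configurations; and since colours repeat every three steps along the chain, colour alone cannot prevent a unit vertex or a matching edge from attaching to a distant head of the right colour.)

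The paper's proof avoids this entirely by reducing from Hilbert's tenth problem rather than from counter machines: variable values are stored as \emph{global} cardinalities $\sharpU(p_e\land p_x)$ of vertex classes $V_{e,x}$, equality of cardinalities is enforced either directly at top level (comparing $\sharpU$-terms, which is legal since both are evaluated at the same point) or by local constant-degree constraints realising a bijection (for $x=x'+x''$) or a many-to-one map whose multiplicity is itself a $\sharpU$-term (for $x=x'\times x''$). No comparison of local counts across adjacent vertices is ever needed. If you want to salvage the counter-machine route, you would have to replace your ``successor's count $=$ mine $\pm 1$'' constraint by an explicit matching gadget between the unit vertices of consecutive heads, and then argue against cross-generation leakage --- which is exactly the delicate part, and the reason the Diophantine encoding is the cleaner choice here.
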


\begin{proof}
We recall Hilbert's tenth problem \cite{DBLP:conf/cie/Matiyasevich05}. The following problem is undecidable:
\begin{itemize}
\item input: a set $\setequations$ of equations of the form $x = 1$, $x = x' + x''$, or $x = x' \times x''$;
\item output: yes if $\setequations$ has a solution in $\ensN$. 
\end{itemize}

\renewcommand{\equation}{e}
We reduce from the Hilbert's tenth problem.
We are going to build a $\Ksharpglobal$-formula $tr(\setequations)$ in poly-time in the number of bits to represent $\setequations$ such that $\setequations$ has a solution in $\setN$ iff $tr(\setequations)$ is satisfiable.

\paragraph{Intuition.}
The idea is to constrain a set (a graph!) where vertices are tagged by equations \emph{and} variables. The subset of vertices tagged by equation $\equation$ and variable $x$ is denoted by $V_{e,x}$. The $V_{e,x}$ should be a partition of the set $V$ of all vertices:

$$V = \bigsqcup_{e\in\setequations, \text{variables }x} V_{e,x}.$$

More importantly, for all $x$, $|V_{e,x}|$ should not depend on $e$ and should be the value of $x$ in a solution of $\setequations$! The subset $V_{e,x}$ are indexed by $e$ in order to handle the constrain of each equation $e$ separately.

\newcommand{\itisvariable}[1]{p_{#1}}
\newcommand{\itisequation}[1]{p_{#1}}
\newcommand\itis[2]{\itisequation{#1} \land \itisvariable{#2}}

We introduce the following propositions: $\itisequation{e}$ that says that a vertex is tagged by equation $e$, and $\itisvariable x$ that says that a vertex is tagged by variable $x$. Intuitively, vertices in $V_{e,x}$ are those in which both $\itisequation{e}$ and $\itisvariable x$ hold.

We now define $tr(\setequations)$ that does the job.
First:
\begin{align}
\lbigand_{e, e' \suchthat e \neq e'} \boxU\lnot (\itisequation{e} \land \itisequation{e'}) \\
\lbigand_{x, x' \suchthat x \neq x'} \boxU\lnot (\itisvariable{x} \land \itisvariable{x'}) \\
\lbigand_{x} \lbigand_{e, e'} \sharpU(\itis e x) = \sharpU(\itis {e'} x)
\end{align}

Up to now, at most one equation tag, at most one variable tag and $|V_{e,x}|$ only depend on~$x$. In order to emphasize that point, we write $|V_{\bullet,x}|$ instead of $|V_{e,x}|$.

We now introduce a formula $\phi_e$ to simulate the semantics of each equation $e$. For all equations $e$.

\begin{itemize}
\item \fbox{$e = \equationbox{x = 1}$}
We set $\phi_e := \sharpU(\itis{e}{x}) = 1$ to intuitively say $|V_{e,x}| = |V_{\bullet,x}| = 1$.

\item \fbox{$e = \equationbox{x = x' + x''}$}
We set $$\phi_e := \boxU\left(
\begin{array}{ll}
([\itisequation{e} \land (\itisvariable{x'} \lor \itisvariable{x''})] \limply \#(\itis{e}{x})=1) \land \\
 ([\itis {e} {x}] \limply \#(\itisequation{e} \land (\itisvariable{x'} \lor \itisvariable{x''}))=1)
\end{array}\right)$$

The formula $\phi_e$ says that each vertex in $V_{e, x'} \sqcup V_{e, x''}$ has a single neighbour in $V_{e,x}$. Conversely, each vertex in $V_{e,x}$ has a single neighbour in  $V_{e, x'} \sqcup V_{e, x''}$. Said differently, there is a bijection between $V_{e, x'} \sqcup V_{e, x''}$ and $V_{e,x}$.

\begin{center}
\begin{tikzpicture}
 \tikzstyle{vertex} = [draw, circle, fill=black];

   \draw (1, 0) rectangle (2, 2);
   \draw (1, -3) rectangle (2, 0);
   \draw (5, -3) rectangle (6, 2);
   \node[vertex] (1) at (1.5, 1.5) {};
   \node[vertex] (2) at (1.5, 0.5) {};
   \node[vertex] (3) at (1.5, -0.5) {};
   \node[vertex] (4) at (1.5, -1.5) {};
   \node[vertex] (5) at (1.5, -2.5) {};
   
  \node[vertex] (1') at (5.5, 1.5) {};
   \node[vertex] (2') at (5.5, 0.5) {};
   \node[vertex] (3') at (5.5, -0.5) {};
   \node[vertex] (4') at (5.5, -1.5) {};
   \node[vertex] (5') at (5.5, -2.5) {};
   
   \foreach \i in {1, 2, ..., 5} {
       \draw (\i) -- (\i');
   }
   
   \node (A) at (0.5, 1) {$V_{e, x'}$};
   \node (A) at (0.5, -1.5) {$V_{e, x''}$};
   \node (A) at (6.5, 0) {$V_{e, x}$};
\end{tikzpicture}
\hfill $5 = 2 \times 3$
\end{center}

So $|V_{\bullet,x}| = |V_{\bullet,x'}| + |V_{\bullet,x''}|$ which appropriately simulate the equation $x = x' + x''$.

\item \fbox{$e = \equationbox{x = x' \times x''}$}
We set
$$\phi_e := \boxU\left(
\begin{array}{ll}
([\itis e {x'}\limply \#(\itis e x) = \sharpU(\itis e {x''})] \land \\
 ([\itis {e} {x}] \limply \#(\itis{e} {x'})=1)
\end{array}\right)$$

The formula $\phi_e$ says that each vertex in $V_{e, x'}$ has $|V_{e, x''}|$ neighbours in $V_{e,x}$. Conversely, each vertex in $V_{e,x}$ has one neighbour in $V_{e, x'}$. So there is a $|V_{e, x''}|$-to-one mapping from $V_{e, x'}$ to $V_{e,x}$.

\begin{center}
\begin{tikzpicture}
 \tikzstyle{vertex} = [draw, circle, fill=black];

   \draw (1, 0) rectangle (2, 2);
   \draw (1, -3) rectangle (2, 0);
   \draw (5, -2) rectangle (6, 4);
   \node[vertex] (1) at (1.5, 1.5) {};
   \node[vertex] (2) at (1.5, 0.5) {};
   \node[vertex] (3) at (1.5, -0.5) {};
   \node[vertex] (4) at (1.5, -1.5) {};
   \node[vertex] (5) at (1.5, -2.5) {};
   
  \node[vertex] (1') at (5.5, 3.5) {};
   \node[vertex] (2') at (5.5, 2.5) {};
   \node[vertex] (3') at (5.5, 1.5) {};
   \node[vertex] (4') at (5.5, 0.5) {};
   \node[vertex] (5') at (5.5, -0.5) {};
   \node[vertex] (6') at (5.5, -1.5) {};
   
   \foreach \i in {1, 2, 3} {
       \draw (1) -- (\i');
   }
   
    \foreach \i in {4,5,6} {
       \draw (2) -- (\i');
   }
   \node (A) at (0.5, 1) {$V_{e, x'}$};
   \node (A) at (0.5, -1.5) {$V_{e, x''}$};
   \node (A) at (6.5, 0) {$V_{e, x}$};
\end{tikzpicture}
\hfill $6 = 2 \times 3$
\end{center}

So $|V_{\bullet,x}| = |V_{\bullet,x'}| \times |V_{\bullet,x''}|$ which appropriately simulate the equation $x = x' \times x''$.
\end{itemize}

The reduction is given by:
$$tr(\setequations) := (7.1) \land (7.2) \land (7.3) \land \lbigand_{e \in \setequations} \phi_e.$$
It is left to prove formally that $\setequations$ has a solution in $\setN$ iff $tr(\setequations)$ is satisfiable. We left the end of the proof to the reader.
\end{proof}

\begin{corollary}
The satisfiability problem of a GNN with TruncReLU as an activation function and with global readout is undecidable.
\end{corollary}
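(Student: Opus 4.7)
The plan is to obtain the corollary as a direct consequence of the preceding undecidability theorem for $\Ksharpglobal$ on undirected graphs, by lifting the translation of \Cref{proposition:KsharptoGNN} to the setting with global readout. Concretely, I would define a reduction $tr$ from $\Ksharpglobal$-formulas to GNN-expressions with TruncReLU and a global aggregation operator $\agreggationfunctiong$ such that for every pointed graph $(G,u)$ we have $G,u \models \phi$ iff $\sem{tr(\phi)}{G,u} = 1$. The clauses for $\lnot$, $\land$, atomic propositions, and the local counting $\#$ are exactly as in the proof of \Cref{proposition:KsharptoGNN}; the only new clause is $\tau(\sharpU \phi) = \agreggationfunctiong(tr(\phi))$, whose semantics $\Sigma_{v \in V} \sem{tr(\phi)}{G,v}$ is precisely $|\sem{\phi}{G}|$ because $\sem{tr(\phi)}{G,v} \in \{0,1\}$ and equals $1$ iff $G,v \models \phi$.

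Next I would observe that the resulting GNN-expression $tr(\phi)$ has all the shape constraints required to be an actual GNN with TruncReLU and global readout: the aggregations $\agreggationfunction$ and $\agreggationfunctiong$ are always applied to subexpressions of the form $\truncReLU(\cdot)$ (since they arise from translating $\#$ and $\sharpU$), and the combination layers are affine, mirroring exactly the update equation with the additional $\weightcolor{C_t}$-term given at the start of the chapter. One small bookkeeping point is to check that the final inequality $\expression \geq 1$ can be turned into the standard GNN output $\weightcolor{w}^t \labelling_L(u) + \weightcolor{b} \geq 0$ by taking $w^t$ to read off $\sem{tr(\phi)}{G,u}$ and $b = -1$, so that $\semone{tr(\phi)\geq 1} = \semone{\aGNN}$ for an explicit GNN $\aGNN$ constructed in polynomial time.

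With this translation in hand, the corollary follows: if we had a decision procedure for the satisfiability problem of GNNs with TruncReLU and global readout, then given a $\Ksharpglobal$-formula $\phi$ we would compute $\aGNN := tr(\phi)$ and test whether $\semone{\aGNN} \neq \emptyset$, which by the equivalence is the same as testing whether $\phi$ has a model among (undirected) pointed graphs. This would contradict the undecidability of $\Ksharpglobal$-satisfiability over undirected graphs stated in the previous theorem.

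The main obstacle, and the only non-trivial step, is verifying that the extension of $tr$ to $\sharpU$ still preserves the key invariant $\sem{tr(\phi)}{G,u} \in \{0,1\}$ and that the resulting expression is syntactically a valid GNN with global readout (aggregations only over $\truncReLU(\cdot)$-subexpressions). Once this is checked by the obvious mutual induction, the reduction is polynomial and the undecidability transfers unchanged. Note also that we should ensure the reduction targets undirected graphs, which is automatic since the construction of $\aGNN$ does not distinguish edge orientation beyond what already appears in the $\Ksharpglobal$-semantics, so hardness is inherited from the undirected case of the theorem.
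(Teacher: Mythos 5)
Your proposal is correct and follows essentially the same route as the paper: the paper's proof is just the one-line remark that the corollary follows from the preceding undecidability theorem for $\Ksharpglobal$ on undirected graphs together with ``a similar translation'' into GNNs with TruncReLU and global readout, which is precisely the extension of the translation of \Cref{proposition:KsharptoGNN} by the clause $\tau(\sharpU\phi) = \agreggationfunctiong(tr(\phi))$ that you spell out. Your additional checks (the $\{0,1\}$-valued invariant, aggregation only over $\truncReLU(\cdot)$-subexpressions, and targeting undirected graphs) are exactly the details the paper leaves implicit.
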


\begin{proof}
By the previous and a similar translation from $\Ksharpglobal$ into GNNs with TruncReLU and global readout.
\end{proof}

\section{ReLU}
\index{ReLU}

\begin{theorem}
Verification of GNNs with global readout and ReLU as an activation function on undirected graphs is undecidable.
\end{theorem}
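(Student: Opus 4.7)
The plan is to reduce the satisfiability problem for GNNs with TruncReLU and global readout on undirected graphs --- shown undecidable in the preceding corollary --- to the analogous problem with ReLU, via the pointwise identity
$$\truncReLU(x) = ReLU(x) - ReLU(x-1).$$
This lets us simulate any TruncReLU activation by a fixed linear combination of two ReLU activations, so a ReLU-GNN of doubled width can carry along the two summands separately and let the next linear stage recombine them.

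First I would describe the translation of a TruncReLU-GNN $\aGNN$ of $\numberoflayers$ layers into a ReLU-GNN $\aGNN'$ of doubled width. In $\aGNN'$, layer $t$ outputs at every vertex a stacked vector $[y_1; y_2]$, where $y_1 = ReLU(z)$, $y_2 = ReLU(z-\mathbf{1})$, and $z$ is the preactivation of the corresponding layer of $\aGNN$ (so that $y_1 - y_2 = \truncReLU(z)$). This is achieved by stacking $A_t, B_t, C_t$ on top of themselves and using the bias $[b_t;\, b_t - \mathbf{1}]$. For $t=1$ the input is the raw $d$-dimensional feature vector of $\aGNN$; for $t\geq 2$ the input is the $2d$-dimensional output produced by the previous layer of $\aGNN'$.

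Second I would show that the recombination $\truncReLU(z) = y_1 - y_2$ can always be absorbed into the linear stage of the following layer without any extra activation. In layer $t+1$ of $\aGNN'$ we use block-row weights $[A_{t+1},\, -A_{t+1}]$, $[B_{t+1},\, -B_{t+1}]$, $[C_{t+1},\, -C_{t+1}]$ (themselves stacked twice as in the previous step), which is sound because local aggregation and global readout are linear and therefore distribute over the difference:
$$B_{t+1}\sum_{v \in E(u)} \truncReLU(z_v) = B_{t+1}\sum_{v \in E(u)} ReLU(z_v) - B_{t+1}\sum_{v \in E(u)} ReLU(z_v-1),$$
and analogously for the sum over $V$. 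The final classifier $w^t \labelling_{\numberoflayers}(u) + b \geq 0$ of $\aGNN$ is likewise replaced by $(w,\, -w)^t [y_1; y_2] + b \geq 0$ applied to the doubled last-layer output of $\aGNN'$. A straightforward induction on the layer index then yields $\aGNN(G,u) = \aGNN'(G,u)$ for every pointed undirected graph $(G,u)$, hence $\semone{\aGNN} = \semone{\aGNN'}$.

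Since this translation is polynomial-time and preserves the set of positively classified pointed undirected graphs, a decision procedure for satisfiability (or for any of the verification tasks 1--5) for ReLU-GNNs with global readout on undirected graphs would immediately yield one for TruncReLU-GNNs with global readout on undirected graphs, contradicting the previous corollary. The one delicate point to check is precisely the claim that the difference $ReLU(z)-ReLU(z-1)$ never has to be realized inside $\aGNN'$ as an explicit activation: this works because all three downstream operations acting on it --- multiplication by the next weight matrix, local aggregation over $E(u)$, and global readout over $V$ --- are linear, so the minus sign can always be pushed into the matrices $A_{t+1}, B_{t+1}, C_{t+1}$ (or into the final $w$), keeping $\aGNN'$ a genuine ReLU-GNN with global readout.
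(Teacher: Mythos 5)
Your proposal is correct and follows exactly the route the paper intends: reduce from the preceding corollary (TruncReLU with global readout on undirected graphs) by simulating TruncReLU with ReLU, which the paper states in one line and you carry out explicitly via $\mathsf{truncReLU}(x)=ReLU(x)-ReLU(x-1)$ with width doubling and sign absorption into the next linear stage. The details you supply (including the first-layer special case and pushing the minus sign into the final classifier) are sound, so this is the paper's argument fully worked out.
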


\begin{proof}
Because we can simulate truncReLU with ReLU. It is was already undecidable for truncReLU.
\end{proof}

\begin{theorem}
Verification of GNNs with global readout and ReLU as an activation function on directed graphs is undecidable.
\end{theorem}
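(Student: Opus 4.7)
The plan is to adapt the Hilbert's tenth problem reduction used for the undirected case (previous theorem, via $\Ksharpglobal$) to the directed setting, exploiting the unbounded numerical computations allowed by ReLU together with global readout. Given an instance $\setequations$ of Hilbert's tenth problem, I would construct in polynomial time a GNN $N_\setequations$ with ReLU and global readout such that $\semone{N_\setequations}\neq\emptyset$ (on directed graphs) if and only if $\setequations$ has a solution in $\setN$. The encoding of an assignment follows the same skeleton as in the undirected proof: vertices carry propositions $p_e$ (one per equation) and $p_x$ (one per variable), inducing a partition into sets $V_{e,x}$, and the constraint $|V_{e,x}|=|V_{e',x}|$ (for all $e,e'$) is enforced by an equality of $\sharpU$-counts so that each variable $x$ receives a well-defined value $|V_{\bullet,x}|$.

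For linear equations of the form $x=1$ and $x=x'+x''$, the corresponding arithmetic constraint is linear in the global counts and is directly expressible by a ReLU-GNN with global readout as $\sharpU(p_e\land p_x)=1$, respectively $\sharpU(p_e\land p_x)=\sharpU(p_e\land p_{x'})+\sharpU(p_e\land p_{x''})$. The delicate case is the multiplication $x=x'\times x''$. In the undirected proof this was handled by a symmetric bijection-style encoding, which is unavailable in directed graphs since the GNN's aggregation only sees outgoing edges. Instead, I would enforce three constraints simultaneously: (i) every vertex $v$ with $p_e\land p_{x'}$ has exactly $|V_{\bullet,x''}|$ outgoing edges into $V_{e,x}$; (ii) every vertex without $p_{x'}$ has zero outgoing edges into $V_{e,x}$; (iii) the total number, across the entire graph, of outgoing edges into $V_{e,x}$ equals $|V_{e,x}|$. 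Constraints (i)–(ii) are of the $\boxU$ form; (iii) is an equation between two global quantities and is obtained by assigning to each vertex the feature $\#(p_e\land p_x)$ and then applying global readout. Conjoining (i), (ii), (iii) forces $|V_{\bullet,x}|=|V_{\bullet,x'}|\cdot|V_{\bullet,x''}|$, which is exactly the multiplicative constraint.

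The realization of the $\boxU$-style checks inside a ReLU-GNN uses a standard violation-indicator-plus-global-readout-to-zero gadget. The key lemma is that for non-negative integer quantities $a,b$ computable by the GNN, the indicators $\mathbf{1}[a=b]=\max(0,1-\max(0,a-b)-\max(0,b-a))$, $\mathbf{1}[a\geq 1]=\min(1,a)$, and the Boolean AND $\max(0,p+q-1)$ of two $\{0,1\}$-valued quantities $p,q$ are all realizable by constant-depth ReLU gadgets (no a priori bound on $a$ or $b$ is required because the identities hold for arbitrary non-negative integers). Chaining these gadgets through a few layers, then summing the resulting per-vertex 0/1 violation via global readout and enforcing the sum to be zero, gives the GNN enough expressive power to verify arbitrary conjunctions of integer-arithmetic constraints on global counts and local outgoing counts.

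The main obstacle is precisely this asymmetry of directed aggregation: one cannot directly count, at a vertex $c\in V_{e,x}$, how many predecessors $c$ has in $V_{e,x'}$. The plan circumvents this by moving the counting to a single global summation of outgoing counts, which is accessible via global readout; this is why truncReLU alone is not known to suffice on directed graphs, while ReLU (through unbounded linear combinations feeding the global readout) does. Once the multiplicative constraint is encoded this way, the reduction is complete: $N_\setequations$ accepts a directed graph iff that graph encodes a solution of $\setequations$, so $\semone{N_\setequations}\neq\emptyset$ is equivalent to solvability of $\setequations$, and undecidability transfers from Hilbert's tenth problem.
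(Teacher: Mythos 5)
Your reduction from Hilbert's tenth problem is sound in outline, but the key gadget --- the encoding of a multiplicative constraint $x = x' \times x''$ --- is genuinely different from the paper's. The paper does not use the edge relation at all for multiplication: it introduces a per-vertex feature $y_e$ constrained to equal either $0$ or the (unbounded) global count of $x'$-vertices, together with a $\{0,1\}$-valued feature $z_e$ marking exactly as many vertices as there are $x''$-vertices, and reads off the product as the global sum of $y_e$; ReLU is needed only so that a feature can store an arbitrarily large integer. You instead keep the graph-structural encoding of the undirected proof and repair its directedness problem by counting edges two ways: each vertex of $V_{e,x'}$ has exactly $|V_{\bullet,x''}|$ successors in $V_{e,x}$, all other vertices have none, and the global sum of the per-vertex out-degrees into $V_{e,x}$ is forced to equal $|V_{e,x}|$, which yields $|V_{e,x}| = |V_{e,x'}|\cdot|V_{e,x''}|$ without ever counting predecessors. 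Both arguments work, and both use ReLU in an essential way to manipulate unbounded per-vertex integers (your equality indicator $\max(0,1-|a-b|)$ applied to an unbounded out-degree feature; the paper's feature $y_e$); indeed truncated ReLU with global readout on directed graphs is decidable (NEXPTIME-complete), so neither gadget could survive truncation. Your route is a closer adaptation of the undirected construction, at the cost of the per-equation partition and the edge-counting bookkeeping; the paper's is shorter because multiplication becomes a purely global-readout affair. One small repair: your constraint (ii) must forbid outgoing edges into $V_{e,x}$ from every vertex \emph{outside} $V_{e,x'}$, not merely from vertices lacking $p_{x'}$, since a vertex tagged $p_{e'} \land p_{x'}$ for $e' \neq e$ would otherwise contribute stray edges and corrupt the global edge count.
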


\begin{proof}
\newcommand{\countforvar}[1]{count_{#1}}
Reduction from \indexemph{Hilbert's tenth problem}. We still only the fact truncReLU can be similated by ReLU: this enables to express logical properties. But we also need ReLU for storing arbitrary long integers throw the computation.

The idea is to tag the vertices with a feature $\countforvar x$. When $\countforvar x = 0$ the vertex does not count and $\countforvar x = 1$ it means that the vertex counts for 1 in the value of variable $x$. We interpret the value of $x$ by $\sharpU \countforvar x$. More precisely:

$$\sharpU \countforvar x := \agreggationfunctiong (ReLU (\countforvar x)).$$

\begin{itemize}
\item \fbox{$e = \equationbox{x = 1}$} $\phi_e := \sharpU \countforvar x = 1$. We can write equalities by inequalities and write inequalities as for the translation from $\Ksharp$ into GNN with truncReLU (and we simulate truncReLU by ReLU).
\item \fbox{$e = \equationbox{x = x' + x''}$}  $\phi_e := \sharpU \countforvar x = \sharpU \countforvar {x'} + \sharpU \countforvar {x''}$.
\item \fbox{$e = \equationbox{x = x' \times x''}$} We cannot write multiplication constraint directly. So introduce extra features $y_e$ and $y''_e$. 
\begin{align}
\phi_e := & ((y_e = 0) \land (z_e = 0)) \lor ((y_e = \sharpU\countforvar {x'}) \land (z_e = 1))  \\
 & \land \agreggationfunctiong (y_e) = \sharpU\countforvar x \\
  & \land \agreggationfunctiong (z_e) = \sharpU\countforvar{x''}
\end{align}

The objective of (7.4) is to replace each 1-value of $z_e$ by $\sharpU\countforvar {x'}$ so that $y_e = \sharpU\countforvar {x'} \times z_e$ at each vertex. By taking the global aggregation over all vertices, we get:

$$\sharpU \countforvar x = \sharpU\countforvar{x'} \times \sharpU\countforvar{x''}.$$

\end{itemize}

Finally the GNN is: $\lbigand_{e \in \setequations} \phi_e$. We left the equivalence to the reader.
\end{proof}

\begin{theorem}
Verification of GNNs with ReLU as an activation function on undirected graphs is undecidable.
\end{theorem}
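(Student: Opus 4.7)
The plan is to reduce from the satisfiability problem for GNNs with global readout and ReLU on undirected graphs, whose undecidability was established in the previous theorem. Given such a GNN $N^g$, I will construct a readout-free GNN $N$ with ReLU, still on undirected graphs, such that $N^g$ has an accepting input iff $N$ does. The core idea is to simulate global readout via a universal \emph{hub} vertex: intended inputs are expanded with an extra vertex $h$ marked by a fresh Boolean feature $\textit{hub}$ and made adjacent to every other vertex, while non-hub vertices preserve their original edges among themselves.

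The first layers of $N$ perform a structural sanity check. Using the ReLU indicator gadgets $\mathbf{1}_{x \geq 1}$ already exploited in Chapter~\ref{chapter:verificationGNNs}, I can express, at each vertex, the conditions: (i) there is exactly one hub-neighbour, (ii) the current vertex is either itself the hub or has the hub as a neighbour, and (iii) the hub's degree equals the global count of non-hub vertices. If the check fails, a dedicated \emph{reject} feature is pinned to $1$ and propagated to the output layer, forcing $N$ to classify negatively. From that point on, only inputs of the form ``hub $h$ + underlying undirected graph $G$'' can be accepted.

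Each global readout subterm $\#^g \phi$ of $N^g$ is then simulated by a two-layer round-trip through $h$. At layer $t$, I compute the local feature $f_\phi$ encoding $\mathbf{1}_\phi$ at every vertex and force $f_\phi = 0$ at $h$ via the hub marker. At layer $t+1$, the hub's local aggregation $\sum_{v : v \sim h} f_\phi(v)$ equals $\sum_{v \in V \setminus \{h\}} f_\phi(v)$, which is exactly $\#^g \phi$ evaluated on $G$; this value is stored in a fresh feature $g_\phi$ of $h$ and kept at $0$ elsewhere, again using the hub marker. At layer $t+2$, every non-hub vertex recovers $g_\phi$ by aggregating over its neighbours, since $h$ is the unique neighbour carrying a non-zero value in that feature. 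The local aggregations inherited from $N^g$ are simulated in parallel, using the hub marker to subtract, via a ReLU-expressible linear correction, the spurious contribution of $h$ to each non-hub neighbourhood.

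The main obstacle will be the bookkeeping: at every layer and for every feature, the universal adjacency of $h$ injects a contaminant into each non-hub local aggregation that must be cancelled by a hub-driven correction, and the hub's own state must be insulated from the layer-wise updates that mimic $N^g$'s local computation on $G$. Once these cancellations are built in uniformly across features and layers, a straightforward induction on the layers of $N^g$ shows that, on inputs passing the sanity check, the non-hub output of $N$ coincides with $N^g$ applied to $G$, while all other inputs are rejected. Hence $N$ is satisfiable iff $N^g$ is, and undecidability transfers to the readout-free undirected setting.
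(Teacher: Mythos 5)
Your proposal takes exactly the paper's route: the paper's entire proof of this theorem is the single sentence ``we simulate the global modality by a local one around the root,'' i.e., reduce from the (already established) undecidability of global-readout ReLU-GNNs on undirected graphs by adjoining a universal hub vertex and replacing each global aggregation $\#^g\phi$ by a local aggregation at that hub. Your sketch is sound and in fact supplies considerably more of the required bookkeeping than the paper does (the well-formedness checks, the two-layer round trip through the hub, and the cancellation of the hub's spurious contribution to every local aggregation), so there is nothing to correct relative to the paper's own argument.
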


\begin{proof}
We simulate the global modality by a local one around the root.
\end{proof}

%
%
%


\bibliographystyle{apalike}
\bibliography{biblio}

\begin{thebibliography}{}

\bibitem[Arvind et~al., 2015]{DBLP:conf/fct/ArvindKRV15}
Arvind, V., K{\"{o}}bler, J., Rattan, G., and Verbitsky, O. (2015).
\newblock On the power of color refinement.
\newblock In Kosowski, A. and Walukiewicz, I., editors, {\em Fundamentals of
  Computation Theory - 20th International Symposium, {FCT} 2015, Gda{\'{n}}sk,
  Poland, August 17-19, 2015, Proceedings}, volume 9210 of {\em Lecture Notes
  in Computer Science}, pages 339--350. Springer.

\bibitem[Baader, 2017]{DBLP:conf/frocos/Baader17}
Baader, F. (2017).
\newblock A new description logic with set constraints and cardinality
  constraints on role successors.
\newblock In Dixon, C. and Finger, M., editors, {\em Frontiers of Combining
  Systems - 11th International Symposium, FroCoS 2017, Bras{\'{\i}}lia, Brazil,
  September 27-29, 2017, Proceedings}, volume 10483 of {\em Lecture Notes in
  Computer Science}, pages 43--59. Springer.

\bibitem[Babai, 2016]{DBLP:conf/stoc/Babai16}
Babai, L. (2016).
\newblock Graph isomorphism in quasipolynomial time [extended abstract].
\newblock In Wichs, D. and Mansour, Y., editors, {\em Proceedings of the 48th
  Annual {ACM} {SIGACT} Symposium on Theory of Computing, {STOC} 2016,
  Cambridge, MA, USA, June 18-21, 2016}, pages 684--697. {ACM}.

\bibitem[Babai et~al., 1980]{DBLP:journals/siamcomp/BabaiES80}
Babai, L., Erd{\"{o}}s, P., and Selkow, S.~M. (1980).
\newblock Random graph isomorphism.
\newblock {\em {SIAM} J. Comput.}, 9(3):628--635.

\bibitem[Barcel{\'{o}} et~al., 2020]{DBLP:conf/iclr/BarceloKM0RS20}
Barcel{\'{o}}, P., Kostylev, E.~V., Monet, M., P{\'{e}}rez, J., Reutter, J.~L.,
  and Silva, J.~P. (2020).
\newblock The logical expressiveness of graph neural networks.
\newblock In {\em 8th International Conference on Learning Representations,
  {ICLR} 2020, Addis Ababa, Ethiopia, April 26-30, 2020}. OpenReview.net.

\bibitem[Benedikt et~al., 2024]{DBLP:conf/icalp/BenediktLMT24}
Benedikt, M., Lu, C., Motik, B., and Tan, T. (2024).
\newblock Decidability of graph neural networks via logical characterizations.
\newblock In Bringmann, K., Grohe, M., Puppis, G., and Svensson, O., editors,
  {\em 51st International Colloquium on Automata, Languages, and Programming,
  {ICALP} 2024, July 8-12, 2024, Tallinn, Estonia}, volume 297 of {\em LIPIcs},
  pages 127:1--127:20. Schloss Dagstuhl - Leibniz-Zentrum f{\"{u}}r Informatik.

\bibitem[Berkholz et~al., 2017]{DBLP:journals/mst/BerkholzBG17}
Berkholz, C., Bonsma, P.~S., and Grohe, M. (2017).
\newblock Tight lower and upper bounds for the complexity of canonical colour
  refinement.
\newblock {\em Theory Comput. Syst.}, 60(4):581--614.

\bibitem[Blackburn et~al., 2001]{DBLP:books/cu/BlackburnRV01}
Blackburn, P., de~Rijke, M., and Venema, Y. (2001).
\newblock {\em Modal Logic}, volume~53 of {\em Cambridge Tracts in Theoretical
  Computer Science}.
\newblock Cambridge University Press.

\bibitem[Cai et~al., 1992]{DBLP:journals/combinatorica/CaiFI92}
Cai, J., F{\"{u}}rer, M., and Immerman, N. (1992).
\newblock An optimal lower bound on the number of variables for graph
  identification.
\newblock {\em Comb.}, 12(4):389--410.

\bibitem[Cardon and Crochemore, 1982]{DBLP:journals/tcs/CardonC82}
Cardon, A. and Crochemore, M. (1982).
\newblock Partitioning a graph in o({\(\vert\)}a{\(\vert\)} log2
  {\(\vert\)}v{\(\vert\)}).
\newblock {\em Theor. Comput. Sci.}, 19:85--98.

\bibitem[Chagrov and Rybakov, 2002]{DBLP:conf/aiml/ChagrovR02}
Chagrov, A.~V. and Rybakov, M.~N. (2002).
\newblock How many variables does one need to prove pspace-hardness of modal
  logics.
\newblock In Balbiani, P., Suzuki, N., Wolter, F., and Zakharyaschev, M.,
  editors, {\em Advances in Modal Logic 4, papers from the fourth conference on
  "Advances in Modal logic," held in Toulouse, France, 30 September - 2 October
  2002}, pages 71--82. King's College Publications.

\bibitem[Chernobrovkin et~al., 2025]{2510.08045}
Chernobrovkin, A., Sälzer, M., Schwarzentruber, F., and Troquard, N. (2025).
\newblock Verifying graph neural networks with readout is intractable.

\bibitem[Eisenbrand and Shmonin, 2006]{DBLP:journals/orl/EisenbrandS06}
Eisenbrand, F. and Shmonin, G. (2006).
\newblock Carath{\'{e}}odory bounds for integer cones.
\newblock {\em Oper. Res. Lett.}, 34(5):564--568.

\bibitem[Etessami et~al., 2002]{DBLP:journals/iandc/EtessamiVW02}
Etessami, K., Vardi, M.~Y., and Wilke, T. (2002).
\newblock First-order logic with two variables and unary temporal logic.
\newblock {\em Inf. Comput.}, 179(2):279--295.

\bibitem[F{\"{u}}rer, 1983]{DBLP:conf/lam/Furer83}
F{\"{u}}rer, M. (1983).
\newblock The computational complexity of the unconstrained limited domino
  problem (with implications for logical decision problems).
\newblock In B{\"{o}}rger, E., Hasenjaeger, G., and R{\"{o}}dding, D., editors,
  {\em Logic and Machines: Decision Problems and Complexity, Proceedings of the
  Symposium "Rekursive Kombinatorik" held from May 23-28, 1983 at the Institut
  f{\"{u}}r Mathematische Logik und Grundlagenforschung der Universit{\"{a}}t
  M{\"{u}}nster/Westfalen}, volume 171 of {\em Lecture Notes in Computer
  Science}, pages 312--319. Springer.

\bibitem[Gr{\"{a}}del et~al., 1997]{DBLP:journals/bsl/GradelKV97}
Gr{\"{a}}del, E., Kolaitis, P.~G., and Vardi, M.~Y. (1997).
\newblock On the decision problem for two-variable first-order logic.
\newblock {\em Bull. Symb. Log.}, 3(1):53--69.

\bibitem[Grohe, 2021]{DBLP:conf/lics/Grohe21}
Grohe, M. (2021).
\newblock The logic of graph neural networks.
\newblock In {\em 36th Annual {ACM/IEEE} Symposium on Logic in Computer
  Science, {LICS} 2021, Rome, Italy, June 29 - July 2, 2021}, pages 1--17.
  {IEEE}.

\bibitem[Haase and Zetzsche, 2019]{DBLP:conf/lics/HaaseZ19}
Haase, C. and Zetzsche, G. (2019).
\newblock Presburger arithmetic with stars, rational subsets of graph groups,
  and nested zero tests.
\newblock In {\em 34th Annual {ACM/IEEE} Symposium on Logic in Computer
  Science, {LICS} 2019, Vancouver, BC, Canada, June 24-27, 2019}, pages 1--14.
  {IEEE}.

\bibitem[Horrocks et~al., 2007]{DBLP:books/el/07/HorrocksHSS07}
Horrocks, I., Hustadt, U., Sattler, U., and Schmidt, R.~A. (2007).
\newblock Computational modal logic.
\newblock In Blackburn, P., van Benthem, J. F. A.~K., and Wolter, F., editors,
  {\em Handbook of Modal Logic}, volume~3 of {\em Studies in logic and
  practical reasoning}, pages 181--245. North-Holland.

\bibitem[Huang and Villar, 2021]{DBLP:conf/icassp/HuangV21}
Huang, N.~T. and Villar, S. (2021).
\newblock A short tutorial on the weisfeiler-lehman test and its variants.
\newblock In {\em {IEEE} International Conference on Acoustics, Speech and
  Signal Processing, {ICASSP} 2021, Toronto, ON, Canada, June 6-11, 2021},
  pages 8533--8537. {IEEE}.

\bibitem[Immerman and Lander, 1990]{immerman1990describing}
Immerman, N. and Lander, E. (1990).
\newblock Describing graphs: A first-order approach to graph canonization.
\newblock In {\em Complexity Theory Retrospective: In Honor of Juris Hartmanis
  on the Occasion of His Sixtieth Birthday, July 5, 1988}, pages 59--81.
  Springer.

\bibitem[Karystinaios et~al., 2023]{DBLP:conf/ijcai/KarystinaiosFW23}
Karystinaios, E., Foscarin, F., and Widmer, G. (2023).
\newblock Musical voice separation as link prediction: Modeling a musical
  perception task as a multi-trajectory tracking problem.
\newblock In {\em Proceedings of the Thirty-Second International Joint
  Conference on Artificial Intelligence, {IJCAI} 2023, 19th-25th August 2023,
  Macao, SAR, China}, pages 3866--3874. ijcai.org.

\bibitem[Kuncak and Rinard, 2007]{kuncak-rinard-QFBAPA}
Kuncak, V. and Rinard, M. (2007).
\newblock Towards efficient satisfiability checking for boolean algebra with
  presburger arithmetic.
\newblock In Pfenning, F., editor, {\em Automated Deduction -- CADE-21}, pages
  215--230, Berlin, Heidelberg. Springer Berlin Heidelberg.

\bibitem[Matiyasevich, 2005]{DBLP:conf/cie/Matiyasevich05}
Matiyasevich, Y.~V. (2005).
\newblock Hilbert's tenth problem and paradigms of computation.
\newblock In Cooper, S.~B., L{\"{o}}we, B., and Torenvliet, L., editors, {\em
  New Computational Paradigms, First Conference on Computability in Europe, CiE
  2005, Amsterdam, The Netherlands, June 8-12, 2005, Proceedings}, volume 3526
  of {\em Lecture Notes in Computer Science}, pages 310--321. Springer.

\bibitem[Morgan, 1965]{morgan1965generation}
Morgan, H.~L. (1965).
\newblock The generation of a unique machine description for chemical
  structures-a technique developed at chemical abstracts service.
\newblock {\em Journal of chemical documentation}, 5(2):107--113.

\bibitem[Morris et~al., 2023]{DBLP:journals/jmlr/0001LMRKGFB23}
Morris, C., Lipman, Y., Maron, H., Rieck, B., Kriege, N.~M., Grohe, M., Fey,
  M., and Borgwardt, K.~M. (2023).
\newblock Weisfeiler and leman go machine learning: The story so far.
\newblock {\em J. Mach. Learn. Res.}, 24:333:1--333:59.

\bibitem[Morris et~al., 2019]{DBLP:conf/aaai/0001RFHLRG19}
Morris, C., Ritzert, M., Fey, M., Hamilton, W.~L., Lenssen, J.~E., Rattan, G.,
  and Grohe, M. (2019).
\newblock Weisfeiler and leman go neural: Higher-order graph neural networks.
\newblock In {\em The Thirty-Third {AAAI} Conference on Artificial
  Intelligence, {AAAI} 2019, The Thirty-First Innovative Applications of
  Artificial Intelligence Conference, {IAAI} 2019, The Ninth {AAAI} Symposium
  on Educational Advances in Artificial Intelligence, {EAAI} 2019, Honolulu,
  Hawaii, USA, January 27 - February 1, 2019}, pages 4602--4609. {AAAI} Press.

\bibitem[Nunn et~al., 2024]{DBLP:conf/ijcai/NunnSST24}
Nunn, P., S{\"{a}}lzer, M., Schwarzentruber, F., and Troquard, N. (2024).
\newblock A logic for reasoning about aggregate-combine graph neural networks.
\newblock In {\em Proceedings of the Thirty-Third International Joint
  Conference on Artificial Intelligence, {IJCAI} 2024, Jeju, South Korea,
  August 3-9, 2024}, pages 3532--3540. ijcai.org.

\bibitem[Nunn and Schwarzentruber, 2023]{DBLP:journals/corr/abs-2307-05150}
Nunn, P. and Schwarzentruber, F. (2023).
\newblock A modal logic for explaining some graph neural networks.
\newblock {\em CoRR}, abs/2307.05150.

\bibitem[Paige and Tarjan, 1987]{DBLP:journals/siamcomp/PaigeT87}
Paige, R. and Tarjan, R.~E. (1987).
\newblock Three partition refinement algorithms.
\newblock {\em {SIAM} J. Comput.}, 16(6):973--989.

\bibitem[Papadimitriou, 1981]{DBLP:journals/jacm/Papadimitriou81}
Papadimitriou, C.~H. (1981).
\newblock On the complexity of integer programming.
\newblock {\em J. {ACM}}, 28(4):765--768.

\bibitem[Pratt{-}Hartmann, 2014]{DBLP:conf/csl/Pratt-Hartmann14}
Pratt{-}Hartmann, I. (2014).
\newblock Logics with counting and equivalence.
\newblock In Henzinger, T.~A. and Miller, D., editors, {\em Joint Meeting of
  the Twenty-Third {EACSL} Annual Conference on Computer Science Logic {(CSL)}
  and the Twenty-Ninth Annual {ACM/IEEE} Symposium on Logic in Computer Science
  (LICS), {CSL-LICS} '14, Vienna, Austria, July 14 - 18, 2014}, pages
  76:1--76:10. {ACM}.

\bibitem[Reiser et~al., 2022]{Reiser22GGNmaterialchemistry}
Reiser, P., Neubert, M., Eberhard, A., Torresi, L., Zhou, C., Shao, C., Metni,
  H., van Hoesel, C., Schopmans, H., Sommer, T., and Friederich, P. (2022).
\newblock Graph neural networks for materials science and chemistry.
\newblock {\em Communications Materials}, 3(93).

\bibitem[Salamat et~al., 2021]{DBLP:journals/kbs/SalamatLJ21}
Salamat, A., Luo, X., and Jafari, A. (2021).
\newblock Heterographrec: {A} heterogeneous graph-based neural networks for
  social recommendations.
\newblock {\em Knowl. Based Syst.}, 217:106817.

\bibitem[S{\"{a}}lzer et~al., 2025]{DBLP:journals/corr/abs-2502-16244}
S{\"{a}}lzer, M., Schwarzentruber, F., and Troquard, N. (2025).
\newblock Verifying quantized graph neural networks is pspace-complete.
\newblock {\em CoRR}, abs/2502.16244.

\bibitem[Sato, 2020]{DBLP:journals/corr/abs-2003-04078}
Sato, R. (2020).
\newblock A survey on the expressive power of graph neural networks.
\newblock {\em CoRR}, abs/2003.04078.

\bibitem[Sch{\"{o}}ning, 1988]{DBLP:journals/jcss/Schoning88}
Sch{\"{o}}ning, U. (1988).
\newblock Graph isomorphism is in the low hierarchy.
\newblock {\em J. Comput. Syst. Sci.}, 37(3):312--323.

\bibitem[Tobies, 1999]{DBLP:conf/cade/Tobies99}
Tobies, S. (1999).
\newblock A pspace algorithm for graded modal logic.
\newblock In Ganzinger, H., editor, {\em Automated Deduction - CADE-16, 16th
  International Conference on Automated Deduction, Trento, Italy, July 7-10,
  1999, Proceedings}, volume 1632 of {\em Lecture Notes in Computer Science},
  pages 52--66. Springer.

\bibitem[Turing, 1937]{DBLP:journals/x/Turing37}
Turing, A.~M. (1937).
\newblock On computable numbers, with an application to the
  entscheidungsproblem.
\newblock {\em Proc. London Math. Soc.}, s2-42(1):230--265.

\bibitem[Xiong et~al., 2021]{XIONG20211382}
Xiong, J., Xiong, Z., Chen, K., Jiang, H., and Zheng, M. (2021).
\newblock Graph neural networks for automated de novo drug design.
\newblock {\em Drug Discovery Today}, 26(6):1382--1393.

\bibitem[Xu et~al., 2019]{DBLP:conf/iclr/XuHLJ19}
Xu, K., Hu, W., Leskovec, J., and Jegelka, S. (2019).
\newblock How powerful are graph neural networks?
\newblock In {\em 7th International Conference on Learning Representations,
  {ICLR} 2019, New Orleans, LA, USA, May 6-9, 2019}. OpenReview.net.

\bibitem[Ye et~al., 2022]{Zi22KG}
Ye, Z., Kumar, Y.~J., Sing, G.~O., Song, F., and Wang, J. (2022).
\newblock A comprehensive survey of graph neural networks for knowledge graphs.
\newblock {\em IEEE Access}, 10:75729--75741.

\end{thebibliography}

\printindex

\end{document}